\newcommand\nxt{\textsf{Next}}
\title{Optimal Locality and Parameter Tradeoffs for Subsystem Codes}
\author{
Samuel Dai\thanks{Department of Physics, Northeastern University. \url{dai.sa@northeastern.edu}.} 
\and 
Ray Li\thanks{Math \& CS Department, Santa Clara University. \url{rli6@scu.edu}.}
\and 
Eugene Tang \thanks{Departments of Math and Physics, Northeastern University.  \url{e.tang@northeastern.edu.}}
}
\date{\today}
\begin{document}

\maketitle
\begin{abstract}
      We study the tradeoffs between the locality and parameters of subsystem codes. We prove lower bounds on both the number and lengths of interactions in any $D$-dimensional embedding of a subsystem code. Specifically, we show that any embedding of a subsystem code with parameters $[[n,k,d]]$ into $\bbR^D$ must have at least $M^*$ interactions of length at least $\ell^*$, where
      \[
        M^* = \Omega(\max(k,d)), \quad\text{and}\quad \ell^* = \Omega\bigg(\max\bigg(\frac{d}{n^\frac{D-1}{D}}, \bigg(\frac{kd^\frac{1}{D-1}}{n}\bigg)^\frac{D-1}{D}\bigg)\bigg).
      \]
      We also give tradeoffs between the locality and parameters of commuting projector codes in $D$-dimensions, generalizing a result of Dai and Li~\cite{dai2024locality}. 
      We provide explicit constructions of embedded codes that show our bounds are optimal in both the interaction count and interaction length.
\end{abstract}

\newpage

\section{Introduction}
\label{sec:intro}
Quantum computing necessitates the manipulation of fragile states of information. The most promising way towards large-scale fault-tolerant quantum computing involve the extensive use of quantum error-correcting codes (QECCs). Physical implementations of quantum computing hardware naturally favor architectures which are local in $2$ or $3$ spatial dimensions --- architectures where the qubits are embedded in 2 or 3 dimensions, and interactions occur only between qubits that are spatially nearby. On the other hand, it has long been known that the constraint of spatial locality places severe limitations on the parameters of QECCs. For example, the Bravyi-Terhal \cite{bravyi2009no} and Bravyi-Pouli-Terhal (BPT) \cite{bravyi2010tradeoffs} bounds state that a commuting projector code whose constraints are local in $D$-dimensions necessarily have code parameters satisfying, respectively,
\begin{align}
d= O(n^{\frac{D-1}{D}}),\qquad\text{and}\qquad kd^{\frac{2}{D-1}} = O(n).\label{eq:bpt_bound}
\end{align}
These bounds suggest that there are tradeoffs between better code performance and the cost of non-local implementation. Consequently, the \emph{locality} of a QECC becomes another key factor to consider when choosing a code for applications.

What is the quantitative tradeoff between locality and code quality? This problem was initially investigated by Baspin and Krishna \cite{baspin2022quantifying}, who asked, for a quantum low-density parity-check (qLDPC) code in $D$-dimensions, how many ``long-range'' interactions must there be, and how long must those interactions be? Baspin and Krishna gave bounds for $D$-dimensional codes which are nearly optimal in certain parameter settings. For $2$-dimensional codes, Dai and Li \cite{dai2024locality} improved the bounds to be tight across all parameter regimes and also gave matching constructions that saturate the upper bounds (see also Hong, et al.~\cite{hong2023long}, who considered the special case $k=1,d=\sqrt{n}$ for 2-dimensional codes). Dai and Li showed that an $[[n,k,d]]$ quantum code embedded in $2$ spatial dimensions must have $\Omega(M^*)$ interactions of length $\Omega(\ell^*)$, where
\begin{align}
M^* = \max(k,d),\qquad\text{and}\qquad \ell^* = \max(\frac{d}{\sqrt{n}},\sqrt[4]{\frac{kd^2}{n}}).
\end{align}
Both the interaction count $M^*$ and interaction length $\ell^*$ are tight in strong ways.

In this paper, we study the locality versus parameter tradeoffs for quantum \emph{subsystem} codes. 
Bravyi~\cite{bravyi2011subsystem} showed that the BPT bound could be violated by the use of local subsystem codes, providing $2$D-local subsystem codes with parameters $k,d=\Theta(\sqrt{n})$. Subsystem codes are nevertheless constrained by locality. Bravyi~\cite{bravyi2011subsystem} showed that a $[[n,k,d]]$ subsystem code whose gauge generators are local in a $D$-dimensional lattice embedding satisfies
\begin{equation}
    \label{eq:bravyi_bound}
    d= O(n^{\frac{D-1}{D}}),\qquad\text{and}\qquad 
    kd^{\frac{1}{D-1}} = O(n).
\end{equation}
While previous work has made it clear that outperforming local quantum codes requires copious amounts of long-ranged interactions, it is not a priori clear whether the same requirements hold for subsystem codes. Is it possible that small violations of locality in the gauge generators suffice to define subsystem codes parametrically better than those allowed by Bravyi's bound? More concretely:

\begin{question}\label{question:main}
    How much non-locality is required for a subsystem code to exceed Bravyi's bound?
\end{question}

We address Question~\ref{question:main} by demonstrating that subsystem codes, like their commuting projector counterparts, require an extensive number of long-ranged interactions to surpass Bravyi's bound. We also provide constructions of subsystem codes that show our lower bounds are tight in strong ways. Additionally, we also generalize the results of Dai and Li~\cite{dai2024locality} from $2$-dimensions to $D$-dimensions. Our work establishes optimal bounds on interaction lengths and counts for embeddings of both commuting projector codes and subsystem codes in any number of dimensions.

\subsection{Main Result} 

We study subsystem codes whose gauge generators are not necessarily local. Our main result is a lower bound on the number and length of interactions in any $D$-dimensional embedding of a $[[n,k,d]]$ subsystem code. Formally, a $D$-dimensional embedding is a mapping of the code's $n$ physical qubits into $\mathbb{R}^D$, such that any two qubits are at distance at least 1.

\begin{theorem}[Main Result for Subsystem Codes]
\label{thm:main}
For any $D\ge 2$, there exist constants $c_0=c_0(D)>0$ and $c_1=c_1(D) > 0$ such that the following is true: Any $D$-dimensional embedding of a nontrivial\footnote{Nontrivial here simply means that $k > 0$.} $[[n,k,d]]$ subsystem code with $kd^\frac{1}{D-1} \geq c_1  n$ or $d\ge c_1 n^{\frac{D-1}{D}}$ must have at least $M^*$ interactions of length $\ell^*$, where 
\begin{equation}
    M^* = c_0 \cdot \max(k,d),\qquad\text{and} \qquad \ell^* = c_0 \cdot \max\bigg(\frac{d}{n^\frac{D-1}{D}}, \bigg(\frac{kd^\frac{1}{D-1}}{n}\bigg)^\frac{D-1}{D}\bigg).
\end{equation}
\end{theorem}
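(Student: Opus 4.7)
Following the 2D subsystem approach of Dai and Li~\cite{dai2024locality}, I decompose Theorem~\ref{thm:main} into four subclaims---$M^*=\Omega(d)$, $M^*=\Omega(k)$, $\ell^*=\Omega(d/n^{(D-1)/D})$, and $\ell^*=\Omega((kd^{1/(D-1)}/n)^{(D-1)/D})$---each to be proven by combining Bravyi's cleaning lemma for subsystem codes~\cite{bravyi2011subsystem} with geometric partitioning (hyperplanes, slabs, or boxes) of the embedding.

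For the interaction-count bounds, I use hyperplane cuts. For $M^*=\Omega(d)$, given any hyperplane $H$ splitting the qubits into halves $A$ and $B$ each of size $\Theta(n)$, if fewer than $c\cdot d$ interactions cross $H$ for a small constant $c$, then Bravyi's cleaning lemma lets me clean any dressed logical operator off of one side of $H$, contradicting the distance being $d$. For $M^*=\Omega(k)$, I partition $\bbR^D$ into smaller boxes each respecting Bravyi's local bound~\eqref{eq:bravyi_bound}, so that each box hosts only a few logical qubits; the box boundaries must then carry $\Omega(k)$ long interactions to account for the $k$ total logical qubits.

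For the interaction-length bounds, I first observe that if all interactions have length less than $\ell$, coarse-graining qubits into cubes of side $\ell$ yields an effective local subsystem code on $n' = O(n/\ell^D)$ super-qudits satisfying Bravyi's bound~\eqref{eq:bravyi_bound}. In the super-Bravyi regime assumed by the theorem ($d\ge c_1 n^{(D-1)/D}$ or $kd^{1/(D-1)}\ge c_1 n$), however, direct contraposition of this coarse-graining alone gives only a trivial length bound, so a finer argument is needed: I partition $\bbR^D$ into $T$ parallel slabs of width $\sim n^{1/D}/T$ and use Bravyi's cleaning lemma to argue that each dressed logical operator must have support spanning at least $\Omega(dT/n^{(D-1)/D})$ consecutive slabs. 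Each transition between slabs then consumes at least one interaction of length $\Omega(n^{1/D}/T)$; optimizing over $T$ yields both terms of $\ell^*$.

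The main obstacle is the careful bookkeeping of Bravyi's cleaning lemma in the slab geometry. Unlike the stabilizer case, the subsystem cleaning lemma permits moves using both stabilizers and gauge operators, and these moves may enlarge the logical support in ways that must be controlled by the non-local interaction structure. Furthermore, generalizing Dai and Li's 2D expansion-type arguments to arbitrary $D$ requires replacing graph-theoretic estimates with volumetric analogues, with particular attention to the correct scaling in $D$ of the exponents appearing in both $M^*$ and $\ell^*$.
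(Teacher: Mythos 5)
Your proposal has a genuine gap that the paper explicitly identifies and takes great pains to work around: the cleaning/union lemma for subsystem codes is strictly weaker than for stabilizer codes. The union of decoupled correctable sets is only \emph{dressed-cleanable}, not correctable (Lemma~\ref{lem:correctable}, item 3). Your plan leans on ``Bravyi's cleaning lemma'' to sweep dressed logical operators off a half-space, off slabs, and between boxes, but cleaning by gauge operators enlarges the support in a way that you acknowledge ``must be controlled'' without saying how. This is precisely the obstruction the paper is built around. For the $d\ge k$ regime (Theorem~\ref{thm:main-d}) the paper bypasses the union lemma entirely and proves correctability of the whole qubit set by a careful sweep that applies \emph{only} the expansion lemma, changing expansion direction whenever a boundary slab would be too dense; for the $k\ge d$ regime (Theorem~\ref{thm:main-k}) the union lemma is applied once, but the conclusion is deliberately stopped at ``$A$ is dressed-cleanable,'' which suffices because the $AB$ Lemma (Lemma~\ref{lem:ab}) requires only dressed-cleanability. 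Your slab argument also has a concrete flaw independent of the cleaning issue: a logical operator's support may pass from one slab to an adjacent slab via an interaction of length $O(1)$ at the shared boundary, so ``each transition between slabs'' does \emph{not} consume an interaction of length $\Omega(n^{1/D}/T)$ unless you have already shown that the support must skip intermediate slabs --- which would require exactly the cleaning argument that fails.

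There is also a logical problem with the stated decomposition. The theorem is a joint statement --- at least $M^*$ interactions \emph{each of length at least} $\ell^*$ --- and cannot be split into independent claims ``$M^*\ge\dots$'' and ``$\ell^*\ge\dots$''. The paper decomposes by parameter regime instead: when $d\ge k$ it proves $\ge c_0 d$ interactions of length $\ge c_0 d/n^{(D-1)/D}$ in one shot (Theorem~\ref{thm:main-d}), and when $k\ge d$ it proves $\ge c_0 k$ interactions of length $\ge c_0(kd^{1/(D-1)}/n)^{(D-1)/D}$ in one shot (Theorem~\ref{thm:main-k}); the $\max$'s are recovered because the relevant term dominates in each regime. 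Finally, your coarse-graining step implicitly assumes qubits are packed in a box of side $O(n^{1/D})$ so that the number of nonempty $\ell$-cubes is $O(n/\ell^D)$; the paper explicitly handles arbitrary embeddings (not lattice-constrained), which is one of its stated generalizations over~\cite{bravyi2011subsystem}, and the geometric bookkeeping for that (Lemmas~\ref{lem:packing}, \ref{lem:tiling}, the $w_0$-cube tiling, and the $\nxt_i$ bookkeeping in the sweep) is a substantial portion of the work you would need to reproduce.
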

Prior to this work, no such bounds of this form were known for subsystem codes aside from Bravyi's original bound. While such bounds were known for commuting projector codes, our bound shows that a locality versus parameter trade-off also holds for subsystem codes. 
Our result also generalizes Bravyi's bound \cite{bravyi2011subsystem}, not only in that we (optimally) address the number and length of long-range interactions, but also in that we handle more general embeddings. Bravyi's bound~\cite{bravyi2011subsystem} considers only embeddings onto a $n^{1/D}\times\cdots\times n^{1/D}$ lattice, but our bound applies to arbitrary embeddings, even those not constrained to a $O(n^{1/D})\times\cdots\times O(n^{1/D})$ box (see Section~\ref{sec:main-d} for further discussion).

Like for stabilizer codes, subsystem codes beyond the ``local regime'' --- above the BPT bound for stabilizer codes, or above the Bravyi bound for subsystem codes ---  need copious amounts of non-locality. 
In particular, the number of required long-range interactions $\Omega(\max(k,d))$ is the same for both subsystem and stabilizer codes. Additionally, for codes with a large number $k$ of logical qubits, the required length of the long range interactions is similar. For example, a $2$-dimensionally embedded asymptotically good subsystem code (with $k,d=\Omega(n)$) needs $M^*=\Omega(n)$ interactions of length $\ell^*=\Omega(\sqrt{n})$ -- the worst possible case -- just as for stabilizer codes. Our results show that, compared to stabilizer codes, subsystem codes do not offer substantial improvements in locality outside of the ``local regime,'' though they can offer some quantitative improvements in the interaction length.

We also provide matching constructions that show $M^*$ and $\ell^*$ are optimal in strong ways (see Figure~\ref{fig:optimal}). An asymptotically good qLDPC code \cite{panteleev2021asymptotically,leverrier2022quantum} has $O(M^*)=O(\max(k,d))$ interactions of any length (see Theorem 1.3 of~\cite{dai2024locality}). Since a stabilizer code can also be trivially regarded as a subsystem code, this shows that our bounds are tight in terms of interaction count. For optimality in the interaction length, we exhibit subsystem codes embedded in $D$-dimensions where all interactions are of length at most $O(\ell^*)$ (Theorem~\ref{thm:constr-1}).

\begin{figure}
\centering
\begin{tikzpicture}[scale=5]
    \draw[->] (0,0) -- (0,1.05);
    \draw[->] (0,0) -- (1.05,0);
    \draw[] (0,0) -- (0,1) -- (1,1) -- (1,0);

    \foreach \i in {4,6,...,10} {
            \tikzmath{\hue = 20+8*\i; \xl = 1.0*\i / 10; \xlprev = 1.0*(\i-2)/10;}
            \draw[thick,blue!\hue] (0,\xl) -- (\xl,\xl) -- (\xl,0);
            \ifthenelse{\i = 10}{
                \node[color=blue!\hue] () at (\xl-0.08,\xl+0.05) {$M^*=n^{}$};
            }{
            \node[color=blue!\hue] () at (\xl-0.05,\xl+0.05) {$M^*=n^{0.\i}$};
            }
    }
    \node at (1.15,-0.0) () {$\log_n k$};
    \node at (0,1.1) () {$\log_n d$};
    \node at (-.1,0.0) () {$0$};
    \node at (0.0,-.05) () {$0$};
    \node at (-.1,0.5) () {$1/2$};
    \node at (0.5,-.05) () {$1/2$};
    \node at (-.1,1) () {$1$};
    \node at (1,-.05) () {$1$};
    \draw[fill=white] (0,0) -- (0,0.5) -- (0.5,0.5) -- (1,0) -- (0,0);
    \node[color=blue!20] at (0.3,0.2) {$M^*=0$};

    \node (bra) at (0.3,0.1) {\cite{bravyi2011subsystem}};
    \draw[->] (bra) to[out=0,in=240] (0.7,0.3);

    \node at (0.5,1.2) {$M^*$: Optimal Interaction Count};
\end{tikzpicture}
\begin{tikzpicture}[scale=5]
    \draw[->] (0,0) -- (0,1.05);
    \draw[->] (0,0) -- (1.05,0);
    \draw[] (0,0) -- (0,1) -- (1,1) -- (1,0) -- (0.5,0.5) -- (0,0.5);
    \draw[dotted] (0,0) -- (1,1);  

    \foreach \i in {1,...,5} {
            \tikzmath{\hue = 20+8*\i; \xl = 1.0*\i / 10; \xlprev = 1.0*(\i-1)/10;}
            \draw[thick,blue!\hue] (0,0.5 + \xl) -- (0.5+\xl,0.5+\xl) -- (1,2*\xl);
            \node[color=blue!\hue] () at (1.15, 2*\xl) {$\ell^*=n^{0.\i}$};
    }
    \node at (1.15,-0.0) () {$\log_n k$};
    \node at (0,1.1) () {$\log_n d$};
    \node at (-.1,0.0) () {$0$};
    \node at (0.0,-.05) () {$0$};
    \node at (-.1,0.5) () {$1/2$};
    \node at (0.5,-.05) () {$1/2$};
    \node at (-.1,1) () {$1$};
    \node at (1,-.05) () {$1$};
    \node[color=blue!20] at (0.3,0.2) {$\ell^*=1$};
    \node[rotate=0] at (0.75,0.5) {$\ell^*=\sqrt{\frac{kd}{n}}$};
    \node at (0.2,0.8) {$\ell^*=\frac{d}{\sqrt{n}}$};

    \node (bra) at (0.3,0.1) {\cite{bravyi2011subsystem}};
    \draw[->] (bra) to[out=0,in=240] (0.7,0.3);

    \node at (0.5,1.2) {$\ell^*$: Optimal Interaction Length};
\end{tikzpicture}
\caption{The (asymptotically) optimal interaction count and length for subsystem codes in 2D: A $[[n,k,d]]$ subsystem code need at least $\Omega(M^*)$ interactions of length $\Omega(\ell^*)$, where $M^*$ is plotted on the left and $\ell^*$ is plotted on the right. 
Above, we plot the contours of $k$ vs. $d$ tradeoffs for various values of the Interaction Count or Interaction Length.
Everywhere, big-$O$ is suppressed for clarity.}
\label{fig:main}
\end{figure}
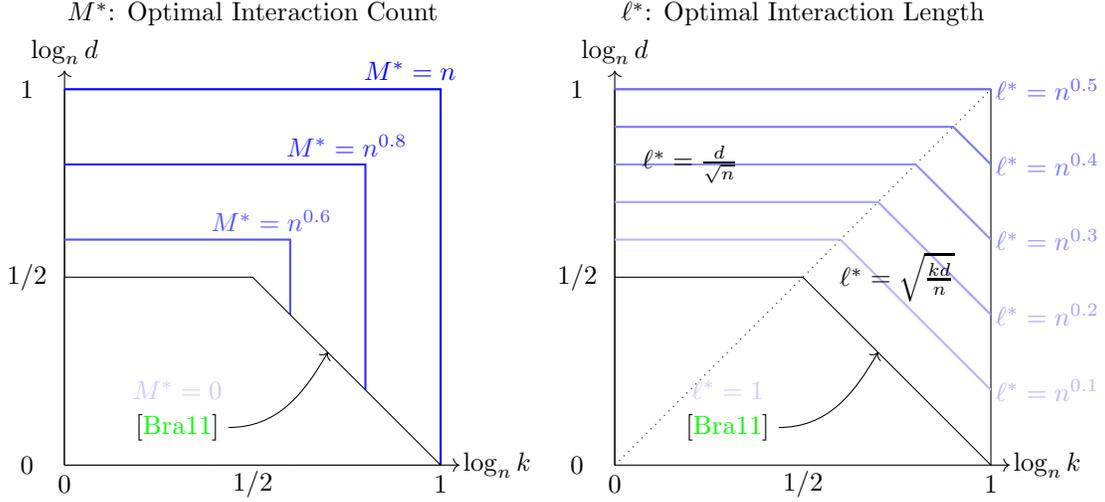

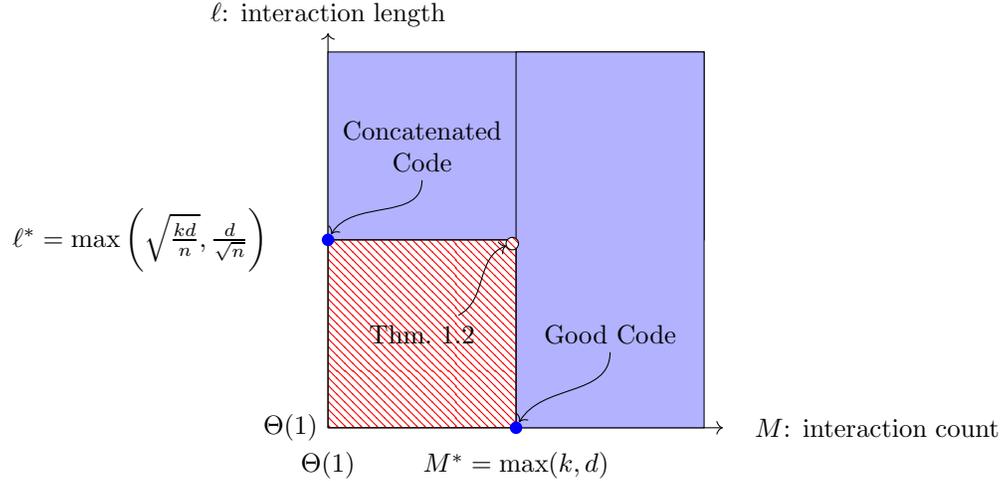
\begin{figure}
\centering
\hypersetup{linkcolor=black}

\begin{tikzpicture}[scale=5]
    \draw[->] (0,0) -- (0,1.05);
    \draw[->] (0,0) -- (1.05,0);
    \draw[fill=blue!30] (0,0.5) rectangle (1,1);
    \draw[fill=blue!30] (0.5,0.00) rectangle (1,1);
    \draw[pattern color=red,pattern=north west lines] (0,0) rectangle (0.5,0.5);
    \node at (1.46,-0.0) () {$M$: interaction count};
    \node at (0,1.1) () {$\ell$: interaction length};
    \node at (-.1,0.0) () {$\Theta(1)$};
    \node at (0.0,-.1) () {$\Theta(1)$};
    \node at (-0.5,0.5) () {$\ell^*=\max\left( \sqrt{\frac{kd}{n}}, \frac{d}{\sqrt{n}} \right)$};
    \node at (0.5,-.1) () {$M^*=\max(k,d)$};
    \node[preaction={fill,white},pattern color=red,pattern=north west lines,circle,scale=0.5, draw] (bad) at (0.49,0.49) {};
    \node[fill=blue,circle,scale=0.5] (good) at (0.5,0.00) {};
    \node[fill=blue,circle,scale=0.5] (concat) at (0,0.5) {};

    \node[align=center] (concat-label) at (0.25, 0.75) {Concatenated\\Code};
    \node[] (good-label) at (0.75, 0.25) {Good Code};
    \draw[->] (concat-label) to[out=270,in=60] (concat);
    \draw[->] (good-label) to[out=270,in=60] (good);

    \node[] (bad-label) at (0.25, 0.25) {Thm.~\ref{thm:main}};
    \draw[->] (bad-label) to[out=27,in=200] (bad);

    \node at (0.5,1.3) {\textbf{Interaction Count vs Length for Subsystem Codes}};
\end{tikzpicture}
\caption{Schematic diagram illustrating the optimality of our lower bounds for all $n,k,d$: A point $(M,L)$ represents that there is a code with $O(M)$ interactions of length $\omega(L)$. Blue shaded region is achievable, red lined region is unachievable. Our lower bound shows that $(M,\ell)$ with $M\le o(M^*)$ and $\ell\le o\left(\ell^*\right)$ is impossible, where $M^*$ and $\ell^*$ are the optimal interaction count and length, respectively, given by Theorem~\ref{thm:main}. There is a construction (good qLDPC code) with $O(M^*)$ interactions of any length, and another construction (concatenated local code, Theorem~\ref{thm:constr-1}) with zero interactions of length $\omega\left(\ell^*\right)$.}
\label{fig:optimal}
\end{figure}

\subsection{Generalizing \cite{dai2024locality} to $D$-dimensions}

We also show that the bounds in \cite{dai2024locality} can be generalized to $D$-dimensional embeddings and to commuting projector codes.

\begin{theorem}[Generalization of \cite{dai2024locality} to $D$-dimensions]
\label{thm:stab}
For any $D\ge 2$, there exist constants $c_0=c_0(D)>0$ and $c_1=c_1(D) > 0$ such that the following is true: Any $D$-dimensional embedding of a nontrivial $[[n,k,d]]$ commuting projector code with $kd^\frac{2}{D-1} \geq c_1 n$ or $d\ge c_1 n^{\frac{D-1}{D}}$ must have at least $M^*$ interactions of length $\ell^*$, where 
\begin{equation}
     M^* = c_0 \cdot \max(k,d), \qquad\text{and}\qquad \ell^* = c_0 \cdot \max\bigg(\frac{d}{n^\frac{D-1}{D}}, \bigg(\frac{kd^\frac{2}{D-1}}{n}\bigg)^\frac{D-1}{2D}\bigg).
\end{equation}
\end{theorem}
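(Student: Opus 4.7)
The plan is to follow the template of the proof of Theorem~\ref{thm:main} for subsystem codes, substituting the Bravyi-Poulin-Terhal (BPT) bound $kd^{2/(D-1)} = O(n)$ for Bravyi's subsystem bound $kd^{1/(D-1)} = O(n)$ as the ``locally achievable'' baseline. I argue by contrapositive: if the embedding has fewer than $M^*$ interactions of length $\ge \ell^*$, I derive a contradiction with either $d \ge c_1 n^{(D-1)/D}$ or $kd^{2/(D-1)} \ge c_1 n$. The two terms in $\ell^*$ correspond to the distance form and the $k$-form of BPT applied to the coarse-grained code.

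I begin with a surgery step. Let $\mathcal{L}$ be the set of long interactions (length $\ge \ell^*$); by assumption $|\mathcal{L}| < M^* = c_0\max(k,d)$, and since stabilizer generators have bounded weight, only $O(M^*)$ physical qubits are touched by long interactions. Applying the cleaning lemma for commuting projector codes as in the proof of Theorem~\ref{thm:main}, I remove the long interactions at a controlled loss, producing an $\ell^*$-local commuting projector code $\mathcal{C}'$ on $n' = n - O(M^*)$ qubits with $k' = \Theta(k)$ and $d' = \Theta(d)$. Since $M^*$ is small compared to $k,d$ in the ``above-BPT'' regime specified by $c_1$, the code $\mathcal{C}'$ still exceeds BPT up to constants.

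Next, I coarse-grain by partitioning $\bbR^D$ into axis-aligned cubes of side $\Theta(\ell^*)$, viewing each cube as a super-qudit of dimension at most $q = 2^{O((\ell^*)^D)}$. The code $\mathcal{C}'$ becomes a $1$-local commuting projector code on $N = \Theta(n/(\ell^*)^D)$ super-qudits whose total information content satisfies $N \log q = \Theta(n)$. Applying the distance form of BPT on super-qudits, $d_s = O(N^{(D-1)/D})$, together with the inequality $d' \le d_s \cdot (\ell^*)^D$ (any logical operator of super-weight $d_s$ has physical weight at most $d_s \cdot (\ell^*)^D$), I obtain $d = O(\ell^* \, n^{(D-1)/D})$, forcing $\ell^* \ge \Omega(d/n^{(D-1)/D})$. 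Applying the parameter form of BPT on qudits, $k' \cdot d_s^{2/(D-1)} = O(N \log q) = O(n)$, combined with $d_s \ge d'/(\ell^*)^D$, gives $kd^{2/(D-1)} = O(n \cdot (\ell^*)^{2D/(D-1)})$, forcing $\ell^* \ge \Omega((kd^{2/(D-1)}/n)^{(D-1)/(2D)})$. The count bound $M^* = \Omega(\max(k,d))$ follows by the same logic: taking $|\mathcal{L}|$ any smaller still yields a code $\mathcal{C}'$ violating the coarse-grained BPT estimate.

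The main obstacle is the surgery step, since for commuting projector codes, naively deleting qubits can break commutation among the remaining stabilizers and the bookkeeping is more rigid than in the subsystem setting where gauge generators need not commute. This is handled, as in the proof of Theorem~\ref{thm:main}, by a careful cleaning argument that deforms the stabilizer group around $\mathcal{L}$ while preserving the commuting-projector property, paying only an $O(M^*)$ drop in $k$ and $d$; the case $k > d$ may additionally require passing to a subcode of at most $d$ logical qubits so that the region touched by $\mathcal{L}$ remains correctable. The rest of the proof is a direct dimensional generalization of the Dai-Li 2D argument \cite{dai2024locality}, with the $D$-dependent exponents arising from the $(\ell^*)^D$ volume scaling of super-qudits.
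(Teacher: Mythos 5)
Your proposal takes a genuinely different route from the paper, but contains a critical gap at its core: the ``surgery'' step. You claim that, as in the proof of Theorem~\ref{thm:main}, one can ``remove the long interactions at a controlled loss, producing an $\ell^*$-local commuting projector code $\mathcal{C}'$'' with $k'=\Theta(k)$, $d'=\Theta(d)$. The paper's proof of Theorem~\ref{thm:main} does \emph{not} do this. The paper never modifies the code; it works with the original code throughout and reasons about correctable subsets of the original embedding. There is no cleaning/puncturing argument anywhere in the paper, and Lemma~\ref{lem:correctable} contains no tool for deleting qubits while preserving both the commuting-projector structure and the parameters. You have effectively invoked a lemma that does not exist. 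Puncturing a commuting projector code is a delicate operation: tracing out qubits touched by $O(\max(k,d))$ long interactions does not obviously yield a smaller commuting projector code (the resulting operators need not commute), and even if it did, the distance could drop by up to the number of removed qubits, which is $\Theta(\max(k,d)) \gg d$ when $k \gg d$ --- precisely the hard case. Your aside about ``passing to a subcode of at most $d$ logical qubits'' does not resolve this either, since a subcode with $d$ logical qubits no longer witnesses the bound on $k$ that you are trying to prove.

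The paper's actual strategy is entirely different. For $d \geq \sqrt{kn}$ the paper (Theorem~\ref{thm:main-d}) grows a single correctable region by repeated application of the expansion lemma alone, sweeping across the embedding one coordinate at a time, and eventually concludes that all of $Q$ is correctable, contradicting $k>0$. For $d \leq \sqrt{kn}$ the paper (Theorem~\ref{thm:stab-1}) partitions $Q=A\sqcup B\sqcup C$ using the tiling lemma (Lemma~\ref{lem:tiling}), the subdivision lemma (Lemma~\ref{lem:subdivision}), and the holographic principle (Lemma~\ref{lem:correctable-cube}), shows $A$ and $B$ are correctable and $|C|<k$, and invokes the $ABC$ lemma (Lemma~\ref{lem:abc}). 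No coarse-graining into super-qudits and no BPT bound as a black box are ever used. This is a deliberate choice: the coarse-graining-plus-BPT approach is essentially what Baspin--Krishna \cite{baspin2022quantifying} used, and the paper explicitly notes their bounds match only up to polylog factors and only in restricted parameter regimes, and that they require the qLDPC assumption. Moreover, a super-qudit argument implicitly assumes qubits are confined to an $O(n^{1/D})\times\cdots\times O(n^{1/D})$ box, whereas the paper handles arbitrary embeddings, a point the paper makes explicitly in Section~\ref{sec:main-d}. In short: your coarse-graining calculus produces the right exponents, but the surgery premise it rests on is unjustified and not present in the paper, and the approach, even if repaired, would be weaker than what the paper achieves.
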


We now compare our works to prior works. First, we note that, when setting $D=2$, our bound matches the bounds in \cite{dai2024locality} up to the implied constant. 
For $D>2$ dimensions, the only prior bounds we are aware of are due to Baspin and Krishna \cite{baspin2022quantifying}.
Theirs match our bounds up to polylog factors when $d\ge \sqrt{kn}$ and when $k=\Theta(n)$, and we improve their bounds in the remaining parameter regimes. We also generalize their results; our results hold for all commuting projector codes, whereas theirs hold only for qLDPC codes.

Similar to Theorem~\ref{thm:main} and \cite{dai2024locality}, our $M^*$ and $L^*$ in Theorem~\ref{thm:stab} are optimal up to constant factors. Any asymptotically good qLDPC code \cite{panteleev2021asymptotically,leverrier2022quantum} is a commuting projector code with at most $O(M^*)=O(\max(k,d))$ interactions of any length. Further, we exhibit stabilizer codes embedded in $D$ dimensions, all of whose interactions are of length at most $O(\ell^*)$; see Theorem~\ref{thm:constr-2}.

\subsection{Organization of the Paper}

We divide the proof of Theorem~\ref{thm:main} into two parts: 
\newcommand\thmmaind{
For all $D\ge 2$, there exist constants $c_0 = c_0(D) >0$ and $c_1=c_1(D) > 0$ such that the following is true: Any $D$-dimensional embedding of a nontrivial $[[n,k,d]]$ subsystem or commuting projector code with  $d\ge c_1 n^{\frac{D-1}{D}}$  must have at least $c_0  d$ interactions of length at least $c_0 \frac{d}{n^\frac{D-1}{D}}$.
}
\begin{theorem}[Main Result -- Part 1]
\thmmaind
\label{thm:main-d}
\end{theorem}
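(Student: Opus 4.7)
The strategy is to reduce Theorem~\ref{thm:main-d} to Bravyi's bound for local subsystem codes~\cite{bravyi2011subsystem} (or the BPT bound~\cite{bravyi2010tradeoffs} for commuting projector codes) via a correctable-set puncturing argument. Suppose for contradiction that the embedded $[[n,k,d]]$ code has fewer than $c_0 d$ interactions of length at least $\ell^* := c_0 d / n^{(D-1)/D}$. Let $X$ denote the set of qubits incident to some long interaction; then $|X| \le 2 c_0 d$, and for $c_0 < 1/2$ we have $|X| < d$, so $X$ is a correctable set of the code.

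Puncturing the code at $X$ yields a code on the remaining $n' = n - |X|$ qubits with the same logical dimension $k' = k \ge 1$ and distance $d' \ge d - |X| \ge (1 - 2c_0)\,d$. Crucially, by the construction of $X$, any pair of qubits outside $X$ that appear together in some common original gauge generator must lie at distance $< \ell^*$ (otherwise that pair would itself be a long interaction, forcing both qubits into $X$). Hence, the punctured code is $\ell^*$-local in the embedding.

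We now apply Bravyi's bound, extended from $1$-local lattice codes to $\ell^*$-local general $D$-dimensional embeddings via a standard binning argument: partition $\bbR^D$ into axis-aligned cells of side $\ell^*$, each containing $O((\ell^*)^D)$ qubits (by the pairwise-distance-at-least-$1$ constraint), yielding a $1$-local code on a coarser super-qubit lattice to which Bravyi's original argument applies. Tracking weights through this binning gives $d' \le C_D \cdot \ell^* \cdot (n')^{(D-1)/D}$ for some constant $C_D$ depending only on $D$. Substituting $\ell^* = c_0 d/n^{(D-1)/D}$ and $n' \le n$, we obtain $(1 - 2c_0)\,d \le C_D\, c_0\, d$, forcing $c_0 \ge 1/(2 + C_D)$. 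Choosing $c_0 < 1/(2 + C_D)$ gives a contradiction.

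The main obstacle is the puncturing step for subsystem codes: while correctable-set puncturing is standard for stabilizer codes, we must verify carefully that the subsystem-code analogue preserves both $k$ and the locality structure, i.e., that each gauge generator of the punctured code can be chosen to have support contained in the intersection of an original generator with the qubits outside $X$, so that no new long interactions are introduced. A secondary technical task is to pin down the correct scaling of the generalized Bravyi bound for $\ell^*$-local embeddings, namely $d \le O(\ell \cdot n^{(D-1)/D})$; this scaling can be justified by a ``thickened surface'' picture in which a minimum-weight logical crosses a $(D-1)$-dimensional cross-section that is $\ell$-thick, contributing volume $O(\ell^D)$ per tile and $O(n^{(D-1)/D}/\ell^{D-1})$ tiles.
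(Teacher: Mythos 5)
Your proposal takes a genuinely different route from the paper: instead of puncturing the bad qubits and invoking a local Bravyi bound, the paper never punctures or coarse-grains at all. It grows a correctable region by repeated and \emph{exclusive} applications of the expansion lemma, sweeping one coordinate direction at a time and switching to the next coordinate whenever the current sweep gets stuck at a dense cross-section, until the region engulfs all of $Q$; correctability of the full qubit set then contradicts $k > 0$ via the $AB$ Lemma. This design is forced on the authors by the fact that the union lemma for subsystem codes only yields dressed-cleanable (not correctable) sets, so the usual alternate-expansion-and-union strategy breaks down. Your route bypasses that obstruction by a reduction, but the reduction has two concrete gaps.

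First, the puncturing step is not just a verification task; it is where the reduction breaks. Even for stabilizer codes, the stabilizer group of the code restricted to $Q \setminus X$ is generated by those elements of $\mathcal S$ supported on $Q\setminus X$, and a generating set for that subgroup is produced by Gaussian elimination over the original generators, i.e.\ by \emph{products} of original generators that have been cleaned off $X$. Such products can have much larger support than any individual original generator, so the interaction graph of the punctured code need not be the restriction of the original one, and your $\ell^*$-locality claim does not follow. For subsystem codes the situation is strictly worse: the gauge group, its center, and the dressed/bare distinction must all survive the puncture, and no puncture-at-a-correctable-set operation with the three properties you need --- $k' = k$, $d' \ge d - |X|$, and gauge generators supported inside the restrictions of the original gauge generators --- appears in the literature or is argued here.

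Second, even granting a clean puncture, the binning argument does not deliver a usable Bravyi bound. Binning into cells of side $\ell^*$ yields a $1$-local code on a lattice, but nothing constrains that lattice to fit inside an $O((n')^{1/D}) \times \cdots \times O((n')^{1/D})$ box: if the original embedding is spread out, the coarsened lattice can be much longer in some directions, and a sweeping $(D-1)$-dimensional cross-section can then contain far more than $O((n')^{(D-1)/D})$ qubits, which is precisely where the hyperplane-sweep proof of Bravyi's bound fails. The paper flags this exact issue explicitly in Section~\ref{sec:main-d} ("the qubits may not be constrained to an $O(n^{1/D})\times\cdots\times O(n^{1/D})$ box") and notes that Bravyi's original statement is only for $n^{1/D}\times\cdots\times n^{1/D}$ lattices; the direction-switching expansion is precisely the mechanism that handles arbitrary, spread-out embeddings. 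Your binning reduces the problem but does not close this gap --- the missing argument is essentially the content of the paper's proof.
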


\newcommand\thmmaink{
For all $D\ge 2$, there exist constants $c_0 = c_0(D)>0$ and $c_1=c_1(D) > 0$ such that the following is true: Any $D$-dimensional embedding of a $[[n,k,d]]$ subsystem code with $kd^\frac{1}{D-1} \geq c_1  n$ must have at least $c_0  k$ interactions of length at least $c_0 (\frac{kd^\frac{1}{D-1}}{n})^\frac{D-1}{D}$.
}
\begin{theorem}[Main Result -- Part 2]
\thmmaink
\label{thm:main-k}
\end{theorem}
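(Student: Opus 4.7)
The plan is to follow the $D=2$ strategy of \cite{dai2024locality}, combined with an appropriate $D$-dimensional generalization of Bravyi's original subsystem bound \cite{bravyi2011subsystem}. I proceed by contradiction, assuming an embedding with strictly fewer than $M^* = c_0 k$ interactions of length at least $\ell^* = c_0 (kd^{1/(D-1)}/n)^{(D-1)/D}$. The high-level idea is to prune the code to be local at range $\ell^*$, then apply a range-$L$ analogue of Bravyi's bound, and show this forces $c_0 = \Omega(1)$, contradicting the choice of $c_0$ small.

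First, I form a pruned subsystem code $[[n,k',d']]$ by discarding every gauge generator whose diameter exceeds $\ell^*$. Each such generator contributes at least one long interaction, so at most $M^*$ generators are discarded, and the resulting gauge group is local with range $\ell^*$. Using the identity $n = s + g + k$ together with the effect of removing one generator on the stabilizer and gauge sectors, a short parity argument shows that removing one generator either leaves $k$ fixed or increases it by $1$, giving $k \leq k' \leq (1+c_0)k$. I then control the distance by a cleaning-style argument: any minimum-weight logical operator of the pruned code should be multipliable by an element of the removed-generator subgroup to produce a logical operator in the original code with only a bounded weight increase, yielding $d' = \Omega(d)$.

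Finally, I invoke a range-$L$ extension of Bravyi's bound: any subsystem code embedded in $\mathbb{R}^D$ (with qubit-pair distance $\geq 1$) that is local with range $L$ satisfies
\[
k\, d^{1/(D-1)} \leq C_D \, n \, L^{D/(D-1)}.
\]
This follows from coarse-graining Bravyi's proof into $L$-sided cells: the code becomes a qudit subsystem code on a lattice of $N \sim n/L^D$ cells with qudit dimension $2^{O(L^D)}$, and Bravyi's bound for qudit codes (scaling with $N \log q$) combined with $d_{\text{cell}} \geq d/O(L^D)$ gives the claim. Substituting $L = \ell^*$ and using $k' \geq k$, $d' = \Omega(d)$, one obtains
\[
k d^{1/(D-1)} \lesssim k'(d')^{1/(D-1)} \leq C_D \, n (\ell^*)^{D/(D-1)} = C_D \, c_0^{D/(D-1)} \, k d^{1/(D-1)},
\]
a contradiction for $c_0$ chosen sufficiently small in $D$. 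The hypothesis $k d^{1/(D-1)} \geq c_1 n$ is used to ensure $\ell^* \geq 1$, so that the pruning step is meaningful and doesn't remove essentially all generators.

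The main obstacle is the distance control in the second step: the coupled changes in the stabilizer and gauge sectors after removing generators make bounding $d'$ delicate, and a careful subsystem-adapted cleaning argument (that tracks both $\mathcal{S}$ and $\mathcal{G}$ simultaneously) is required. The range-$L$ Bravyi extension is more mechanical but still demands careful bookkeeping of how coarse-graining affects qudit dimensions and distances.
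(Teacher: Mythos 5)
The proposal is a genuinely different route from the paper's: rather than modify the code at all, the paper works directly with the given embedding, proves a ``holographic principle'' (Lemma~\ref{lem:correctable-cube}) showing that boxes containing few long interactions are correctable, uses the tiling and subdivision lemmas to carve $\mathbb{R}^D$ into good cubes and bad boxes, and partitions the qubits as $Q=A\sqcup B$ where $A$ is a union of decoupled correctable boxes (hence dressed-cleanable by the subsystem Union Lemma) and $|B|<k$, contradicting the $AB$ Lemma. Your plan instead tries to \emph{modify the code} -- prune the non-local gauge generators -- and then invoke a range-$L$ Bravyi bound as a black box. The reduction is appealing in its modularity, and a range-$L$ Bravyi bound by coarse-graining to $L^D$-sized qudit cells is indeed plausible and roughly mechanical. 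But the pruning step has two gaps, one of which I believe is fatal.

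First, a minor point: the number of discarded gauge generators is not bounded by the number of long interactions. An interaction is a \emph{pair of qubits}, and many distinct generators can share the same long pair, so ``at most $M^*$ generators are discarded'' does not follow from having at most $M^*$ long interactions. (This would follow under an LDPC assumption, but Theorem~\ref{thm:main-k} applies to arbitrary subsystem codes.)

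Second, and more seriously, the distance control under pruning is not merely delicate -- it appears to be unsalvageable in the form you sketch. Removing gauge generators shrinks $\calG$ to $\calG'$, which strictly \emph{enlarges} the set of operators that count as logical, and simultaneously replaces $\calS=Z(\calG)$ with an essentially unrelated $\calS'=Z(\calG')$. A low-weight $P\in \sansserif{C}(\calS')\setminus\calG'$ need not lie in $\sansserif{C}(\calS)$ at all, since $\calS\not\subseteq\calS'$ in general: a stabilizer of $\calG$ may require the removed generators to be expressed inside $\calG'$, and then it simply drops out of the center of the pruned code. Multiplying $P$ by elements of the removed-generator subgroup does not repair this -- it changes $P$ within $\calG$, not within $\sansserif{C}(\calS)$-cosets -- and even if it did, the removed generators can have arbitrarily large support, so the weight increase is uncontrolled. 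In short, $d'$ can collapse to $O(1)$ after pruning, at which point the range-$L$ Bravyi bound gives no useful information. You correctly flag this as ``the main obstacle,'' but it is precisely the point where this strategy breaks, and the paper's route (which never touches the gauge group, instead paying for non-locality inside the set $B$ of the $AB$ partition) is designed to avoid it.
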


Theorem~\ref{thm:main-d} implies Theorem~\ref{thm:main} in the regime where $d \geq k$, and Theorem~\ref{thm:main-k} implies Theorem~\ref{thm:main} in the regime when $d \leq k$. 
Note that Theorem~\ref{thm:main-d} also implies Theorem~\ref{thm:stab} when $d \geq \sqrt{kn}$. The remaining case of Theorem~\ref{thm:stab} is when $d \leq \sqrt{kn}$, which we prove in Theorem~\ref{thm:stab-1}. 

\newcommand\thmstab{
For all $D\ge 2$, there exist constants $c_0 = c_0(D)>0$ and $c_1=c_1(D) > 0$ such that the following is true: Any $D$-dimensional embedding of a $[[n,k,d]]$ commuting projector code with $kd^\frac{2}{D-1} \geq c_1n$ must have at least $c_0k$ interactions of length $c_0(\frac{kd^\frac{2}{D-1}}{n})^\frac{D-1}{2D}$.
}

\begin{theorem}[Generalization of \cite{dai2024locality} when $d \leq \sqrt{kn}$]
\thmstab
\label{thm:stab-1}
\end{theorem}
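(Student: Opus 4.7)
The plan is to prove Theorem~\ref{thm:stab-1} by contradiction, adapting the 2-dimensional argument of Dai--Li~\cite{dai2024locality} to $D$ dimensions. Set $L := c_0(kd^{2/(D-1)}/n)^{(D-1)/(2D)}$ and suppose fewer than $m := c_0 k$ interactions have length at least $L$; the goal is to derive $kd^{2/(D-1)} = O(n)$, contradicting the hypothesis when $c_1$ is taken sufficiently large relative to the implicit constants.

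The first step is to partition $\mathbb{R}^D$ into axis-aligned hypercubes of side $\Theta(L)$ using a uniformly random offset, so that each cell $B$ contains $O(L^D)$ qubits. Let $S_B$ denote the subgroup of $S$ generated by stabilizers of length less than $L$ supported entirely in $B$, and let $k_B$ be the dimension of the local logical space $C(S_B)/S_B$ on the qubits of $B$. A standard restriction argument yields $k \leq \sum_B k_B$ unconditionally: the map $\bar L \mapsto (\bar L|_B)_B$ from global logicals to local logicals is injective, because any global logical whose cell-restrictions all lie in the local $S_B$'s equals $\prod_B \bar L|_B \in \langle \bigcup_B S_B \rangle \subseteq S$ and hence is trivial.

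The core technical step is a semi-local variant of the Bravyi--Poulin--Terhal bound~\cite{bravyi2010tradeoffs},
\[
\sum_B k_B \;\leq\; O\!\left(\frac{n}{d^{2/(D-1)}}\right) + O(m),
\]
where the first term is what BPT would yield if all stabilizers were short, and the second term accounts for ``fictitious'' local logicals arising from long stabilizers straddling cell boundaries (each such stabilizer inflates $\sum_B k_B$ by only $O(1)$ relative to the fully local case). Combining with $k \leq \sum_B k_B$ and $m < c_0 k$ yields $k(1 - O(c_0)) \leq O(n/d^{2/(D-1)})$, which for $c_0$ small and $c_1$ large contradicts $kd^{2/(D-1)} \geq c_1 n$.

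The main obstacle is establishing this semi-local BPT-style inequality. The difficulty is that an individual local logical can have weight far below $d$: restricting a weight-$d$ global logical to a cell of side $L \ll d$ can drop its weight dramatically, so one cannot apply BPT per cell using the global distance naively. The plan is to mimic the BPT proof strategy, which does not use per-cell distances but instead assigns global logicals to cells in a way that charges a full weight-$d$ representative to a single cell, yielding the factor $d^{2/(D-1)}$ after appropriate counting. In our setting, this assignment must be augmented so that each of the $m$ long interactions absorbs at most $O(1)$ extra local logicals, producing the correction term. A secondary technical concern is that stabilizers in $S_B$ can have range up to $L$, comparable to the cell side; we handle this by enlarging cells by a constant factor so the standard BPT argument applies with $D$-dependent constants.
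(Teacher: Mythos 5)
Your proposal takes a genuinely different route from the paper, but it contains gaps that I do not think can be closed in the form you describe.

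The paper proves Theorem~\ref{thm:stab-1} via the correctable-set machinery of Lemma~\ref{lem:correctable} and the $ABC$ Lemma~\ref{lem:abc}. The key geometric scale is $w_0 = \Theta\big((d/\ell)^{1/(D-1)}\big)$, the side length up to which the Holographic Principle (Lemma~\ref{lem:correctable-cube}) guarantees that a box with few long interactions is correctable. One tiles $\mathbb{R}^D$ with cubes of side $w_0$ via a random-offset tiling (Lemma~\ref{lem:tiling}), subdivides the ``bad'' cubes (Lemma~\ref{lem:subdivision}), and partitions the qubits into $A\sqcup B\sqcup C$ where $\mathcal{A}$ is the cube interiors, $\mathcal{B}$ is a thickening of codimension-1 faces, and $\mathcal{C}$ is a thickening of codimension-2 faces together with the (few) bad qubits. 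The Union and Expansion lemmas show $A,B$ are correctable, and a counting argument shows $|C|<k$; the $ABC$ Lemma then gives the contradiction. You instead propose cells of side $\Theta(L) = \Theta(\ell)$, local stabilizer subgroups $S_B$ and per-cell logical dimensions $k_B$, the restriction bound $k \le \sum_B k_B$, and then a ``semi-local BPT'' inequality $\sum_B k_B \le O(n/d^{2/(D-1)}) + O(m)$.

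There are three concrete problems. First, your entire framework — local stabilizer subgroups $S_B$, restrictions of logical Paulis to cells, and the injectivity argument $\bar L \mapsto (\bar L|_B)_B$ — relies on Pauli/stabilizer structure, but the theorem is stated for \emph{commuting projector} codes, which in general have no stabilizer group. The paper is careful to use only the properties in Lemma~\ref{lem:correctable}, which are known to hold for commuting projector codes; your argument would need a full reformulation in terms of correctability before it could even get started. Second, the ``semi-local BPT-style inequality'' is the entire content of the theorem and you supply no proof of it, only a sketch that you yourself flag as the main obstacle. You propose to ``charge a full weight-$d$ representative to a single cell,'' but a cell of side $\Theta(L)$ contains only $O(L^D)$ qubits, which is typically far smaller than $d$: a weight-$d$ representative cannot be confined to a single cell, and it is unclear how to run a BPT-style cleaning argument inside a cell that small. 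This is precisely why the paper's cells have side $w_0 = \Theta((d/\ell)^{1/(D-1)})$, which by Lemma~\ref{lem:ineq} is $\gg \ell$ and is exactly the scale at which boundary layers have $o(d)$ qubits and the expansion lemma can be iterated (the Holographic Principle). Third, the claim that each of the $m$ long interactions inflates $\sum_B k_B$ by only $O(1)$ is asserted with no mechanism; in the BPT-style counting, a single long stabilizer that straddles many cell boundaries can in principle contribute to $k_B$ in every cell it touches, and controlling this requires a concrete argument (the paper sidesteps this entirely by putting all bad qubits and all boundary qubits into $C$ or $B$, then just counting them). I would recommend studying Lemmas~\ref{lem:correctable-cube}, \ref{lem:subdivision}, and \ref{lem:tiling}, which together replace the semi-local BPT inequality you are missing.
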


\section{Preliminaries}

\paragraph{Notation and Definitions.}

We use standard Landau notation $O(\cdot), \Omega(\cdot), \Theta(\cdot), o(\cdot), \omega(\cdot)$.
We also use the notations $\tilde O(\cdot),\ \tilde \Omega(\cdot)$, which are variants of $O(\cdot)$ and $\Omega(\cdot)$, respectively, that ignore logarithmic factors. For example, $f(n) =\tilde O(h(n))$ means that there exists an integer $k$ such that $f(n) = O(h(n) \log^k n)$.
For a set $S$, we write $S^{\le D}\defeq S\cup S^2\cup\cdots\cup S^D$.

In $\mathbb{R}^D$, \emph{distance} refers to Euclidean ($\ell_2$) distance unless otherwise specified. 
We sometimes also use the \emph{$\ell_\infty$-distance} of two points $(x,y),(x',y')\in\mathbb{R}^D$, which is $\max(|x-x'|,|y-y'|)$.
A \emph{grid tiling} is a division of $\mathbb{R}^d$ given by axis aligned hyperplanes equally spaced at a fixed distance $w$.
Throughout, a \emph{box} is always a set of the form $[a_1,b_1]\times\cdots\times[a_D,b_D]$. In particular, boxes contain their boundary and are axis-parallel.
A \emph{cube} is a box all of whose side lengths are equal: $b_1-a_1=b_2-a_2=\cdots=b_D-a_D$.

An \emph{embedded set} in $\mathbb{R}^D$ is a finite set $Q\subset\mathbb{R}^D$ with pairwise ($\ell_2$) distance at least 1. A function $f:\mathbb{R}^D\to\mathbb{N}$ is \emph{finitely supported} if $f(x)\neq 0$ for finitely many $x\in\mathbb{R}^D$. For a finitely supported function $f:\mathbb{R}^D\to\mathbb{N}$ and a region $R\subset\mathbb{R}^D$, define, by abuse of notation, $f(R)=\sum_{i\in R; f(i)\neq 0} f(i)$. 
We will be primarily concerned with the finitely supported function given by Definition~\ref{def:f}.

\subsection{Quantum codes}

We associate the pure states of a qubit with unit vectors in $\bbC^{2}$ and pure $n$-qubit states with unit vectors in $(\bbC^{2})^{\otimes n}$. Let $\mathcal{P}$ denote the (single-qubit) Pauli group, which consists of the Pauli matrices $\ssI, \ssX, \ssY, \ssZ$, and their scalar multiples by $\{\pm 1,\pm i\}$. The $n$-qubit Pauli group is $\mathcal{P}_n = \mathcal{P}^{\otimes n}$. Given $P \in \calP_n$, its \emph{weight} $|P|$ is the number of tensor components not equal to $\ssI$.

A \emph{quantum error-correcting code} $\calC$ is a subspace of $(\bbC^2)^{\otimes n}$. The parameter $n$ is called the \emph{(block) length} of the code. We define the \emph{dimension} $k$ of the code to be $k = \log_2(\dim \cal C)$.

\paragraph{Stabilizer Codes}
A \emph{stabilizer group} $\calS$ is an abelian subgroup of the $n$-qubit Pauli group $\mathcal{P}_n$ that does not contain $-\ssI$. A \emph{stabilizer code} $\mathcal{Q} = \mathcal{Q}(\mathcal{S}) \subseteq (\bbC^2)^{\otimes n}$ is defined to be the subspace of states left invariant under the action of the stabilizer group $\calS$, i.e.\ $ \mathcal{Q} = \{ \ket{\psi} : \; \ssS \ket{\psi} = \ket{\psi}, \;\forall \ssS \in \mathcal{S}\}$.
Being an abelian group, we can describe $\mathcal{S}$ by $n-k$ independent generators $\{\ssS_1,...,\ssS_{n-k}\}$, where $k$ is the \emph{dimension} of the code. The \emph{distance} $d$ is the minimum weight of an error $E \in \calP_n$ that maps a codeword in $\mathcal{Q}$ to another codeword. A quantum code $\mathcal{Q}$ with distance $d$ can correct up to $d-1$ qubit erasures. 

\paragraph{Subsystem Codes} A \emph{subsystem code} is a choice of decomposition of a stabilizer code $\calC$ into a tensor product $\calC = \calA \otimes \calB$, where $\calA \cong (\bbC^2)^{\otimes k}$ and $\calB \cong (\bbC^2)^{\otimes g}$ are the \emph{logical} and \emph{gauge} parts of $\cal C$, respectively. The dimension $k$ of a subsystem code is defined as the number of qubits encoded in its logical subsystem $\calA$. One can view a subsystem code as a stabilizer code that can encode $k+g$ logical qubits, but only $k$ of the logical qubits are actually used to protect information.

We can define a subsystem code by starting with a stabilizer code $\mathcal{S}$ given by $n-k-g$ independent stabilizer generators, with $k+g$ logical qubits associated with $k+g$ pairs of logical operators $\bar X_1,\bar Z_1,\dots,\bar X_{k+g}, \bar Z_{k+g}$. The first $k$ logical qubits are used to encode information, and the last $g$ logical qubits are called \emph{gauge qubits}. The \emph{gauge group} of the subsystem is the group $\mathcal{G}=\ab{\mathcal{S}, \bar X_{k+1},\bar Z_{k+1},\dots,\bar X_{k+g}, \bar Z_{k+g}}$. Given the gauge group $\calG$, the code's stabilizer group $\calS$ can be recovered as the center of $\cal G$, so a subsystem code is uniquely defined by its gauge group. Any stabilizer code can be equivalently regarded as a subsystem code whose gauge group is abelian, so stabilizer codes form a subset of subsystem codes.

For subsystem codes, we make a distinction between bare logical operations, which act trivially on the gauge qubits, and dressed logical operators, which may not. Formally, (non-trivial) bare logical operators are elements of $\sansserif{C}(\calG)\setminus\calG$, where $\sansserif{C}(\calG)$ denotes the centralizer of $\calG$, and (non-trivial) dressed logical operators are elements of $\sansserif{C}(\calS)\setminus\calG$. Note that for stabilizer codes there is no distinction. The distance $d$ of a subsystem code is defined as the minimum weight of a non-trivial dressed logical operator, i.e., $d=\min_{P\in \sansserif{C}(\calG)\setminus\calG}|P|$. We will sometime denote a subsystem code $\calC$ with $n$ physical qubits, $k$ logical qubits, distance $d$, and $g$ gauge qubits by $\calC = [[n,k,d,g]]$. 

\paragraph{Commuting Projector Codes}
A commuting projector code $\calC\subseteq (\bbC)^{\otimes n}$ is a subspace defined by a set of pairwise commuting projections $\{\Pi_1, \dots, \Pi_m\}$. The code $\calC$ is the subspace of states left invariant by all projections $\Pi_i$. Every stabilizer code is also a commuting projector code where the defining projections are of Pauli type, i.e., $\Pi_{i} = (\ssI+P_i)/2$, for some Pauli operator $P_i \in \calP_n$. For the purposes of establishing our locality bounds, the only properties we need of commuting projector codes is the fact that all the properties of correctable sets for stabilizer codes, i.e., those listed in Lemma~\ref{lem:correctable}, continue to hold without modification for commuting projector codes~\cite{haah2012logical}. Finally, we note that while stabilizer codes can be considered a subset of both subsystem and commuting projector codes, there is no direct relation between subsystem and commuting projector codes themselves.

\paragraph{Quantum codes in $D$ dimensions.}
Given a finite set $S$, an \emph{embedding} of $S$ into $\bbR^D$ is a map $\iota:S\rightarrow \bbR^D$ such that $\|\iota(s_i)-\iota(s_j)\| \ge 1$ for all distinct $s_i,s_j\in S$. The image $\iota(S)$ is then said to be an \emph{embedded set}. Throughout, the embedding map will usually be implicit. We identify the qubits $Q$ of a quantum code with a $D$-dimensional embedded set, which we continue to call $Q\subset \mathbb{R}^D$ by abuse of notation. By further abuse of notation, we refer to $Q\subset\mathbb{R}^D$ as the embedding of the qubits $Q$. When the set of qubits $Q$ is understood, given a subset $V\subset Q$, we write $\overline{V}\defeq Q\setminus V$ to denote the \emph{complement} of $V$ in $Q$.

\begin{definition}[Interactions]
Given an embedding $Q\subset \bbR^D$ of a quantum code $C$, either a commuting projector code or a subsystem code, \emph{interactions} of the code are defined with respect to a specific set of generators for that code. In the case that $C$ is a commuting projector code with defining projections $\{\Pi_1,\cdots,\Pi_m\}$, we say that a pair of qubits $q,p\in Q$ define an \emph{interaction} if $p$ and $q$ are both in the support of some projection $\Pi_i$. Similarly, if $C$ is a subsystem code with a set of generators $\{G_1,\cdots,G_m\}$ for its gauge group, we say that $p,q\in Q$ define an \emph{interaction} if $p$ and $q$ are both in the support of some gauge generator $G_i$. In both cases, the \emph{length} of an interaction $(p,q)$ is defined to be the $\ell_2$ distance between $p$ and $q$. 

Interactions are always defined with respect to a particular set of generators (either projector or gauge), but throughout we assume that the generator set is fixed (but otherwise arbitrary), and thus the set of interactions is fixed as well. Note that for subsystem codes, interactions are always defined with respect to \emph{gauge generators} and not \emph{stabilizer generators}. In particular, since each stabilizer generator is generally a product of multiple gauge generators, it is possible for a subsystem code to have local gauge generators, but non-local stabilizer generators. Indeed, it is only in such cases that a separation in the locality bounds for stabilizer and subsystem codes is possible.
\end{definition}
 
In the proofs of our results, we typically consider a fixed interaction length $\ell$. We then refer to an interaction as \emph{bad} if the length is at least $\ell$ and as \emph{good} if the length is less than $\ell$. Intuitively, good interactions are easier to deal with for our proof; they are effectively local, and can be treated in a similar to how local interactions are treated in the original proofs of the BPT and Bravyi bounds. To control the number of bad interactions, we introduce the following function that counts the number of bad interactions that a particular qubit participates in:

\begin{definition}[Interaction counter]\label{def:counter}
  Given a quantum code with a $D$-dimensional embedding $Q\subset \mathbb{R}^D$, let $f_{\ge\ell}:\mathbb{R}^D\to \mathbb{N}$ denote the \emph{interaction counting function}, where $f_{\ge\ell}(q)$ for $q\in Q$ equals the number of interactions of length at least $\ell$ that qubit $q$ participates in, and $f_{\ge\ell}(\cdot)=0$ outside of $Q$.
  \label{def:f}
\end{definition}

\paragraph{Correctable sets.}Like in previous works \cite{baspin2022connectivity,baspin2022quantifying,baspin2024improved, dai2024locality}, we analyze the limitations of quantum codes using correctable sets. Intuitively, a subset $U\subset Q$ of qubits is correctable if the code can correct the erasure of the qubits in $U$. We state the definition of a correctable set for completeness, though we only interface with the definition indirectly using Lemmas~\ref{lem:correctable},~\ref{lem:ab}, and~\ref{lem:abc}.

\begin{definition}[Correctable set]
Let $U \subset Q$ be a subset of qubits in a quantum code, and let $\overline{U} = Q\backslash U$. Let $\mathcal{D}[\overline{U}]$ and $\mathcal{D}[Q]$ denote the space of density operators associated with the sets of qubits $\overline{U}$ and $Q$ respectively.
  The set $U$ is \emph{correctable} if there exists a recovery channel $\mathcal{R}: \mathcal{D}[\overline{U}] \to \mathcal{D}[Q]$ such that for any code state $\rho$, we have $\mathcal{R}(\Tr_{U}(\rho)) = \rho$.
\end{definition}

For an embedded set of qubits $Q\subset\bbR^D$, we will say that a region $R\subset\bbR^D$ is correctable if the subset of all qubits contained in $R$ is a correctable subset. As an abuse of terminology, we will often refer to subsets of qubits as regions. For stabilizer and commuting projector codes, a region being correctable is equivalent to having no non-trivial logical operators supported on that region. For subsystem codes, a region $U \subset Q$ is correctable if and only if no non-trivial \emph{dressed} logical operators are supported on $U$. Note that $U$ being correctable also implies that no non-trivial \emph{bare} logical operators are supported on $U$, but the converse does not necessarily hold. This motivates the following definition.

\begin{definition}[Dressed-Cleanable]
If there are no non-trivial \emph{bare} logical operators supported on a region $U\subset Q$, then we say that $U$ is \emph{dressed-cleanable}~\cite{pastawski2015fault}.
\end{definition}

We use the following notions in Lemma~\ref{lem:correctable} to reason about correctable sets in subsystem codes and commuting projector codes.
In a quantum code with qubits $Q$, say sets $U_1,\dots,U_\ell\subset Q$ are \emph{decoupled} if there are no interactions between two distinct $U_i$'s.
For a set $U\subset Q$, let $\partial U = \partial_+ U \cup \partial_{-} U$ be the \emph{boundary} of $U$, where $\partial_+ U$ denotes the \emph{outer boundary} of $U$, the set of qubits outside $U$ that have an interaction with $U$, and $\partial_- U = \partial_+ \overline{U}$ is the \emph{inner boundary} of $U$. 
\begin{lemma}[\cite{bravyi2009no,haah2012logical,pastawski2015fault}]
  Let $Q$ be the qubits of a $[[n,k,d]]$ commuting projector or subsystem code $\calC$.
\label{lem:correctable}
\begin{enumerate}
   \item \label{lem:trivial} \textbf{\emph{Subset Closure:}} Let $U \subset Q$ be a correctable set. 
   Then any subset $W \subset U$ is correctable.
   \item \label{lem:dist} \textbf{\emph{Distance Property:}} Let $U \subset Q$ with $|U| < d$. Then $U$ is correctable.
   \item \label{lem:bptunion} \textbf{\emph{Union Lemma:}}
    Let $U_1,\dots,U_\ell$ be decoupled, and let each $U_i$ be correctable. If $\calC$ is a subsystem code, then $\bigcup_{i=1}^\ell U_i$ is dressed-cleanable. If $\calC$ is a commuting projector code, then $\bigcup_{i=1}^\ell U_i$ is correctable. 
    \item \label{lem:bptexpansion} \textbf{\emph{Expansion Lemma:}}
    Let $U,T\subset Q$ be correctable sets such that $T \supset \partial U$. Then $T \cup U$ is correctable. 
\end{enumerate}
\end{lemma}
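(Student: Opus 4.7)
The proof generalizes the 2D commuting projector code argument of Dai--Li~\cite{dai2024locality} to arbitrary dimension $D$. The crucial feature is that the Union Lemma (Lemma~\ref{lem:correctable}.\ref{lem:bptunion}) yields full correctability (rather than merely dressed-cleanability) for commuting projector codes, which is what ultimately produces the improved length exponent $(D-1)/(2D)$ compared to the subsystem bound in Theorem~\ref{thm:main-k}.

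Suppose for contradiction that the code has fewer than $c_0 k$ interactions of length at least $\ell := c_0 \ell^* = c_0(kd^{2/(D-1)}/n)^{(D-1)/(2D)}$, for a constant $c_0 = c_0(D) > 0$ to be chosen. Let $X \subseteq Q$ be the set of qubits incident to at least one such \emph{bad} (long) interaction; then $|X| \leq 2 c_0 k$. Tile $\bbR^D$ with axis-aligned cubes of side $w := \ell$, and partition the cubes into $3^D$ color classes so that any two same-color cubes are $\ell_\infty$-separated by at least $2w > \ell$; this guarantees that good (length-$<\ell$) interactions cannot connect same-color cubes. By the unit-separation packing bound, each cube contains at most $c_2(D)\, w^D$ qubits; using the algebraic fact $\ell^{*D} \leq d$ (which follows from $k \leq n$ since $(kd^{2/(D-1)}/n)^{(D-1)/2} \leq d \iff k \leq n$) and choosing $c_0 \leq c_2(D)^{-1/D}$, each cube has at most $d$ qubits and is hence correctable by the Distance Property (Lemma~\ref{lem:correctable}.\ref{lem:dist}).

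The heart of the argument is to show that $Q \setminus X$ is correctable. Restricted to $Q \setminus X$, every interaction is good (any bad interaction has both endpoints in $X$ by construction), so the cubes, each restricted to $Q \setminus X$, are pairwise decoupled correctable sets within each color class. By the Union Lemma for commuting projector codes, the union of each color class (restricted to $Q \setminus X$) is correctable. Iteratively applying the Expansion Lemma (Lemma~\ref{lem:correctable}.\ref{lem:bptexpansion}) across color classes, with a careful inductive ordering analogous to that used in the original Bravyi--Poulin--Terhal proof, yields that all of $Q \setminus X$ is correctable. For a commuting projector code, this implies that the logical Hilbert space embeds in that of $X$, giving $k \leq |X| < 2 c_0 k$. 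Choosing $c_0 < 1/2$ then produces the contradiction $k < k$.

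The main technical obstacle is the iterative application of the Expansion Lemma to combine the $3^D$ color classes: the inner boundary of one color class intersects cubes in multiple other color classes, so the Expansion Lemma cannot be applied to absorb whole color classes one at a time. A careful ordering--say, absorbing cubes slab-by-slab along one axis, or refining the coloring so that each new color's boundary lies in previously-absorbed colors--is required. The regime hypothesis $kd^{2/(D-1)} \geq c_1 n$ enters only to ensure that $\ell = c_0 \ell^* \geq 1$ (so the bound is non-vacuous, since the embedding forces interactions to have length $\geq 1$); accordingly, $c_1 = c_1(D)$ may be taken as any sufficiently large constant depending on $c_0$ and $D$.
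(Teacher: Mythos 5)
You have not proved the statement you were asked to prove. Lemma~\ref{lem:correctable} is a collection of four foundational facts about correctable sets (Subset Closure, Distance Property, Union Lemma, Expansion Lemma), and the paper does not reprove them --- they are imported wholesale from \cite{bravyi2009no,haah2012logical,pastawski2015fault}. Your proposal instead sketches a proof of a locality \emph{lower bound} for commuting projector codes, essentially Theorem~\ref{thm:stab-1}, and it \emph{uses} Lemma~\ref{lem:correctable} as a black box (invoking items~\ref{lem:dist}, \ref{lem:bptunion}, and~\ref{lem:bptexpansion} as known tools) rather than establishing any of its four items. Nothing in the proposal argues, say, that a set of fewer than $d$ qubits is correctable, or that the union of decoupled correctable sets is correctable for commuting projector codes --- these are exactly what Lemma~\ref{lem:correctable} asserts, and they are what would need to be proven.

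Separately, even read as an attempt at Theorem~\ref{thm:stab-1}, the argument has the gap you yourself flag. Using side-$\ell$ cubes with a $3^D$-coloring, the Union Lemma does make each color class (restricted to $Q\setminus X$) correctable, but the Expansion Lemma requires the boundary $\partial U$ of the current correctable set to lie in a correctable set $T$; after absorbing one color class, its boundary spreads across \emph{all} remaining classes at once, and no ordering of the $3^D$ classes lands $\partial U$ inside a single class that is already known correctable, while taking the union of the remaining classes is not permitted (they are not decoupled from one another). The paper avoids this entirely: it uses much larger cubes of side $w_0 \approx (d/\ell)^{1/(D-1)}$ and establishes correctability of each cube via an inductive holographic argument (Lemma~\ref{lem:correctable-cube}), then applies the $ABC$ Lemma~\ref{lem:abc} to a three-way split $\mathcal{A}\sqcup\mathcal{B}\sqcup\mathcal{C}$ (cube interiors, codimension-1 boundary slabs, codimension-2 boundary slabs), rather than iterating the Expansion Lemma over a fine coloring.
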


A key point in Lemma~\ref{lem:correctable} is that the union lemma differs for commuting projector and subsystem codes. For subsystem codes, the union of decoupled and correctable sets is \emph{not necessarily correctable} -- only dressed-cleanable \cite{pastawski2015fault}. In general, being dressed-cleanable is weaker than being correctable. One of the major problems with generalizing Theorem~\ref{thm:stab} from commuting projector codes to subsystem codes is that the union lemma for subsystem codes only allows the conclusion that the union of correctable sets is dressed-cleanable. This version of the union lemma is too weak to adapt the original proof of Theorem~\ref{thm:stab} to subsystem codes. Instead, we take an alternative approach in proving Theorem~\ref{thm:main-d} which is based solely on the expansion lemma. 

The usefulness of reasoning about correctable sets is that the sizes of the correctable sets in a quantum code directly give bounds on the parameters.

\begin{lemma}[$AB$ Lemma -- Implicit in \cite{bravyi2011subsystem}, Section VIII]
    \label{lem:ab}
    Suppose that the qubits $Q$ of a $[[n,k,d]]$ subsystem code can be partitioned as $Q = A \sqcup B$. If $A$ is dressed-cleanable, then
    \[
        k \leq \abs{B}. 
    \]
\end{lemma}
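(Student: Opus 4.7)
The plan is to prove this by a direct symplectic dimension count, without invoking any cleaning lemma. First I would identify the $n$-qubit Pauli group modulo overall phases with the symplectic vector space $V = \mathbb{F}_2^{2n}$. Under this identification, $\calS$, $\calG$, and $\mathsf{C}(\calG)$ become linear subspaces satisfying the standard relations $\calS = \calG \cap \mathsf{C}(\calG) = \mathsf{Z}(\calG)$, and the existence of $2k$ independent bare logical generators $\bar X_i, \bar Z_i$ gives $\dim(\mathsf{C}(\calG)/\calS) = 2k$.

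The partition $Q = A \sqcup B$ induces a decomposition $V = V_A \oplus V_B$, where $V_A$ is the subspace of Paulis supported on $A$. Let $\pi_B : V \to V_B$ denote the projection that forgets the $A$-coordinates, so that $\ker \pi_B = V_A$. The key step -- and essentially the only nontrivial one -- is to translate the hypothesis that $A$ is dressed-cleanable into the algebraic containment
\[
\mathsf{C}(\calG) \cap V_A \;\subseteq\; \calS.
\]
This follows immediately from the definition: any $x \in \mathsf{C}(\calG) \cap V_A$ that is not in $\calG$ would be a non-trivial bare logical supported on $A$, contradicting dressed-cleanability; hence $\mathsf{C}(\calG) \cap V_A \subseteq \calG$, and intersecting once more with $\mathsf{C}(\calG)$ gives $\mathsf{C}(\calG) \cap V_A \subseteq \mathsf{C}(\calG) \cap \calG = \calS$. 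I expect this bookkeeping step to be the main potential pitfall, since it is the only place the subsystem-code distinction between ``bare'' and ``dressed'' enters the proof.

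With the containment in hand, I would consider the induced linear map $\bar\pi_B : \mathsf{C}(\calG)/\calS \to V_B / \pi_B(\calS)$ and show it is injective: if $[x] \in \ker \bar\pi_B$, then $\pi_B(x) = \pi_B(s)$ for some $s \in \calS$, so $x - s \in \mathsf{C}(\calG) \cap V_A \subseteq \calS$, which forces $x \in \calS$ and $[x] = 0$. Injectivity then yields the dimension inequality $2k = \dim(\mathsf{C}(\calG)/\calS) \le \dim V_B = 2|B|$, giving $k \le |B|$. I prefer this approach over a cleaning-lemma-based argument because the natural subsystem-code cleaning operation multiplies a bare logical by a gauge element, which can destroy the bare property and complicate the commutation structure on $B$; working directly with the quotient $\mathsf{C}(\calG)/\calS$ sidesteps that issue and reduces the proof to a clean rank inequality.
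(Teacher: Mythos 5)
Your proof is correct. The paper does not prove Lemma~\ref{lem:ab} itself, only citing it as implicit in Bravyi's work, so there is no in-paper argument to compare against. Your symplectic dimension count is a clean and complete proof: the translation of dressed-cleanability into the containment $\mathsf{C}(\mathcal{G}) \cap V_A \subseteq \mathcal{S}$ is exactly right (using $\mathcal{S} = \mathcal{G} \cap \mathsf{C}(\mathcal{G})$), the quotient map $\bar\pi_B$ is well-defined and injective under that containment, and the identity $\dim(\mathsf{C}(\mathcal{G})/\mathcal{S}) = 2k$ follows from $\dim \mathcal{S} = n-k-g$, $\dim \mathcal{G} = n-k+g$, $\dim \mathsf{C}(\mathcal{G}) = n+k-g$. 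One small remark: quotienting $V_B$ by $\pi_B(\mathcal{S})$ is not strictly necessary. The restriction $\pi_B|_{\mathsf{C}(\mathcal{G})} : \mathsf{C}(\mathcal{G}) \to V_B$ already has kernel $\mathsf{C}(\mathcal{G}) \cap V_A \subseteq \mathcal{S}$, so $2k = \dim \mathsf{C}(\mathcal{G}) - \dim \mathcal{S} \le \dim \mathsf{C}(\mathcal{G}) - \dim \ker(\pi_B|_{\mathsf{C}(\mathcal{G})}) \le \dim V_B = 2|B|$, which shaves off the last quotient without changing the idea.
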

\begin{lemma}[$ABC$ Lemma \cite{bravyi2010tradeoffs}]
    \label{lem:abc}
    Suppose that the qubits $Q$ of a $[[n,k,d]]$ commuting projector code can be partitioned as $Q = A \sqcup B \sqcup C$. If $A$ and $B$ are correctable, then 
    \[
        k \leq \abs{C}.
    \]
\end{lemma}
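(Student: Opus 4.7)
The plan is to apply the standard entanglement-entropy argument from Bravyi-Poulin-Terhal. Introduce a reference system $R$ of dimension $2^k$ and purify the maximally mixed code state to $|\Phi\rangle_{QR}$, so that $S(R) = k$ and $S(X) = S(\overline{X})$ for every subsystem $X$ of $QR$. The central input is the well-known fact that a subset $U \subset Q$ is correctable if and only if $I(R:U) = 0$ on $|\Phi\rangle$, equivalently $S(RU) = S(R) + S(U) = k + S(U)$. The ``$\Rightarrow$'' direction, which is all we need, follows from the definition of correctability: a channel $\mathcal{R}:\mathcal{D}[\overline{U}]\to \mathcal{D}[Q]$ that recovers every codeword satisfies $(\mathcal{R} \otimes \mathrm{id}_R)(\mathrm{Tr}_U |\Phi\rangle\langle\Phi|) = |\Phi\rangle\langle\Phi|$, which forces $R$ to be decoupled from $U$. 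The argument only uses the existence of the recovery channel, so it transfers directly from stabilizer codes to the commuting projector case.

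Applying the characterization to the correctable set $A$ and invoking purity gives $S(BC) = S(RA) = k + S(A)$; applying it to $B$ gives $S(AC) = k + S(B)$. Summing,
\[
S(AC) + S(BC) = 2k + S(A) + S(B).
\]
By subadditivity of von Neumann entropy, $S(AC) \le S(A) + S(C)$ and $S(BC) \le S(B) + S(C)$, so combining with the previous equation yields
\[
2k + S(A) + S(B) \le S(A) + S(B) + 2S(C),
\]
which rearranges to $k \le S(C) \le |C|$, where the last inequality uses that the Hilbert space associated with $C$ has dimension $2^{|C|}$.

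I do not anticipate any significant obstacle. The one point worth attention is verifying that the decoupling characterization of correctability holds for commuting projector codes rather than just stabilizer codes, but this depends only on the existence of the recovery channel promised by the definition of correctable. A cleaning-based approach, via constructing logical-operator representatives supported in $BC$ and in $AC$ and then restricting to $C$, is also possible but is notationally heavier and harder to carry out cleanly for general commuting projector codes (which lack Pauli logicals); the entropy argument is the cleanest and is the one used in the original BPT paper.
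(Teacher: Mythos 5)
Your proof is correct and is exactly the standard entropy/decoupling argument from Bravyi--Poulin--Terhal, which is the source the paper cites for this lemma without reproducing the proof. The decoupling step $S(RU)=S(R)+S(U)$ for correctable $U$, combined with purity ($S(RA)=S(BC)$, $S(RB)=S(AC)$) and subadditivity, gives $2k\le 2S(C)\le 2|C|$ just as you wrote, and as you note the argument only needs the existence of a recovery channel, so it applies verbatim to commuting projector codes.
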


\subsection{Geometric Lemmas}

In this section, we give two lemmas about $D$-dimensional embeddings of sets. The first, Lemma~\ref{lem:packing}, allows us to generalize our results from lattice embeddings to arbitrary embeddings.
\begin{lemma}[Point Density]
\label{lem:packing}
Let $R\subseteq \mathbb{R}^D$ be a box with side lengths $L_1 \ge \cdots \ge L_D$. Suppose $Q\subseteq \bbR^D$ is an embedded set. Then 
\begin{align}
|R\cap Q| \le \frac{2^D}{\mathrm{vol}(B_D)}\prod_{i=1}^D(1+L_i)
\end{align}
where $B_D$ is the unit ball in $\mathbb{R}^D$.
\end{lemma}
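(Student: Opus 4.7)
The plan is to use a standard volume-packing argument. Since $Q$ is an embedded set, every pair of points in $Q$ has Euclidean distance at least $1$, so the open balls of radius $\tfrac{1}{2}$ centered at points of $Q$ are pairwise disjoint. Each such ball has volume $2^{-D}\,\mathrm{vol}(B_D)$.

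Next, I would bound the total volume occupied by those half-balls corresponding to points in $R \cap Q$. Writing $R = [a_1,b_1]\times\cdots\times[a_D,b_D]$ with $b_i - a_i = L_i$, the half-ball around any $q \in R \cap Q$ is contained in the slightly enlarged box
\[
R' \defeq [a_1 - \tfrac{1}{2}, b_1 + \tfrac{1}{2}] \times \cdots \times [a_D - \tfrac{1}{2}, b_D + \tfrac{1}{2}],
\]
whose volume is exactly $\prod_{i=1}^D(L_i + 1)$. Since the half-balls are pairwise disjoint and all contained in $R'$, summing volumes gives
\[
|R \cap Q| \cdot \frac{\mathrm{vol}(B_D)}{2^D} \;\le\; \mathrm{vol}(R') \;=\; \prod_{i=1}^D(1 + L_i),
\]
and rearranging yields the claimed inequality.

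This is essentially a one-step sphere-packing argument, so there is no substantive obstacle — the only thing to be careful about is the boundary effect, which is handled cleanly by inflating $R$ by $\tfrac{1}{2}$ in every direction before applying the disjointness of the half-balls. The assumption $L_1 \ge \cdots \ge L_D$ plays no role in the proof itself; it is presumably stated to make the bound easier to apply in the sequel, where the dominant factor $\prod_i(1 + L_i)$ will be estimated using the ordering of the side lengths.
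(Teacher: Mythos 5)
Your proof is correct and uses essentially the same sphere-packing argument as the paper, comparing the total volume of disjoint radius-$\tfrac12$ balls to the volume of the inflated box. The only difference is cosmetic: you work directly with open balls (which are disjoint even for points at distance exactly $1$), cleanly sidestepping the $(1-\delta)$-packing and limit $\delta\to 0$ that the paper uses to reconcile the strict inequality in the definition of a packing with the non-strict inequality in the definition of an embedding.
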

\begin{proof}
By the definition of an embedding, $R\cap Q$ defines a $(1-\delta)$-packing\footnote{Recall that given a region $R\subseteq \bbR^D$, a set of points $\{p_i\}_{i=1}^P \subseteq R$ is an $\varepsilon$-packing in $R$ if $\|p_i-p_j\| > \varepsilon$ for all $i\neq j$. The packing number $P(R,\varepsilon)$ is then the cardinality of the largest $\varepsilon$-packing in $R$. In our case, a qubit embedding defines a $(1-\delta)$-packing for all $\delta > 0$. Note that the $\delta$ is necessary since packings are defined with strict inequality while embeddings are not.} in $R$ for any $\delta > 0$. The $\varepsilon$-packing number $P(R,\varepsilon)$ of any subset $R\subseteq \mathbb{R}^D$ satisfies
\begin{align}
P(R,\varepsilon) \le \left(\frac{2}{\varepsilon}\right)^D\frac{\mathrm{vol}(R+\frac{\varepsilon}{2}B_D)}{\mathrm{vol}\left(B_D\right)},
\end{align}
where $B_D$ is the unit ball in $\mathbb{R}^D$, and where the sum of regions is the Minkowski sum. The inequality above follows since the balls $B(p_i,\varepsilon/2)$ of radius $\varepsilon/2$ centered around the points $\{p_i\}_{i=1}^P$ of the packing are disjoint, and we have
\begin{align}
\bigsqcup_{i=1}^{P(R,\varepsilon)}B(p_i,\varepsilon/2) \subset R+\frac{\varepsilon}{2}B_D.
\end{align}
Taking the volumes of both sides gives the inequality. Our region of interest is a rectangle, and we have $\varepsilon = 1-\delta$. Then we have
\begin{align}
\mathrm{vol}\left(R+\frac{1-\delta}{2}B_D\right) \le \mathrm{vol}\left(R+\frac{1}{2}B_D\right) \le \prod_{i=1}^D\left(1+L_i\right).
\end{align}
Combining everything, we get
\begin{align}
|R\cap Q| &\le P(R,1-\delta)\\
&\le (1-\delta)^{-D}\frac{2^D}{\mathrm{vol}(B_D)}\mathrm{vol}\left(R+\frac{1-\delta}{2}B_D\right)\\ &\le (1-\delta)^{-D}\frac{2^D}{\mathrm{vol}(B_D)}\prod_{i=1}^D(1+L_i).
\end{align}
Since the inequality holds for all $\delta > 0$, the result follows.
\end{proof}

The second lemma, Lemma~\ref{lem:tiling}, utilizes the probabilistic method to generate a grid tiling that allows us to maintain a convenient distribution of the qubits and bad interactions in our embeddings (see Figure~\ref{fig:tiling}). 
\begin{lemma}[Tiling Lemma]
\label{lem:tiling}
Let $X,Y \subseteq \mathbb{R}^D$ be two multi-sets. Let $w$ and $\ell$ be positive integers with $w \ge 4\ell$. There exists a tiling of $\mathbb{R}^D$ using hypercubes of side length $w$ such that:
\begin{enumerate}
    \item at most a $(4\ell D/w)^2$ fraction of points in $X$ are within $\ell^\infty$-distance $2\ell$ of a codimension-$2$ face of some hypercube,
    \item at most a $8\ell D/w$ fraction of points in $Y$ are within $\ell^\infty$-distance $2\ell$ of a codimension-$1$ face of any hypercube.
\end{enumerate}
\end{lemma}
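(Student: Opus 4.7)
The plan is to invoke the probabilistic method via a uniformly random shift of a grid tiling. I would sample $\mathbf{v} = (v_1, \dots, v_D)$ uniformly from $[0, w]^D$ and consider the tiling of $\mathbb{R}^D$ by hypercubes $\prod_{i=1}^D [v_i + k_i w, v_i + (k_i+1) w]$ for $(k_1, \dots, k_D) \in \mathbb{Z}^D$. The codimension-$j$ faces of this tiling are the intersections of $j$ of the boundary hyperplanes $H_{i,k} = \{x : x_i = v_i + k w\}$.

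For the codimension-1 part, I would fix $p \in Y$ and an axis $i$. Because $v_i$ is uniform on $[0,w]$ and $w \ge 4\ell$, the distance $\min_k |p_i - v_i - k w|$ is uniform on $[0, w/2]$, so $p$ is within $\ell^\infty$-distance $2\ell$ of some $H_{i,k}$ with probability exactly $4\ell/w$. A union bound over the $D$ axes gives probability at most $4\ell D / w$ that $p$ lies within $2\ell$ of any codimension-1 face. By linearity of expectation the expected fraction of $Y$ within $2\ell$ of some codimension-1 face is at most $4\ell D/w$, and Markov's inequality then bounds the probability that this fraction exceeds $8\ell D/w$ by $1/2$.

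For the codimension-2 part, each codimension-2 face is determined by two hyperplanes $H_{i,k}$, $H_{j,k'}$ perpendicular to distinct axes $i \ne j$. For fixed $p \in X$ and fixed $(i,j)$, independence of $v_i$ and $v_j$ gives probability $(4\ell/w)^2$ that $p$ is within $\ell^\infty$-distance $2\ell$ of such a face. Union bounding over the $\binom{D}{2} \le D^2/2$ pairs shows that the expected fraction of $X$ close to a codimension-2 face is at most $\tfrac{1}{2}(4\ell D/w)^2$, and Markov's inequality gives that the fraction exceeds $(4\ell D/w)^2$ with probability at most $1/2$.

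Combining the two Markov bounds by a union bound yields a shift $\mathbf{v}$ simultaneously satisfying both conditions. The only mild concern is that the two Markov failure probabilities sum to $1/2 + 1/2$; however, for $D \ge 2$ both of the expectation bounds above are strict (the single-point probabilities satisfy inclusion-exclusion strictly), so the two failure probabilities sum strictly below $1$ and the desired tiling exists. I do not anticipate any real obstacle --- the argument is a routine random-shift computation, and the main care is just in tracking the factors of $D$ and $\ell/w$ in the two union bounds.
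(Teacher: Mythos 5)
Your proof is correct and follows essentially the same route as the paper: sample a uniformly random shift of the grid, bound the per-point ``bad'' probability via independence of coordinates and a union bound over axes (resp.\ axis pairs), apply Markov's inequality to each of $X$ and $Y$, and combine. The paper glosses over the $1/2+1/2$ slack you flag; your remark that the union bounds are strict (already $\binom{D}{2}<D^2/2$ for the codimension-2 count) patches it, though a cleaner way out is to note that
$E\left[\frac{Z_X}{a_X}+\frac{Z_Y}{a_Y}\right]\le \frac12+\frac12=1$
directly guarantees a shift with $Z_X\le a_X$ and $Z_Y\le a_Y$ simultaneously, where $Z_\bullet$ are the bad-point counts and $a_\bullet$ the thresholds in the lemma.
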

\begin{proof}Let $x \in \mathbb{R}^D$ be a point sampled from the uniform distribution on $[0,w]^D$. Consider the tiling of $\mathbb{R}^D$ using side length $w$ hypercubes with origin located at $x$. Say that a point $X$ is bad if it is within $\ell^\infty$-distance $2\ell$ of a codimension-$2$ face of a hypercube. Say that a point $Y$ is bad if it is within $\ell^\infty$-distance $2\ell$ of some codimension-$1$ face of a hypercube. The probability that a point of $X$ is bad is bounded above by $2D^2(2\ell/w)^2$. The probability of a point of $Y$ being bad is bounded above by $4\ell D/w$. The expected number of bad $X$ points is then at most $2(2\ell D/w)^2|X|$, and the expected number of bad $Y$ points is at most $(4\ell D/w)|Y|$. By Markov's inequality, the probability that $X$ has more than $(8\ell D/w)|X|$ bad points is less than $1/2$. Likewise, the probability that $Y$ has more than $4(2D\ell/w)^2|Y|$ bad points is less than $1/2$. It follows that there exists some choice of $x \in \mathbb{R}^D$ such that there are at most a $(4\ell D/w)^2$ fraction of bad points in $X$, and at most a fraction $8\ell D/w$ of bad points in $Y$.
\end{proof}

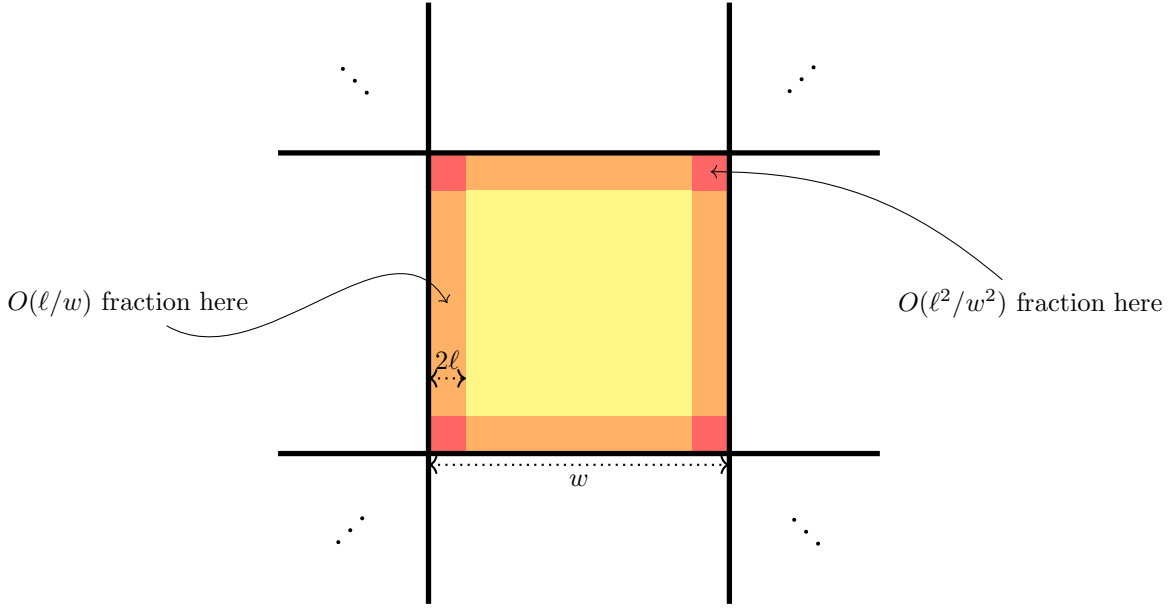
\begin{figure}
    \centering
    \begin{tikzpicture}[scale=1]
        \draw[draw=none,fill=orange!60] (-2.,-2.) rectangle (2,2);
        \draw[draw=none,fill=yellow!60] (-1.5, -1.5) rectangle (1.5, 1.5);
        \draw[draw=none,fill=red!60] (-2,-2) rectangle (-1.5,-1.5);
        \draw[draw=none,fill=red!60] (-2,2) rectangle (-1.5,1.5);
        \draw[draw=none,fill=red!60] (2,-2) rectangle (1.5,-1.5);
        \draw[draw=none,fill=red!60] (2,2) rectangle (1.5,1.5);
        \draw[line width=2pt] (-2,-4) -- (-2, 4);
        \draw[line width=2pt] (2,-4) -- (2, 4);
        \draw[line width=2pt] (-4,-2) -- (4, -2);
        \draw[line width=2pt] (-4,2) -- (4,2);
        \node[rotate=45] at (3,3) {\Large $\cdots$};
        \node[rotate=45] at (-3,-3) {\Large $\cdots$};
        \node[rotate=135] at (-3,3) {\Large $\cdots$};
        \node[rotate=135] at (3,-3) {\Large $\cdots$};

        \draw[<->, dotted, thick] (-2, -2.15) -- (2, -2.15) node [midway, below] {$w$};
        \draw[<->, dotted, thick] (-2., -1) -- (-1.5, -1) node [midway, above] {$2\ell$};

        \node[] (corner-label) at (6, 0) {$O(\ell^2/w^2)$ fraction here};
        \node[] (side-label) at (-6,0) {$O(\ell/w)$ fraction here};
    \draw[->] (corner-label) to[out=140,in=0] (1.75,1.75);
    \draw[->] (side-label) to[out=330,in=120] (-1.75,0);
    \end{tikzpicture}
\caption{Tiling Lemma: for fixed sets of points $X$ and $Y$ and a random width-$w$ grid tiling, we expect a $O(\ell^2/w^2)$ fraction of $X$ to be within a $O(\ell)$ of a grid codimension-2 face, and a $O(\ell/w)$ fraction of $Y$ to be within $O(\ell)$ of a codimension-1 face.}
\label{fig:tiling}
\end{figure}

\section{Proof of Theorem~\ref{thm:main-d}}
\label{sec:main-d}
We now prove Theorem~\ref{thm:main-d}, which covers the $d\ge k$ case of Theorem~\ref{thm:main}, our lower bound for subsystem codes. This also covers the $d\ge \sqrt{kn}$ case of Theorem~\ref{thm:stab}, our generalization of \cite{dai2024locality} to $D$-dimensions.

\begin{theorem*}[Theorem~\ref{thm:main-d}, restated]
    \thmmaind
\end{theorem*}

As mentioned after the statement of Lemma~\ref{lem:correctable}, the Union Lemma for subsystem codes is substantially weaker than the corresponding result for commuting projector codes. Without the ability to conclude that the union of correctable sets remains correctable, we cannot directly generalize the techniques previously employed in the proofs of the generalized BPT bound, which required alternating applications of the expansion and union lemmas~\cite{baspin2022connectivity,dai2024locality}. This poses a challenge for subsystem codes since we only obtain a dressed-cleanable set after the union lemma, and there is no straightforward way to continue with the expansion lemma, which requires correctable sets.

In view of these challenges, we take an alternative approach to the proof of Theorem~\ref{thm:main-d} by repeatedly --- and exclusively --- applying the expansion lemma to a carefully crafted subset of qubits in order to grow our correctable region. To do this, we grow our correctable region by sweeping across our set of qubits $Q\subseteq \bbR^D$ one dimension at a time, changing directions and moving into a new dimension whenever the expansion process in unable to continue in the previous dimensions. The brunt of the proof is showing that this process never gets stuck, and that we are eventually able to grow our correctable set without obstruction to encompass the entire set of qubits $Q$. In this way, we are able to bypass the usage of the union lemma altogether.

We point out that the usual way of doing this expansion \cite{bravyi2009no, bravyi2010tradeoffs, dai2024locality} --- starting with a $d^{1/D}\times\cdots\times d^{1/D}$ box and repeatedly adding the boundary --- does not work for general $D$-dimensional embeddings. For general embeddings, the qubits may not be constrained to an $O(n^{1/D})\times\cdots\times O(n^{1/D})$ box, so, at some step, the uncontrolled expansion boundaries could contain more than $d$ qubits, in which case we cannot apply the expansion lemma. For example, consider qubits embedded in 2-dimensions in a $n^{2/3}\times n^{2/3}$ square, with $\Omega(n)$ qubits distributed within $n^{1/3}$ of the box's boundary, and the remainder randomly distributed in the square. If we expand a rectangle from anywhere inside the square, applications of the expansion lemma fail when we approach the boundary. The easiest $D$-dimensional embeddings to realize are ones on a lattice structure contained in a $O(n^{1/D})\times\cdots\times O(n^{1/D})$ box; our general statement shows that more creative embeddings like the one above cannot save on locality.

The general expansion process is quite involved in $D$-dimensions, so we will first present an extended exposition of the $2$-dimensional case as an illustration of the basic idea of the proof. The full proof of Theorem~\ref{thm:main-d} will follow in the next subsection, and is similar in spirit to the $2$-dimensional case, although more involved in terms of notation and technique. The reader who is only interested in the general case may skip ahead to Section~\ref{sec:d-dim-proof}.

\subsection{Detailed Sketch of Theorem~\ref{thm:main-d} in $2$-dimensions}

In this section, we give a detailed sketch of the proof of Theorem~\ref{thm:main-d} for the case of $2$-dimensional embeddings. This is meant as a simplified illustration of the proof idea in the general $D$-dimensional case, which may be difficult to visualize.

Let $Q$ be a $2$-dimensional embedding of a subsystem code $\calC = [[n,k,d]]$ satisfying $d\ge 32\sqrt{n}$. Without loss of generality, we may assume that all of our qubits are contained in the interior of a large square $[\ell,A-\ell]\times [\ell,A-\ell]$, for some integer $A \in \bbN$. We remove a border of width $\ell$ so that we do not have to worry about edge cases later in our proof. Suppose for the sake of contradiction that there exists at most $d/4$ interactions of length at least $\ell = \frac{d}{32\sqrt{n}}$. Note that our choice of code parameters ensures that we have $1 \leq \ell \leq \sqrt{n}/32$. We will call any qubit participating in a length $\ge \ell$ interaction a bad qubit. Let $B$ be the set of all bad qubits. Then by assumption, $|B| \le d/2$. Given any region $R\subset \mathbb{R}^2$, we will say that $R$ is correctable if and only if $Q \cap R$ is correctable.

The basic idea of the proof is as follows. We wish to show that, given our assumption on the number of long interactions, a small correctable region can be iteratively grown without bound using the expansion lemma. Eventually the correctable region will encompass the entire set of qubits $Q$, which is a contradiction with our initial assumption that the code is non-trivial. We do this by first expanding along the $x_1$-direction, and then switching to the $x_2$-direction whenever we are unable to continue in the $x_1$-direction. The details for the two cases are given below.

\subsubsection{Expansion in the $x_1$-direction}

Let us imagine starting with a region of the form of a vertical strip $V[a_1] = [0,a_1]\times \bbR$, for some $a_1>0$. If $a_1<\ell$, then $V[a_1]$ contains no qubits by assumption, so it is vacuously correctable. Now, given an existing correctable region of the form $V[a_1]$, we wish to apply the expansion lemma to obtain a larger correctable set of the form $V[a_1+\ell]$. This will be possible provided that the number of qubits in the boundary of $V[a_1]$ is sufficiently small so that the boundary set itself is correctable. We will formalize this requirement by saying that a number $a_1 \in \bbR$ is ``good'' if the density of qubits around the line $x=a_1$ is low. Formally, $a_1 \in \bbR$ is a \emph{good} $x_1$-coordinate if the set $[a_i-\ell,a_1+\ell]\times \bbR$ contains at most $\ell\sqrt{n}$ qubits, and \emph{bad} otherwise. 

The boundary of $V[a_1]$ consists of all qubits within a distance $\ell$ of the line $x=a_1$, together with a subset of the bad qubits. Therefore the boundary is a subset of $B \cup [a_1-\ell,a_1+\ell]\times\bbR$. If $a_1$ is good, then by definition this subset contains at most 
\begin{align}
d/2 + \ell\sqrt{n} =  d/2 + d/32 < d
\end{align}
qubits, and is hence correctable. It follows by expansion and subset closure that if $V[a_1]$ is correctable and $a_1$ is good, then $V[a_1+\ell] \subset V[a_1] \cup \partial V[a_1]$ is also correctable. This gives us an easy way to grow the sets $V[a_1]$. However, this process cannot continue indefinitely, since there is no guarantee at each step that the new coordinate $a_1+\ell$ is good. When $a_1+\ell$ is a bad coordinate, our expansion process gets stuck. To get unstuck, we will fill in the stretch around the bad coordinate $a_1+\ell$ by expanding in the $x_2$-direction. After this gap containing bad $x_1$-coordinates has been filled, we can continue expanding in the $x_1$-direction until we reach our next obstacle.

\subsubsection{Stuck in the $x_1$-direction, expand along the $x_2$-direction}

Now we formalize the process of expanding in the $x_2$ direction. Given a bad coordinate $a_1$, we will define $\nxt(a_1)$ to be the ``next available'' good coordinate. More precisely, we define
\begin{align}
\nxt(a_1) = \inf G_{>a_1} + \gamma,
\end{align}
where $G_{>a_1}$ is the set of all good coordinates larger than $a_1$, and $\gamma>0$ is a sufficiently small value so that we actually have $\nxt(a_1) \in G_{>a_1}$.\footnote{The small constant $\gamma$ is necessary since the set of good coordinates is an open set, and as such does not contain its own infimum. A bit of thought reveals that it is sufficient to take $\gamma$ smaller than the minimum element of $\bigcup_{q\neq q'} \{|q_1-q_1'|, ||q_1-q_1'|-2\ell|\}$, where $q_1,q_1'$ denote the $x_1$-coordinates of qubits $q,q'\in Q$.} Given a set $V[a_1]$ where $a_1$ is good but $a_1+\ell$ is bad, let us define $V[a_1,a_2]$ to be
\begin{align}
V[a_1,a_2] = V[a_1]\cup ([a_1,\nxt(a_1+\ell)]\times[0,a_2]) = ([0,a_1]\times \bbR) \cup ([a_1,\nxt(a_1+\ell)]\times[0,a_2]).
\end{align}
We will call sets $V[a_1,a_2]$ defined this way \emph{legal} sets. Our goal is to show that given a correctable legal set $V[a_1,a_2]$, we can always apply expansion in the $x_2$-direction to obtain a new correctable legal set $V[a_1,a_2+\ell]$.

The key observation now is that the additional block $[a_1,\nxt(a_1+\ell)]\times[0,a_2]$ in $V[a_1,a_2]$ must be thin in the $x_1$-direction. Indeed, the entire interval $[a_1+\ell,\nxt(a_1+\ell)-\gamma]$ consists of bad coordinates. Any strip of width $2\ell$ centered around a bad coordinate contains at least $\ell\sqrt{n}$ qubits, and since we only have a total of $n$ qubits, the set $[a_1+\ell,\nxt(a_1+\ell)-\gamma]\times \bbR$ can fit at most $\floor{\sqrt{n}/\ell}$ such strips inside. It follows that we have
\begin{align}
\nxt(a_1+\ell) - \gamma - a_1 -\ell < 2\ell\left\lfloor\frac{\sqrt{n}}{\ell}\right\rfloor + 2\ell,
\end{align}
which implies
\begin{align}
\nxt(a_1+\ell) - a_1 < 2\sqrt{n} + 4\ell.
\end{align}
Now, consider the boundary of a correctable legal set $V[a_1,a_2]$. This will be a subset of 
\begin{align}
B \cup \underbrace{[a_1-\ell,a_1+\ell]\times \bbR}_{S_1} \cup \underbrace{[\nxt(a_1+\ell)-\ell,\nxt(a_1+\ell)+\ell]\times \bbR}_{S_2}  \cup \underbrace{[a_1,\nxt(a_1+\ell)]\times [a_2-\ell,a_2+\ell]}_{S_3}.
\end{align}
Since $a_1$ and $\nxt(a_1+\ell)$ are good coordinates, we have
\begin{align}
|Q\cap S_1| + |Q\cap S_2| \le 2\ell\sqrt{n}
\end{align}
by assumption. By Lemma~\ref{lem:packing}, the number of qubits in the subset $S_3$ is bounded above by
\begin{align}
|Q\cap S_3| \le \frac{4}{\pi}(2\ell+1)(2\sqrt{n}+4\ell+1) \le 9\ell\sqrt{n}.
\end{align}
Therefore the boundary of $V[a_1,a_2]$ contains at most
\begin{align}
|B| + |Q\cap S_1| + |Q\cap S_2| + |Q\cap S_3| \le \frac{d}{2} + 11\ell\sqrt{n} = \frac{27}{32}d < d
\end{align}
qubits, and is hence correctable. It follows by expansion and subset closure that if $V[a_1,a_2]$ is a correctable legal set, then $V[a_1,a_2+\ell]$ is again a correctable legal set. We can therefore continue expanding in the $x_2$-direction until we reach $V[a_1,A] = V[\nxt(a_1+\ell)]$. This is a vertical strip with a good boundary coordinate $\nxt(a_1+\ell)$. We can therefore return to the previous case and proceed to expand along the $x_1$-direction again. This process can now continue indefinitely, alternating between the $x_1$ and $x_2$-directions whenever we get stuck again. In this way, starting from an initial correctable set, say the vacuously correctable region $V[\ell/2]$, we can iteratively grow our correctable region without bound to encompass the entire set of qubits $Q$. Thus we arrive at our desired contradiction.

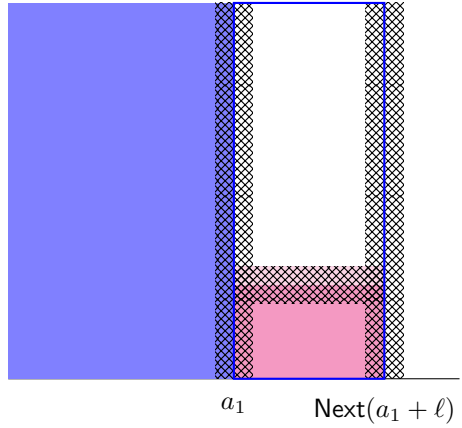
\begin{figure}
    \centering
    \begin{tikzpicture}[xscale=-1,yscale=-1]
        \draw (0,5)--(6,5);

        \draw [fill=blue!50,draw=none] (3,0) rectangle (6,5); 
      \draw [fill=magenta!20, draw=none] (1,3.5) rectangle (3,4);
        \draw[fill=magenta!50, draw=none](1,3.75) rectangle (3,5);
        \draw [pattern=crosshatch,pattern color=black, draw=none] (1,3.5) rectangle (3,4);

        \draw[fill=none,pattern=crosshatch, pattern color=black,draw=none](0.75,0) rectangle (1.25,5);

        \draw[fill=none,pattern=crosshatch, pattern color=black,draw=none](2.75,0) rectangle (3.25,5);

        \draw[draw=blue,thick] (1,0) rectangle (3,5);

        \node[label={-90:$\nxt(a_1+\ell)$}] at (1,5) {};
        \node[label={-90:$a_1$}] at (3,5) {};
    \end{tikzpicture}
    \caption{Sketch of the expansion process in 2 dimensions. The blue region is $V[a_1]$ and the pink region is $[a_1,\nxt(a_1+\ell)]\times [0,a_2]$. The crosshatched region (labeled $F$ in the proof of Theorem~\ref{thm:main-d}) contains the boundary of $V[a_1,a_2]$.}
    \label{fig:blue_boundary}
\end{figure}

\subsection{Proof of Theorem~\ref{thm:main-d} in $D$-dimensions}\label{sec:d-dim-proof}

For notational convenience, we grow the correctable region from low coordinates to high coordinates for this proof. For a set $S$, we write $S^{\le D}=S\cup S^2\cup\cdots\cup S^D$.

\begin{proof}
With hindsight, set $c_1=2(6^DD/\vol(B_D))^{D/(D-1)}$. Suppose $n,k,d$ are code parameters with $d\ge c_1 n^{\frac{D-1}{D}}$. With hindsight, choose 
\begin{align}
\ell = \frac{\vol(B_D)}{6^DDn^{(D-1)/D}} d,
\end{align}
so that $1<\ell< \frac{n^{1/D}}{4^DD}$. Let $Q$ be a $D$-dimensional embedding of a $[[n,k,d]]$ subsystem code $\calC$ with $d\ge k$, and suppose there are at most $d/4$ interactions of length at least $\ell$.

Let $B$ denote the set of qubits participating in long range interactions, so that $|B|\le d/2$.
Without loss of generality, we may assume all the qubits are in the box $[\ell,A-\ell]^D$ for some large integer $A$.
Choose a sufficiently small constant $\gamma>0$.
With hindsight, $\gamma$ less than the minimum nonzero element of $\bigcup_{i\in[D], q\neq q'} \{|q_i-q_i'|, ||q_i-q_i'|-2\ell|\}$ suffices.

For $i=1,\dots,D-1$, call real number $x_i\in\mathbb{R}$ \emph{$i$-good} if $\mathbb{R}^{i-1}\times[x_i-\ell,x_i+\ell]\times\mathbb{R}^{D-i}$ has at most $\ell n^{(D-1)/D}$ qubits and \emph{$i$-bad} otherwise.
By convention, we will consider every real number to be $i$-good when $i=D$.
An $i$-bad number represents an $i$-th-coordinate-value where we would ``get stuck'' and need to start expanding in a different direction.

For $i=1,\dots,D-1$ and an $i$-bad real number $x$, let $\nxt_i(x)$ be the maximum value such that all values in the interval $[x,\nxt_i(x)-\gamma]$ are $i$-bad ($\nxt_i(x)$ is undefined if $x$ is $i$-good).
Note, as $\gamma$ is sufficiently small, that $\nxt_i(x)$ is always $i$-good when it is defined.

For $a_1,\dots,a_i\in\mathbb{R}$, with $i \le D$, let
\begin{align}
  V[a_1,\dots,a_i]
    &= [0,a_1]\times\mathbb{R}^{D-1} \\
    &\ \quad\cup [a_1,\nxt_1(a_1+\ell)]\times[0,a_2]\times \mathbb{R}^{D-2} \\
    &\ \quad\cup [a_1,\nxt_1(a_1+\ell)]\times[a_2,\nxt_2(a_2+\ell)]\times [0,a_3]\times \mathbb{R}^{D-3} \\
    &\ \quad\qquad\qquad\qquad\qquad \vdots\qquad\qquad \vdots \qquad\qquad \vdots \\
    &\ \quad\cup [a_1,\nxt_1(a_1+\ell)]\times \cdots\times[a_{i-1},\nxt_{i-1}(a_{i-1}+\ell)]\times [0,a_i]\times \mathbb{R}^{D-i}.
\end{align}
Call such a set $V[a_1,a_2,\dots,a_{i}]$ \emph{legal} if (i) $a_j$ is $j$-good for all $j\in\{1,\dots,i\}$ and (ii) $a_j+\ell$ is $j$-bad for $j\in\{1,\dots,i-1\}$. We grow a large correctable set of the above form using the following four properties:
\begin{enumerate}
  \item\label{enum:part1} (If possible, expand in $i$th dimension): \emph{If $i\le D$, the set $V[a_1,\dots,a_{i}]$  is correctable and legal, and $V[a_1,\dots,a_{i-1},a_{i}+\ell]$ is legal, then $V[a_1,\dots,a_{i-1},a_i+\ell]$ is correctable.}

    We apply the expansion lemma. Note that the boundary is a subset of
    \begin{align}
      F
      &\defeq B \cup \left([a_1-\ell,a_1+\ell]\times\mathbb{R}^{D-1}\right) \cup \left([\nxt_1(a_1+\ell)-\ell,\nxt_1(a_1+\ell)+\ell]\times\mathbb{R}^{D-1}\right) \\
    &\qquad\qquad\qquad\qquad\qquad\qquad\qquad \vdots\qquad\qquad \vdots \qquad\qquad \vdots \\
    &\cup \left(\mathbb{R}^{i-2}\times [a_{i-1}-\ell,a_{i-1}+\ell]\times \mathbb{R}^{D-i+1}\right)\cup \left(\mathbb{R}^{i-2}\times [\nxt_{i-1}(a_{i-1}+\ell)-\ell,\nxt_{i-1}(a_{i-1}+\ell)+\ell]\times\mathbb{R}^{D-i+1}\right) \\
    &\cup \left(\mathbb{R}^{i-1}\times [a_i-\ell,a_i+\ell]\times \mathbb{R}^{D-i}\right) .
      \label{}
    \end{align}
    Since $V[a_1,\dots,a_i]$ is legal, the values $a_j$ and $\nxt(a_j+\ell)$ are $j$-good for $j\in\{1,\dots,i\}$ and $j\in\{1,\dots,i-1\}$, respectively. By definition of $i$-good, each set in the above union, other than $B$, has at most $\ell n^{(D-1)/D}$ qubits. Thus, $F$ has at most $d/2 + (2D-1)\ell n^{1/D} < d$ qubits, so $F$ is correctable. It follows that $V[a_1,\dots,a_i]\cup F$ is correctable, and by Subset Closure, $V[a_1,\dots,a_{i-1},a_i+\ell]$ is correctable.

  \item\label{enum:part2} (Stuck in $i$-th dimension, start in $(i+1)$-th-dimension): \emph{If $V[a_1,a_2,\dots,a_{i}]$ is correctable and legal, and $V[a_1,\dots,a_{i-1},a_i+\ell]$ is not legal, then $V[a_1,\dots,a_{i},0]$ is correctable and legal.}

    The set $V[a_1,\dots,a_{i},0]$ is correctable by definition as $V[a_1,\dots,a_{i},0]=V[a_1,\dots,a_{i}]$. It is legal also by definition, as we assume $a_i$ is $i$-good but $a_i+\ell$ is $i$-bad, and 0 is trivially $(i+1)$-good. 

  \item\label{enum:part3} (Expand in $D$-th dimension): \emph{If $V[a_1,a_2,\dots,a_{D}]$ is correctable and legal, then $V[a_1,\dots,a_{D-1},a_{D}+\ell]$ is correctable and legal.}

    It is legal as every real number is $D$-good by definition.
    For correctable, we again use the expansion lemma.
    Following part~\ref{enum:part1}, the boundary is a subset of
    \begin{align}
      F
      &\defeq B \cup \left([a_1-\ell,a_1+\ell]\times\mathbb{R}^{D-1}\right) \cup \left([\nxt_1(a_1+\ell)-\ell,\nxt_1(a_1+\ell)+\ell]\times\mathbb{R}^{D-1}\right) \\
    &\qquad\qquad\qquad\qquad\qquad\qquad\qquad \vdots\qquad\qquad \vdots \qquad\qquad \vdots \\
    &\cup \left(\mathbb{R}^{D-2}\times [a_{D-1}-\ell,a_{D-1}+\ell]\times \mathbb{R}\right)\cup \left(\mathbb{R}^{D-2}\times [\nxt_{D-1}(a_{D-1}+\ell)-\ell,\nxt_{D-1}(a_{D-1}+\ell)+\ell]\times\mathbb{R}\right) \\
    &\cup [a_1,\nxt_1(a_1+\ell)]\times \cdots\times[a_{D-1},\nxt_{D-1}(a_{D-1}+\ell)]\times [a_D-\ell,a_D+\ell].
    \end{align}
    As in part~\ref{enum:part1}, we have $|B|\le d/2$, and all but the last set in the union above has size at most $\ell n^{(D-1)/D}$. We now bound the size of the last set in $F$. Since $[a_j+\ell,\nxt_j(a_j+\ell)-\gamma]$ is $j$-bad for all $j$, a counting argument yields \begin{align}\nxt_j(a_j+\ell)-(a_j+\ell)\le 2n^{1/D}+2\ell.\end{align}
    To see this, pack strips of width $2\ell$ in the $j$th dimension into $\mathbb{R}^{j-1}\times[a_j+\ell,\nxt_{j}(a_j+\ell)]\times\mathbb{R}^{D-j}$. Each strip has at least $\ell n^{1/D}$ qubits by definition of being $j$-bad. There are at most $n$ qubits, so there are at most $n/(\ell n^{(D-1)/D})$ packed strips, so the total width satisfies
    \begin{align}
    \nxt_j(a_j+\ell)-(a_j+\ell)\le \frac{2\ell n}{\ell n^{(D-1)/D}}+2\ell=2n^{1/D} + 2\ell.
    \end{align}
    We conclude $\nxt_j(a_j+\ell)-a_j \le 2n^{1/D}+3\ell$. Hence, the last box has at most
    \begin{align}
    (2n^{1/D}+3\ell+1)^{D-1} (2\ell+1) < \frac{6^D}{\vol(B_D)} \ell n^{(D-1)/D}
    \end{align}
    qubits inside by Lemma~\ref{lem:packing}. The total boundary thus has at most 
    \begin{align}
    \frac{d}{2} + (2D-2)\ell n^{(D-1)/D} + \frac{6^D}{\vol(B_D)}\ell n^{(D-1)/D}  < d
    \end{align}
    qubits. It follows that $F$ is correctable, so $V[a_1,\dots,a_D]\cup F$ is correctable, and by Subset Closure, $V[a_1,\dots,a_{D-1}, a_D+\ell]$ is correctable.

  \item \label{enum:part4} (Finish $(i+1)$-th-dimension, get unstuck in $i$-th dimension): \emph{If $V[a_1,\dots,a_{i+1}]$ is correctable and legal, and $A-\ell\le a_{i+1}<A$, then $V[a_1,\dots,a_{i-1},\nxt_i(a_{i}+\ell)]$ is correctable}.

    The set is correctable because $V[a_1,\dots,a_{i-1},\nxt_i(a_{i}+\ell)]=V[a_1,\dots,a_i,\infty]=V[a_1,\dots,a_{i+1}]$.
    It is legal because $\nxt_i(a_i)$ is not $j$-bad by definition of $\nxt_i$.
\end{enumerate}
We can repeatedly apply these properties to get that the set of all qubits is correctable by induction.
Here are the details.
Let $\vec t_1,\vec t_2,\dots,$ be the enumeration of the $A_{\text{tot}}=(A+1)+(A+1)^2+\cdots+(A+1)^D$ tuples in $\{0,1,2, \dots,A\}^{\le D}$, in lexicographical order. 
Define the \emph{lexicographical index} of a region $V[a_1,\dots,a_i]$ as the largest $\alpha$ such that $t_\alpha$ is lexicographically less than or equal to $(a_1,\dots,a_i)\in\mathbb{R}^{\le D}$.
We prove by induction that, for all $r\le A_{\text{tot}}$, there exists a correctable and legal set with lexicographical index at least $r$.
For the base case, $V[0]=\emptyset$ is clearly correctable and legal. 
For the induction step, suppose we have a correctable and legal set $V[a_1,\dots,a_i]$ with lexicographical index $r$.
The above items shows that we can find a region with strictly larger lexicographical index:
If $a_i\ge A$, either $i=1$, in which case we are done, or $i\ge 2$ and we apply item~\ref{enum:part4} --- the lexicographical index increases because $\nxt_i(a_i+\ell)-a_i\ge \ell\ge 1$.
  Otherwise, if $i=D$, we apply item~\ref{enum:part3}.
  Otherwise, if $V[a_1,\dots,a_i+\ell]$ is legal, we apply item~\ref{enum:part1}, and if not, we apply item~\ref{enum:part2}.
  This completes the induction.

  Since the entire set of qubits $Q$ is correctable, an application of the AB Lemma~\eqref{lem:ab} with $A=Q$ and $B=\emptyset$ implies that $k = 0$. This contradicts our assumption that the code is nontrivial.
  \end{proof}

\section{Proof of Theorem~\ref{thm:main-k}}
We now prove Theorem~\ref{thm:main-k}, which covers the $k\ge d$ case of Theorem~\ref{thm:main}, our lower bound for subsystem codes.

\label{sec:main-k}
\begin{theorem*}[Theorem~\ref{thm:main-k}, restated]
    \thmmaink
\end{theorem*}

First, we prove two lemmas that help us find large correctable sets in a $D$-dimensional embedding of a quantum code. The first is a generalization of the ``holographic principle'' for error correction in \cite{bravyi2011subsystem}, which shows that the area, rather than the volume, governs the size of correctable sets. Recall that $f_{\ge\ell}(V)$ counts the number of interactions involving qubits in $V$ with length at least $\ell$ (see Definition~\ref{def:counter}).

\begin{lemma}[Holographic Principle]
    \label{lem:correctable-cube}
    Suppose we have a $[[n,k,d]]$ quantum code (either commuting projector or subsystem) with an embedding $Q \subset \mathbb{R}^D$, and suppose $\ell\le \frac{1}{8\sqrt{D}}d^{1/D}$. Let $V \subset Q$ be the subset of qubits contained in a box with sides of length at most 
    \begin{align}
        w_0\defeq \left(\frac{\vol(B_D)}{2\cdot 4^{D+1}D}\cdot\frac{d}{\ell}\right)^{\frac{1}{D-1}}.
    \end{align} 
    If $f_{\ge\ell}(V) \leq d/10$, then $V$ is correctable. 
\end{lemma}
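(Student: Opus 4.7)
The plan is to apply a random tiling to the box containing $V$ and decompose the qubits of $V$ into a \emph{boundary set} $T$ and a collection of \emph{interior regions} $\{C_\tau\}$ (one per tile $\tau$), with $|T| < d$ and each $|C_\tau| < d$, and then iteratively apply the Expansion Lemma to combine these into a correctable superset of $V$.

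Specifically, I would invoke the Tiling Lemma (Lemma~\ref{lem:tiling}) with $X = Y$ equal to the qubits of $V$ and sub-cube side length $w$ to be chosen, obtaining a grid tiling in which at most an $8\ell D/w$ fraction of $V$'s qubits lie within $\ell^\infty$-distance $2\ell$ of a codimension-$1$ face. Let $T$ be the union of (i) these face-adjacent qubits, (ii) any qubits of $V$ within $\ell$ of the boundary of its containing box (to absorb short interactions with qubits outside $V$), and (iii) the at most $d/5$ endpoints of the at most $d/10$ long-range interactions counted by $f_{\ge\ell}(V)$. For each tile $\tau$, let $C_\tau$ be the qubits of $V$ in $\tau$ that are not in $T$. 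By the Point Density Lemma (Lemma~\ref{lem:packing}), $|V|$ is at most a constant times $w_0^D$ and each $|C_\tau|$ is at most a constant times $w^D$. For an appropriate choice of $w$ --- large enough to keep the face-adjacent contribution to $|T|$ small, yet small enough to make each $|C_\tau| < d$ --- we obtain $|T| < d$ and $|C_\tau| < d$, so that $T$ and each $C_\tau$ are correctable by the Distance Property.

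Next, I would verify the key decoupling property: for each tile $\tau$, the outer boundary $\partial C_\tau$ is contained in $T$. Any interaction between a qubit of $C_\tau$ and a qubit $q$ outside $C_\tau$ is either long-range (so $q$ is a long-interaction endpoint, hence in $T$) or short-range and crosses a tile face or the box boundary of $V$, in which case $q$ lies within $\ell$ of the crossed face or boundary and is in $T$ by construction. Starting from the correctable set $T$, I would add the tiles $C_\tau$ one at a time: each addition is a valid Expansion Lemma step, since $C_\tau$ is correctable and the accumulated correctable set contains $T \supseteq \partial C_\tau$. After absorbing all tiles, the resulting correctable set contains $T \cup \bigcup_\tau C_\tau \supseteq V$, and the Subset Closure property yields correctability of $V$. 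A nice feature of this approach is that it uses only the Expansion Lemma, which yields correctability uniformly for both commuting projector and subsystem codes, bypassing the subsystem-weak Union Lemma.

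The main technical obstacle is the simultaneous satisfaction of the two constraints on $w$: tile correctability forces $w$ at most a constant times $d^{1/D}$, while control of $|T|$ forces $w$ at least a constant times $\ell|V|/d$. Combining the Point Density bound $|V| = O(w_0^D)$ with the explicit form $w_0 = \left(\frac{\mathrm{vol}(B_D)}{2\cdot 4^{D+1}D}\cdot \frac{d}{\ell}\right)^{1/(D-1)}$ yields $\ell|V|/d = O(w_0)$, and the hypothesis $\ell \le d^{1/D}/(8\sqrt{D})$ is then used to verify that $w_0$ is at most a small constant times $d^{1/D}$, producing a nonempty feasible range for $w$ where both constraints are simultaneously satisfied.
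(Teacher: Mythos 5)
Your high-level intuition is right on two counts: the proof should use only the Expansion Lemma (so that it applies uniformly to subsystem and commuting projector codes), and the outer boundary of an interior region $C_\tau$ is forced into the $2\ell$-collar near tile faces or into the set of long-interaction endpoints, which is how the paper also controls $\partial_+$. However, the one-shot tiling decomposition cannot be made to work, and the final paragraph of your proposal contains a sign error that masks the problem. The hypothesis $\ell \le d^{1/D}/(8\sqrt{D})$ gives a \emph{lower} bound $w_0 \ge c' d^{1/D}$, not an upper bound: since $\ell$ appears in the denominator of $w_0 = (\mathrm{const}\cdot d/\ell)^{1/(D-1)}$, shrinking $\ell$ makes $w_0$ larger, and in particular when $\ell = \Theta(1)$ one has $w_0 = \Theta(d^{1/(D-1)}) \gg d^{1/D}$. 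So $|V|$ can be as large as $\Theta(w_0^D) \gg d$.

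This makes your two constraints on $w$ incompatible. Correctability of each $C_\tau$ by the Distance Property needs $w \lesssim d^{1/D}$. On the other hand, the face-adjacent contribution to $|T|$ is $\sim (\ell D/w)\,|V| \sim \ell w_0^D / w$, and plugging in $\ell w_0^{D-1} \sim d$ (the defining relation for $w_0$) shows this is $\sim d\cdot(w_0/w)$; keeping it below $d$ forces $w \gtrsim w_0$. When $w_0 \gg d^{1/D}$ these two requirements have empty intersection: you either have a single huge tile whose interior has $\gg d$ qubits (so $C_\tau$ is not correctable), or many small tiles whose accumulated boundary collar $T$ has $\gg d$ qubits (so $T$ is not correctable). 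Intuitively, tiling pays for the \emph{total} surface area of all the tiles at once, which is $\sim \ell w_0^D / w$ and scales unfavorably.

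The paper sidesteps this by an inductive, outward-growing argument rather than a one-shot tiling: starting from a small correctable cube and repeatedly applying the Expansion Lemma to add a single annulus of thickness $2\ell$. Each annulus sits in a slab of thickness $O(\ell)$ around a cube of side $\le w_0$, so it contains $O(\ell w_0^{D-1}) = O(d)$ qubits by exactly the calibration of $w_0$. There are many steps (roughly $w_0/\ell$ of them), but because each step only requires that one annulus be correctable — not their union — the argument never has to bound the total collar volume. Your proposal would need to be restructured along these lines (e.g., by a recursive/peeling scheme) to close the gap.
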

\begin{proof}
    By subset closure, it suffices to prove the claim for a cube. The proof is by induction on $w$, the side length of the cube. The base case is for $w \leq \left(\frac{\vol(B_D)}{2\cdot 4^{D}}d\right)^{1/D}$. Then $V$ contains 
    \begin{equation}
        \frac{2^D}{\vol(B_D)} \cdot \prod_{i=1}^D (1 + w) \leq \frac{4^D}{\vol(B_D)} w^D \leq \frac{d}{2} < d
    \end{equation}
    qubits by Lemma~\ref{lem:packing}, so it is correctable. 

    For the induction step, assume that $\alpha \geq (\frac{\vol(B_D)}{2\cdot 4^{D}}d)^{1/D}$ and that the claim is true for all cubes of side length at most $\alpha$. Now let $w\le w_0$ satisfy $\alpha \leq w \leq \alpha + 2\ell$, and suppose that the qubits in $V$ are contained in a cube $H$ of side length $w$. Note that $2\ell \le \alpha \le w$. Consider two cubes $H^\pm$ with the same center as $H$ and side lengths $w \pm 2\ell$.
    Let $U$ be the set of qubits contained in $H_-$, so that $U$ is correctable by the induction hypothesis. Now let $T$ be the subset of qubits that are either:
    \begin{enumerate}[(i)]
        \item contained in $H\setminus H^-$, 
        \item contained in $H^+ \setminus H$, 
        \item participate in a long interaction with qubits in $U$, 
        \item in $U$ and participate in a long-ranged interaction with qubits in $Q \setminus U$, so that $\partial U \subset T$.
    \end{enumerate} 
    
    We now upper bound the number of qubits of each type. 
    Note that $H\setminus H^-$ can be covered by thin rectangular slabs with one side of length $\ell$ and all other sides at most $w$. The number of slabs required is $2D$. By Lemma~\ref{lem:packing}, the number of type (i) qubits is at most
    \begin{align}
        2D \cdot \frac{2^D}{\vol(B_D)} \cdot (1+\ell)\prod_{i=1}^{D-1} (1+w) &\leq \frac{2^{D+1}D}{\vol(B_D)} (2\ell)(2w)^{D-1} \le \frac{2^{D+1}D}{\vol(B_D)}(2\ell)(2w_0)^{D-1} = \frac{d}{4}.
    \end{align}
    Similarly, $H^+ \setminus H$ can be covered by rectangular slabs with one side of length $\ell$ and all other sides at most $w+2\ell$. Again, $2D$ such slabs are sufficient, so by Lemma~\ref{lem:packing} the number of type (ii) qubits is at most 
    \begin{align}
        2D \cdot \frac{2^D}{\vol(B_D)} \cdot (1+\ell) \prod_{i=1}^{D-1} (1+w+2\ell) \leq \frac{2^{D+1}D}{\vol(B_D)} (2\ell) (w + 3\ell)^{D-1} \leq \frac{2^{D+1}D}{\vol(B_D)} (2\ell) (2w_0)^{D-1} \leq \frac{d}{4},
    \end{align}
    where we've used the fact that $w_0\ge 3\ell$ in the second to last inequality. Finally, the numbers of type (iii) and type (iv) qubits are both at most $f_{\geq \ell}(V) \leq d/10$ by assumption.
    
    Adding up the counts of each type of qubit, we see that
    \begin{equation}
        \abs{T} \leq \frac{d}{4} + \frac{d}{4} + \frac{d}{10} + \frac{d}{10} < d,
    \end{equation} so that $T$ is correctable by the distance property. The expansion lemma then implies that $T \cup U$ is correctable as well. Finally, $V\subseteq T\cup U$ is correctable due to subset closure. This completes the induction step.
\end{proof}

\begin{figure}
    \centering
    \begin{tikzpicture}
        \draw [fill=blue!50,draw=none] (-2,-2) rectangle (2,2); 
        
        \draw[fill=magenta!50, draw=black](-2.5,-2.5) rectangle (2,-2);
        \draw[fill=magenta!50, draw=black](-2.5,-2) rectangle (-2,2.5);
        \draw[fill=magenta!50, draw=black](-2,2) rectangle (2.5,2.5);
        \draw[fill=magenta!50, draw=black](2,2) rectangle (2.5,-2.5);

        \draw[<->, dotted] (-1.8, -2) -- (-1.8, 2) node [midway, right] {$w$};
        \draw[<->, dotted] (-1.8, 2) -- (-1.8, 2.5) node [midway, right] {$\ell$}; 
        \draw[<->, dotted] (-2.5, 1.8) -- (-2, 1.8) node [midway, below] {$\ell$};

        \draw (-2, -2.5) -- (-2, -2);
        \draw (2, -2) -- (2.5, -2);
        \draw (-2.5, 2) -- (-2, 2);
        \draw (2, 2) -- (2, 2.5);
    \end{tikzpicture}
    
    \caption{The covering of the outer boundary $H^+\setminus H$ (magenta) used in the proof of Lemma~\ref{lem:correctable-cube} for $D=2$. The boundary is covered by four rectangles, each of width $\ell$ and length $w+2\ell$. The rectangles overlap at the corners. The generalization to higher dimensions involve a rectangular slab for each face of $H$ (blue).}    
    \label{fig:holographic-partition-2d}
\end{figure}
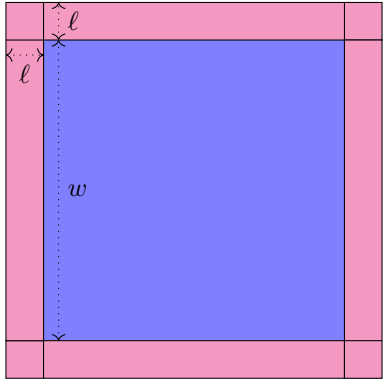

If we divide $\mathbb{R}^D$ into cubes using Lemma~\ref{lem:tiling}, most cubes will be ``good'' in that they contain $\ll d$ long-ranged interactions (see footnote $5$ of~\cite{dai2024locality}). Lemma~\ref{lem:correctable-cube} then says that all the good cubes are correctable. How do we handle the cubes with large numbers of bad qubits? In \cite{dai2024locality}, the solution was to further subdivide the bad cubes into sufficiently small rectangles, most of which then contains a sufficiently small number of bad qubits. The same strategy works in $D$-dimensions: 

\begin{lemma}[Subdivision]
Let $w,\ell$ and $d_1$ be positive real numbers. Let $f:\mathbb{R}^D\to \mathbb{N}$ be a finitely supported function. Let $R$ be a $h\times w^{D-1}$ box, with $h\ge 5\ell$ and $f(R)\ge d_1$. Then there exists a division of $R$ by hyperplanes orthogonal to $x_1$ into boxes $R_1,\dots,R_m$ such that:
    \begin{enumerate}
      \item Each box has dimensions $h_i\times w^{D-1}$, with $h_i\ge 5\ell$.
      \item Each $R_i$ satisfies either (i) $f(R_i)\le d_1$ or (ii) has $h_i\le 10\ell$. 
      \item The number of boxes $m$ is at most $\frac{2f(R)}{d_1}$.
    \end{enumerate}
    \label{lem:subdivision}
\end{lemma}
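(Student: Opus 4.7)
The plan is to construct the partition via a greedy sweep along the $x_1$-direction, and then verify each of the three conditions by a short case analysis. Identify $R$ with $[0, h] \times R'$ where $R' \subseteq \mathbb{R}^{D-1}$ is the cross-section, and let $F(x) = f([0, x] \times R')$; this is a nondecreasing step function with $F(0) = 0$ and $F(h) = f(R) \ge d_1$. Starting at $a_0 = 0$, iteratively define
\[
  a_{i+1} = \min\bigl(\max(b^*_i,\ a_i + 10\ell),\ h\bigr), \qquad \text{where } b^*_i = \sup\{b \in [a_i, h] : F(b) - F(a_i) \le d_1\},
\]
continuing until $a_m = h$. If a step produces $a_{i+1}$ with $0 < h - a_{i+1} < 5\ell$, retract $a_{i+1}$ to $h - 5\ell$ so that both the current and the final slab have width at least $5\ell$.

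Each resulting slab $R_i = [a_i, a_{i+1}] \times R'$ is either a \emph{wide slab} ($a_{i+1} = b^*_i \ge a_i + 10\ell$) that has width at least $5\ell$ and $f$-weight at most $d_1$, satisfying condition (i); or a \emph{thin slab} ($a_{i+1} = a_i + 10\ell$) that has width in $[5\ell, 10\ell]$, satisfying condition (ii). The boundary retraction preserves these classifications, since shrinking a wide slab cannot increase its $f$-weight, and a retracted thin slab still has width in $[5\ell, 10\ell]$. Together with the inequality $b^*_i \le h$ this establishes conditions 1 and 2.

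For the count bound (condition 3), I use a pairing argument: every pair of consecutive slabs $(R_i, R_{i+1})$ with $i < m$ has combined $f$-weight strictly greater than $d_1$. If $R_i$ is thin, then $b^*_i < a_{i+1}$ forces $F(R_i) > d_1$ by the definition of supremum. If $R_i$ is wide, then $a_{i+1} = b^*_i$, and the supremum definition guarantees the first atom of $f$ past $b^*_i$ lies in $R_{i+1}$ with mass strictly greater than $d_1 - F(R_i)$, so $F(R_i) + F(R_{i+1}) > d_1$. Summing,
\[
  (m-1)\, d_1 \;<\; \sum_{i=1}^{m-1} \bigl(F(R_i) + F(R_{i+1})\bigr) \;=\; 2 f(R) - F(R_1) - F(R_m) \;\le\; 2 f(R),
\]
yielding $m \le 2 f(R)/d_1$. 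The main technical obstacle I anticipate is the boundary handling at $x_1 = h$: a naive extension of the last slab to $h$ can simultaneously violate conditions (i) and (ii), since the extended slab may become both too wide ($> 10\ell$) and too heavy ($> d_1$). The retraction to $h - 5\ell$ resolves this by converting the problematic last step into two feasible slabs, exploiting the fact that shrinking a wide slab only decreases its weight while the thin-width window $[5\ell, 10\ell]$ is generous enough to absorb any residual fragment of width at least $5\ell$.
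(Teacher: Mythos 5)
Your greedy sweep is a genuinely different route from the paper's: the paper proves the $D$-dimensional statement by projecting the support of $f$ onto the $(x_1,x_2)$-plane and citing the $2$-dimensional Lemma~4.4 of~\cite{dai2024locality} as a black box, whereas you give a self-contained construction. The greedy rule, the wide/thin dichotomy, and the boundary retraction are all sensible, and your observation that a thin slab necessarily has weight $> d_1$ (since $b^*_i < a_{i+1}$) is correct and is the key fact driving the count.

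There is, however, a genuine gap in the count bound. Your pairing inequality $(m-1)d_1 < \sum_{i=1}^{m-1}\bigl(F(R_i)+F(R_{i+1})\bigr) \le 2f(R)$ gives $m < 2f(R)/d_1 + 1$, \emph{not} $m \le 2f(R)/d_1$. For a non-integer ratio these differ by $1$: if $2f(R)/d_1 = 4.6$, the lemma requires $m \le 4$ but your inequality only rules out $m \ge 6$. This is not merely bookkeeping slack that vanishes in the limit --- the first and last slabs $R_1,R_m$ can both carry weight close to $0$ (e.g.\ an empty wide slab before the first heavy atom, and an empty tail slab), so the subtracted term $F(R_1)+F(R_m)$ in your telescoping sum gives you nothing back. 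To close the gap by one, you would need a charging scheme in which every slab except possibly the last accumulates weight $\ge d_1$ (not $\ge d_1/2$ and not merely that \emph{pairs} exceed $d_1$); then $(m-1)d_1 \le f(R)$ combined with the hypothesis $f(R)\ge d_1$ gives $m \le f(R)/d_1 + 1 \le 2f(R)/d_1$. Your current greedy does not arrange this, because wide slabs deliberately have weight $\le d_1$.

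A smaller precision issue: you place the cut at $a_{i+1}=b^*_i$. Since $F$ is right-continuous, when $b^*_i<h$ the supremum $b^*_i$ is \emph{not} in $\{b:F(b)-F(a_i)\le d_1\}$ --- there is an atom at $b^*_i$ whose inclusion pushes the cumulative weight past $d_1$. With the paper's convention that boxes are closed, that atom lies in $R_i=[a_i,b^*_i]$, so the wide slab's $f$-weight is actually $>d_1$, contradicting your classification (i). This is fixable by cutting at $b^*_i - \epsilon$ for $\epsilon$ smaller than the minimum inter-atom spacing, or by perturbing so no atoms lie on cut hyperplanes (as the paper implicitly does for its tilings), but as written the classification claim does not hold.
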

\begin{proof}
Let $V$ denote the support of $f$ in $R$, considered as a multiset (with each point $x$ having multiplicity $f(x)$). Let $V'$ be the projection of $V$ onto the $(x_1,x_2)$-plane, again as a multiset. Note that $V'$ is supported inside a $h\times w$ rectangle $R'$, and $f(R)=|V'|\ge d_1$. By the $2$-dimensional version of the subdivision lemma (see Lemma~4.4 in \cite{dai2024locality}), there exists lines orthogonal to $x_1$ that divide $R'$ into strips $R_1,\cdots,R_m$ such that:
\begin{enumerate}
\item Each strip $R_i$ has dimensions $h_i\times w$, with $h_i \ge 5\ell$.
\item Each strip $R_i$ satisfies either (i) $|V'\cap R_i| \le d_1 $, or (ii) $h_i \le 10\ell$. 
\item There are at most $2f(R)/d_1$ strips.
\end{enumerate}
The $D$-dimensional result follows by extending the dividing lines into hyperplanes orthogonal to $x_1$.
\end{proof}

Finally, we collect a few inequalities that we will frequently use in the proof of Theorems~\ref{thm:main-k} and~\ref{thm:stab-1}.

\begin{lemma}\label{lem:ineq}
Let 
\begin{align}
w_0 = \left(\frac{\vol(B_D)}{2\cdot 4^{D+1} D}\cdot\frac{d}{\ell}\right)^{\frac{1}{D-1}},\qquad\text{and}\qquad c = \frac{\vol(B_D)^\frac{1}{D}}{400\alpha D}
\end{align}
for some $\alpha \ge 1$. Suppose that $\ell$ satisfies $\ell \le cd^\frac{1}{D}$. Then we have
\begin{enumerate}
\item\label{item:ineq1} \hfill$\displaystyle\frac{2^D}{\vol(B_D)}(2w_0)^{D-1}\ell = \frac{d}{16D},$\hfill \ 

\item\label{item:ineq2} \hfill$\displaystyle w_0 \ge 100\alpha D\ell,$\hfill \ 

\item\label{item:ineq3} \hfill$\displaystyle
w_0 \ge \frac{1}{90\sqrt{D}}\left(\frac{d}{\ell}\right)^\frac{1}{D-1}.$\hfill\ 

\end{enumerate}
\end{lemma}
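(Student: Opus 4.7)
The plan is to verify each of the three inequalities by direct substitution into the definitions of $w_0$ and $c$, invoking the hypothesis $\ell \le c d^{1/D}$ only where needed. I expect the whole lemma to be routine bookkeeping, with the only subtlety hiding in item~\ref{item:ineq3}.

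First I would dispatch item~\ref{item:ineq1}, which is an algebraic identity independent of the bound on $\ell$. Raising $w_0$ to the $(D-1)$-th power gives $w_0^{D-1} = \frac{\vol(B_D)}{2 \cdot 4^{D+1} D}\cdot \frac{d}{\ell}$. Multiplying by $2^{D-1}$ to form $(2w_0)^{D-1}$, and then by $\frac{2^D \ell}{\vol(B_D)}$, cancels $\vol(B_D)$ and collects the powers of $2$ into $\frac{2^{2D-1}}{2 \cdot 4^{D+1} D} = \frac{1}{16 D}$, establishing the equality on the nose.

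Next, for item~\ref{item:ineq2}, I would write
\[
\frac{w_0}{\ell} = \left(\frac{\vol(B_D)}{2 \cdot 4^{D+1} D}\right)^{1/(D-1)} \frac{d^{1/(D-1)}}{\ell^{D/(D-1)}},
\]
and apply the hypothesis in the form $\ell^{D/(D-1)} \le c^{D/(D-1)} d^{1/(D-1)}$, which cancels the factors of $d$. Plugging in $c = \vol(B_D)^{1/D}/(400 \alpha D)$ also cancels the powers of $\vol(B_D)$ (since the $\vol(B_D)^{1/(D-1)}$ in the numerator is matched exactly by the contribution from $c^{-D/(D-1)}$), leaving an inequality equivalent to $400 \alpha D \ge 32 D$, which is immediate from $\alpha \ge 1$.

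Finally, for item~\ref{item:ineq3}, I observe that $w_0 = K (d/\ell)^{1/(D-1)}$ where $K = \left(\frac{\vol(B_D)}{2 \cdot 4^{D+1} D}\right)^{1/(D-1)}$ depends only on $D$, so the claim reduces to the dimension-only inequality $K \ge 1/(90 \sqrt{D})$. This I would handle by lower-bounding $\vol(B_D)$ via Stirling's formula, using the standard estimate $\vol(B_D) = \Omega\bigl((2\pi e/D)^{D/2}/\sqrt{D}\bigr)$; taking $1/(D-1)$-th roots then shows $K = \Omega(1/\sqrt{D})$ with a constant comfortably better than $1/90$ in the asymptotic regime. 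The main (minor) obstacle is checking that the prefactor is also loose enough for the small values $D = 2, 3, \ldots$ before the asymptotic behaviour takes over, which can be verified by direct numerical computation.
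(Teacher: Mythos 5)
Your proposal is correct and follows essentially the same route as the paper's proof: a direct algebraic cancellation for item~\ref{item:ineq1}, substitution of $\ell \le c d^{1/D}$ (with the $\vol(B_D)$ factors cancelling exactly against $c^{-D/(D-1)}$) for item~\ref{item:ineq2}, and reduction of item~\ref{item:ineq3} to a dimension-only lower bound on the constant $K = \bigl(\vol(B_D)/(2\cdot 4^{D+1}D)\bigr)^{1/(D-1)}$. The paper verifies the last bound by asserting that $f(D)=\sqrt{D}\,K$ is increasing on $[2,\infty)$ with $f(2)\ge 1/90$, which is the same dimension-only check you propose to carry out via Stirling plus a numerical sweep over small $D$.
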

\begin{proof}
The first equality follows by explicit calculation:
\begin{align}
\frac{2^D}{\vol(B_D)}(2w_0)^{D-1}\ell = \frac{2^{2D-1}}{\vol(B_D)}\left(\frac{\vol(B_D)}{2\cdot 4^{D+1} D}\cdot\frac{d}{\ell}\right)\ell = \frac{d}{16D}.
\end{align}
For item~\ref{item:ineq2}, we have 
\begin{align}
w_0 &= \left(\frac{\vol(B_D)}{2\cdot 4^{D+1} D}\cdot\frac{d}{\ell}\right)^{\frac{1}{D-1}}\\ 
&= \left(\frac{\vol(B_D)}{2\cdot 4^{D+1} D}\frac{\ell^{D-1}}{c^D}\cdot\frac{c^Dd}{\ell^D}\right)^{\frac{1}{D-1}}\\ 
&\ge \left(\frac{\vol(B_D)}{8\cdot 4^Dc^{D}D}\right)^{\frac{1}{D-1}}\ell\\ 
&= \left(\frac{100^{D} \alpha^DD^{D-1}}{8}\right)^{\frac{1}{D-1}}\ell\\ 
&\ge 100\alpha D\ell,
\end{align}
where the inequality on the third line follows from the assumption that $\ell \le cd^\frac{1}{D}$, and the equality on the fourth line from the definition of $c$. Finally, item~\ref{item:ineq3} follows from
\begin{align}
w_0 = \left(\frac{\vol(B_D)}{2\cdot 4^{D+1} D}\cdot\frac{d}{\ell}\right)^{\frac{1}{D-1}} = \sqrt{D}\left(\frac{\vol(B_D)}{2\cdot 4^{D+1} D}\right)^\frac{1}{D-1} \frac{1}{\sqrt{D}}\left(\frac{d}{\ell}\right)^{\frac{1}{D-1}}.
\end{align}
Optimization of the function
\begin{align}
f(D) = \sqrt{D}\left(\frac{\vol(B_D)}{2\cdot 4^{D+1} D}\right)^\frac{1}{D-1}
\end{align}
shows that $f$ is strictly increasing on $[2,\infty)$, with $f(2) \ge 90^{-1}$.
\end{proof}

We are now ready to prove Theorem~\ref{thm:main-k}. 
\begin{proof}[Proof of Theorem~\ref{thm:main-k}]
    We give a proof by contradiction, where we first assume that we have an embedding of a subsystem code with $k$ logical qubits that has few long interactions, and then show that the code's dimension must actually be less than $k$. With hindsight, choose 
    \begin{align}
        c_0 = \frac{\vol(B_D)^\frac{1}{D}}{400D}.
    \end{align}
    Note that we have $c_0 \le 1/(200D) \le 1/400$ for $D\ge 2$.
    Choose $c_1 = (1/c_0)^\frac{D}{D-1}$. Suppose we have a $[[n,k,d]]$ subsystem  code with a $D$-dimensional embedding $Q \subset \mathbb{R}^D$ satisfying $kd^\frac{1}{D-1} \geq c_1 n$. Let 
    \begin{align}
    \ell = c_0\left(\frac{kd^\frac{1}{D-1}}{n}\right)^\frac{D-1}{D},
    \end{align}
    and note that we have $1 \leq \ell \leq c_0 d^\frac{1}{D} < \frac{1}{8\sqrt{D}}d^{\frac{1}{D}}$, where the first upper bound follows from $k \le n$, and the lower bound from $kd^\frac{1}{D-1} \geq c_1n$. 
    
    Now assume for the sake of contradiction that the embedding $Q \subset\mathbb{R}^D$ has at most $c_0 k$ interactions of length $\geq \ell$. Call an interaction \emph{long} if its length is at least $\ell$ and \emph{short} otherwise. Call a qubit $v \in Q$ \emph{bad} if it participates in a long interaction and \emph{good} otherwise. Then the function $f_{\geq \ell}(v)$ counts the number of long interactions that the qubit $v$ participates in. By assumption, there are at most $c_0k$ long interactions, so the total number of bad qubits is at most
    \begin{align}
    \sum_{v \in Q} f_{\geq \ell}(v) \leq 2c_0k.
    \end{align} 
   Now we construct a division of $\mathbb{R}^D$ into $\mathcal{A} \sqcup \mathcal{B} $ that outlines the partition of the qubits $Q = A \sqcup B$. Let 
   \begin{align}
   w_0 = \left(\frac{\vol(B_D)}{2\cdot 4^{D+1} D}\cdot\frac{d}{\ell}\right)^{\frac{1}{D-1}}
   \end{align}
   as in Lemma~\ref{lem:correctable-cube}. It follows from Lemma~\ref{lem:ineq}\eqref{item:ineq2} that $w_0\ge 100D\ell$. Apply Lemma~\ref{lem:tiling} with $Y = Q$ (and with $X$ arbitrary). This produces a tiling of $\mathbb{R}^D$ into cubes $\{S_m \}_{m \in \mathbb{Z}^D}$ of side length $w_0$, where at most $\frac{8D\ell}{w_0} n$ qubits of $Q$ are within $\ell_\infty$ distance $2\ell$ of some codimension-1 face of some cube.
   We call a cube $S_m$ \emph{good} if $f_{\geq \ell}(S_m) < d/10$ and \emph{bad} otherwise. Now apply Lemma~\ref{lem:subdivision}, with $d_1=d/10$, to decompose each bad cube $S_m$ into boxes $R_{m,1}, \cdots, R_{m,n_m}$. All boxes obtained in this way will also be called \emph{bad}. This process results in a division of $\mathbb{R}^D$ into good cubes and bad boxes. It follows form Lemma~\ref{lem:subdivision} (item 3) that total number of bad boxes is no more than 
    \begin{align}
        \sum_{m : S_m \text{bad}} \frac{2 f_{\geq \ell} (S_m)}{d/10} \leq \sum_{m} \frac{2f_{\geq \ell}(S_m)}{d/10} \leq \frac{20}{d} \sum_m f_{\geq \ell} (S_m) \leq \frac{40}{d} c_0k < \frac{k}{10d}.
    \end{align}
     Now we define the division $\mathcal{A} \sqcup \mathcal{B}$ as follows: 
    \begin{itemize}
        \item $\mathcal{B}$ is the set of all points within $\ell_\infty$ distance $2\ell$ of some codimension-1 face of either a good cube $S_m$ or a bad box $R_{m,i}$.
        \item $\mathcal{A}$ is the set of points not in $\mathcal{B}$.  
    \end{itemize}
    Note that we can perturb the tiling slightly in order to ensure that no qubits lie on the boundary of any subregion of $\mathcal{A}$ or $\mathcal{B}$. 
    
    Having constructed the division, we will now construct a corresponding partition of qubits $Q =  A \sqcup B$ such that $A$ is dressed-cleanable and $\abs{B} < k$. This will give us our desired contradiction from Lemma~\ref{lem:ab}. We define the partition $Q=A\sqcup B$ as follows: 
    \begin{itemize}
        \item $A$ is the set of all good qubits in region $\mathcal{A}$. 
        \item $B$ is the set of all remaining qubits. These are either good qubits in region $\mathcal{B}$ or bad qubits. 
    \end{itemize}
    Now we check that $A$ is dressed-cleanable and that $\abs{B} < k$.
    \begin{enumerate}
        \item {\bf $A$ is dressed-cleanable:} Let $\mathcal{A}_1', \mathcal{A}_2', \cdots$ be an arbitrary enumeration of the good cubes and bad boxes that divide $\mathbb{R}^D$. Let $A_i' \subset Q$ be the set of all qubits contained in region $\mathcal{A}_i' \subset \mathbb{R}^D$, and let $A_i = A_i'\cap A$ denote the subset of $A_i'$ contained in $A$. If $\mathcal{A}_i'$ is a good cube, then $A_i'$ is correctable by Lemma~\ref{lem:correctable-cube}. Otherwise, $\mathcal{A}_i'$ is a bad box, and either (i) $f_{\geq \ell}(\mathcal{A}_i') \leq d/10$, or (ii) $\mathcal{A}_i'$ has height at most $10\ell$. For case (i), $A_i'$ is again correctable by Lemma~\ref{lem:correctable-cube}. For case (ii), it follows by Lemma~\ref{lem:packing} and Lemma~\ref{lem:ineq}\eqref{item:ineq1} that $A_i'$ contains at most
        \begin{align}
            \frac{2^D}{\vol(B_D)} (1+w_0)^{D-1}(1+10\ell)  \le \frac{2^D}{\vol(B_D)} (2w_0)^{D-1}(11\ell) = \frac{11}{16D}d < d
        \end{align}
        qubits, so $A_i'$ is correctable by the the distance property. It follows by subset closure that each $A_i$ is correctable. Moreover, since the subsets $A_i$ and $A_j$ are separated by a distance of at least $\ell$, they are disjoint and decoupled for $i \neq j$. Applying the subsystem union lemma, we see that $A = \bigcup_i A_i$ is dressed-cleanable.
        
        \item {\bf $\abs{B} < k$}: The total number of bad qubits is at most $2c_0 k$ by assumption. Given a bad box $R_{m,i}$, it follows from Lemma~\ref{lem:packing} and Lemma~\ref{lem:ineq}\eqref{item:ineq1} that the number of qubits within $\ell_\infty$ distance $2\ell$ of a given codimension-1 face is at most
        \begin{align}
            \frac{2^D}{\vol(B_D)} (1+4\ell)(1+w_0+4\ell)^{D-1} &\leq \frac{2^D}{\vol(B_D)} (5\ell)(2w_0)^{D-1} \le  \frac{5}{16D} d.
        \end{align}
        Since each box has $2D$ total codimension-1 faces, and there are at most $k/(10d)$ bad boxes, the total number of qubits within $\ell_\infty$ distance $2\ell$ of any codimension-1 face of any bad box is at most 
        \begin{equation}
            \frac{k}{10d} \cdot 2D \cdot \frac{5}{16D}d = \frac{k}{16}.
        \end{equation}
        By our choice of grid tiling, the number of qubits within $\ell_\infty$ distance $2\ell$ of some codimension-1 face of some good cube $S_m$ is at most 
        \begin{align}
            \frac{8D\ell}{w_0}n &= 8D \ell \left(\frac{8\cdot4^{D}D}{\vol(B_D)} \cdot \frac{\ell}{d}\right)^{\frac{1}{D-1}}n \\ 
            &= 8^{\frac{D}{D-1}}\left(\frac{4D}{\vol(B_D)^\frac{1}{D}}\right)^\frac{D}{D-1}\ell^\frac{D}{D-1}\frac{n}{d^\frac{1}{D-1}}\\
            &= 8^{\frac{D}{D-1}}\left(\frac{1}{100c_0}\right)^\frac{D}{D-1}\left(c_0^\frac{D}{D-1}\cdot\frac{kd^\frac{1}{D-1}}{n}\right)\frac{n}{d^\frac{1}{D-1}}\\
            &= \frac{k}{(25/2)^{\frac{D}{D-1}}} < \frac{2}{25}k,
        \end{align}
        where we've substituted the definition of $c_0$ and $\ell$ in the third line. Summing the bad qubits and the qubits within $\ell_\infty$ distance $2\ell$ of a codimension-$1$ face of a good cube or bad rectangle, it follows that we have
        \begin{equation}
        \abs{B} \leq 2c_0 k + \frac{1}{16}k + \frac{2}{25}k \le \frac{1}{200}k + \frac{1}{16}k + \frac{2}{25}k < k.
        \end{equation}
    \end{enumerate}
    Since $A$ is dressed-cleanable, it follows by Lemma~\ref{lem:ab} that we must have $|B| \ge k$. This gives us our desired contradiction.
    
\end{proof}

\section{Proof of Theorem~\ref{thm:stab-1}}
\label{sec:cpc}

  \begin{figure}
  \begin{center}
      \begin{tikzpicture}[scale=0.45]
\draw[white,pattern=north west lines,pattern color=blue!40] (0,0) rectangle (15,15);
    \foreach \i in {0,5,...,15} {
      \draw [pattern=crosshatch,pattern color=red!100] (\i-0.2,0-0.2) rectangle (\i+0.2,15+0.2);
      \draw [pattern=crosshatch,pattern color=red!100] (0-0.2,\i-0.2) rectangle (15+0.2,\i+0.2);
    }
    \foreach \i in {0,5,...,15} {
      \foreach \j in {0,5,...,15} {
        \draw[fill=yellow!90] (\i-0.4,\j-0.4) rectangle (\i+0.4,\j+0.4);
      }
    }
    \draw [pattern=crosshatch,pattern color=red!100] (0-0.2,7-0.2) rectangle (5+0.2,7+0.2);
    \draw[fill=yellow!90] (0-0.4,7-0.4) rectangle (0+0.4,7+0.4);
    \draw[fill=yellow!90] (5-0.4,7-0.4) rectangle (5+0.4,7+0.4);

    \draw [pattern=crosshatch,pattern color=red!100] (10-0.2,13-0.2) rectangle (15+0.2,13+0.2);
    \draw[fill=yellow!90] (10-0.4,13-0.4) rectangle (10+0.4,13+0.4);
    \draw[fill=yellow!90] (15-0.4,13-0.4) rectangle (15+0.4,13+0.4);
    \node at (7.5,19) {\textbf{Case 1: $k\ge d$}};
    \node (a) at (1,17) {$\mathcal{A}$};
    \draw[->] (a) to[out=300,in=90] (2.5,12.5);

    \node (a) at (10,17) {Bad Rectangles};
    \draw[->] (a) to[out=330,in=90] (12,14);
    \draw[->] (a) to[out=330,in=90] (13,11.5);
    \draw[<->] (-0.7,10) -- (-0.7,15) node [midway, left] {$w$};
    \draw[<->] (-0.7,5-0.4) -- (-0.7,5+0.4) node [midway, left] {$\Theta(\ell)$};
  \end{tikzpicture} 
  \quad
      \begin{tikzpicture}[scale=0.45]
\draw[white,pattern=north west lines,pattern color=blue!40] (0,0) rectangle (15,15);
    \foreach \i in {0,5,...,15} {
      \draw [pattern=crosshatch,pattern color=red!30] (\i-0.4,-0.4) rectangle (\i+0.4,15.4);
      \draw [pattern=crosshatch,pattern color=red!30] (0-0.4,\i-0.4) rectangle (15+0.4,\i+0.4);
      \draw [pattern=crosshatch,pattern color=red!100] (\i-0.2,0-0.2) rectangle (\i+0.2,15+0.2);
      \draw [pattern=crosshatch,pattern color=red!100] (0-0.2,\i-0.2) rectangle (15+0.2,\i+0.2);
    }
    \foreach \i in {0,5,...,15} {
      \foreach \j in {0,5,...,15} {
        \draw[fill=yellow!90] (\i-0.4,\j-0.4) rectangle (\i+0.4,\j+0.4);
      }
    }
    \node[] at (7.5,19) {\textbf{Case 2: $d\ge k$}};
    \node (b) at (1,17) {$\mathcal{B}$};
    \node (b') at (3,17) {$\mathcal{B}'$};
    \node (c) at (8,17) {$\mathcal{C}$};

    \draw[->] (b) to[out=300,in=90] (2,15);
    \draw[->] (b') to[out=270,in=90] (3.5,15.4);
    \draw[->] (c) to[out=270,in=90] (10,15);
  \end{tikzpicture} 
  \end{center}
  \caption{The division of the plane into regions $\mathcal{A}$ (lined blue), $\mathcal{B}$ (red and pink crosshatch), and $\mathcal{C}$ (solid yellow) for the proof of Theorem~\ref{thm:stab-1}. The region $\mathcal{B}'$ (pink crosshatch) is also indicated in the figure on the right. These regions inform our qubit division $Q=A\sqcup B\sqcup C$. We use this division in different ways for the cases $k\ge d$ and $d\ge k$. When $k\ge d$ (left), we ignore $\mathcal{B}'$, and also subdivide any bad squares into bad rectangles. When $d\ge k$ (right), there are no bad squares, but we need to explicitly consider the region $\mathcal{B}'$.
  }
  \label{fig:abc}
  \end{figure}
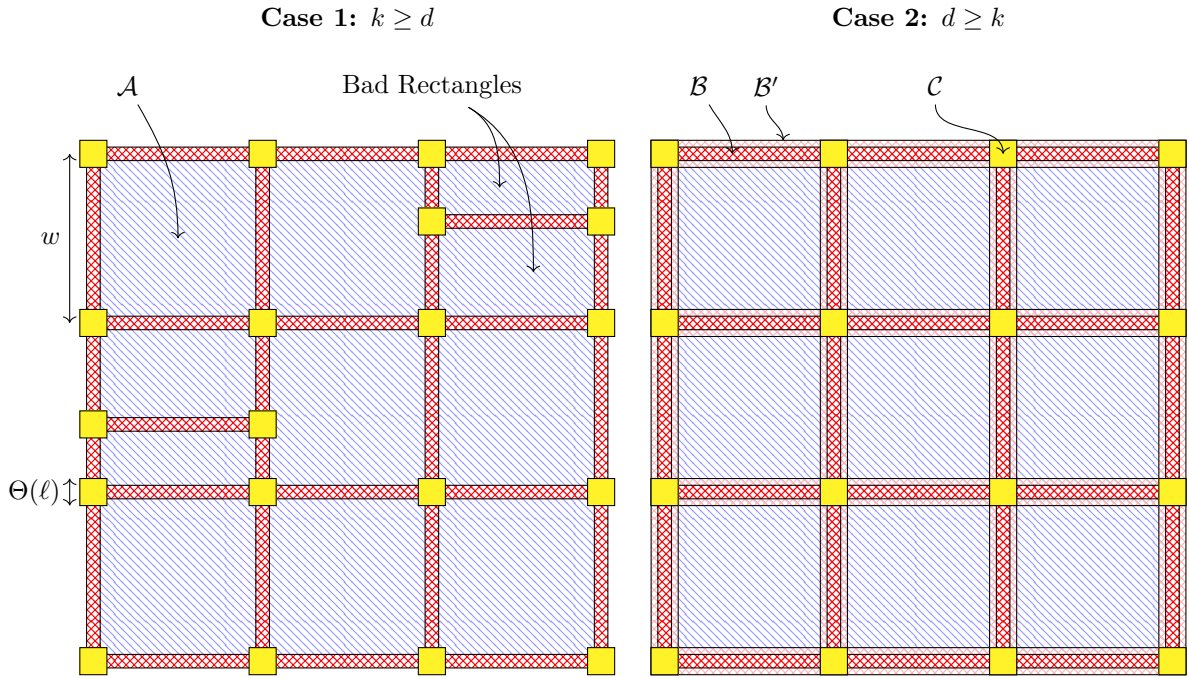

We now prove Theorem~\ref{thm:stab-1}, which, together with Theorem~\ref{thm:main-d}, yields Theorem~\ref{thm:stab}, our generalization of the main result of \cite{dai2024locality} to case of $D$-dimensional embeddings.

\begin{theorem*}[Theorem~\ref{thm:stab-1}, restated]
    \thmstab
\end{theorem*}

\begin{proof}
    For $d \geq \sqrt{kn}$, the result follows from Theorem~\ref{thm:main-d}, so it suffices to consider the case where $d \leq \sqrt{kn}$. With hindsight, choose
    \begin{align}
    c_0 = \frac{\vol(B_D)^\frac{1}{D}}{800D^2},
    \end{align}
    and let $c_1 = (1/c_0)^\frac{2D}{D-1}$. Note that $c_0 \le 1/(400D^2) \le 1/400$. Suppose we have a $[[n,k,d]]$ commuting projector code with a $D$-dimensional embedding $Q \subset \mathbb{R}^D$ satisfying $kd^\frac{2}{D-1} \geq c_1n$. Let 
    \begin{align}
    \ell = c_0\left(\frac{kd^\frac{2}{D-1}}{n}\right)^\frac{D-1}{2D}.
    \end{align}
    Note that we have $1 \leq \ell \leq c_0 d^\frac{1}{D} \le \frac{1}{8\sqrt{D}}d^{\frac{1}{D}}$, where the lower bound follows from $kd^\frac{2}{D-1} \geq c_1n$ and the first upper bound from the $k \le n$.
    
    Now assume for the sake of contradiction that the embedding $Q \subset\mathbb{R}^D$ has at most $c_0\max(k,d)$ interactions of length $\geq \ell$. Call an interaction \emph{long} if its length is at least $\ell$ and \emph{short} otherwise. Call a qubit $v \in Q$ \emph{bad} if it participates in a long interaction and \emph{good} otherwise. The function $f_{\geq \ell}(v)$ counts the number of long interactions that the qubit $v$ participates in. By assumption, the total number of long interactions is at most $c_0\max(k,d)$, so the total number of bad qubits it at most
    \begin{align}
    \sum_{v\in Q}f_{\ge \ell}(v) \le 2c_0\max(k,d).
    \end{align}
    We will construct a division of $\mathbb{R}^D$ into subsets $\mathcal{A} \sqcup \mathcal{B} \sqcup \mathcal{C}$ that will inform the partition of the qubits $Q = A \sqcup B \sqcup C$. Let 
    \begin{align}
    w_0 = \left(\frac{\vol(B_D)}{2\cdot 4^{D+1} D}\frac{d}{\ell}\right)^{\frac{1}{D-1}},\label{eq:bad_box2}
    \end{align}
    as in Lemma~\ref{lem:correctable-cube}. Note that it follows from Lemma~\ref{lem:ineq} that $w_0\ge 200D^2\ell$ and $w_0 \ge \frac{1}{90\sqrt{D}}(d/\ell)^\frac{1}{D-1}$. 
    
    Apply Lemma~\ref{lem:tiling} with $X = Q$ and with $Y$ as the multiset where each qubit $v$ appears with multiplicity $f_{\geq \ell} (v)$. This gives a partition of $\mathbb{R}^D$ into a set of cubes $\{S_m \}_{m \in \mathbb{Z}^D}$ of side length $w_0$. By construction, at most $\frac{16D^2\ell^2}{w_0^2}n$ qubits of $Q$ are within $\ell_\infty$ distance $2\ell$ of a codimension-2 face of some cube, and at most $\frac{8D\ell}{w_0}\cdot 2c_0\max(k,d)$ bad interactions involve a qubit within $\ell_\infty$ distance $2\ell$ of a codimension-1 face of some cube. We call a cube $S_m$ \emph{good} if $f_{\geq \ell}(S_m) < d/10$ and \emph{bad} otherwise. Now apply Lemma~\ref{lem:subdivision} to decompose each bad cube into boxes $R_{m,1}, \cdots, R_{m,n_m}$. All boxes obtained by subdividing a bad cube will also be called bad. This process results in a division of $\mathbb{R}^D$ into good cubes and bad boxes. By Lemma~\ref{lem:subdivision} (item 3), the total number of bad boxes is no more than 
    \begin{align}
        \sum_{m : S_m \text{bad}} \frac{2 f_{\geq \ell} (S_m)}{d/10} \leq \sum_{m} \frac{2f_{\geq \ell}(S_m)}{d/10} \leq \frac{20}{d} \sum_m f_{\geq \ell} (S_m) \leq \frac{40}{d} c_0 \max(k,d) < \frac{\max(k,d)}{10d}.
    \end{align}
     Now we define the division $\mathcal{A} \sqcup \mathcal{B} \sqcup \mathcal{C}$ as follows: 

    \begin{itemize}
        \item $\mathcal{C}$ is the set of all points within $\ell_\infty$ distance $2\ell$ of some codimension-2 face of a good cube $S_m$ or a bad box $R_{m,i}$. 
        \item $\mathcal{B}$ is the set of all points \emph{not} already in $\mathcal{C}$ and within $\ell_\infty$ distance $\ell$ of some codimension-1 face of a good cube $S_m$ or bad box $R_{m,i}$. 
        \item $\mathcal{B}' \subset \mathcal{B}$ is the set of all points \emph{not} already in $\mathcal{C}$ and within $\ell_\infty$ distance $2\ell$ of some codimension-1 face of a good cube $S_m$. 
        \item $\mathcal{A}$ is the set of points not in $\mathcal{B}$ or $\mathcal{C}$.  
    \end{itemize}
    Note that we can perturb the tiling slightly in order to ensure that no qubits lie on the boundary of any subregion of $\mathcal{A}$, $\mathcal{B}$, $\mathcal{B'}$, or $\mathcal{C}$. 

    Having defined the division of $\bbR^D$, we will now construct our partition of qubits $Q =  A \sqcup B \sqcup C$. A sketch of the high-level ideas are as follows: We aim to define our qubit partition with the goal of having $A,B$ be correctable, and $\abs{C} < k$. This will lead to the desired contradiction using Lemma~\ref{lem:abc}. There are two cases to consider, depending on whether $k \geq d$ or $k \leq d$. When $k \geq d$, we have $2c_0k \ll k$ bad qubits, which can then be directly placed in $C$ without affecting the requirement that $|C| < k$. When $k \leq d$, we have $2c_0 d \ll d$ bad qubits, and the set of all bad qubits is itself correctable. Our chosen division of $\bbR^D$ implies that very few bad qubits can interact with the qubits in $\mathcal{B}$, so the union lemma suggests that we can add almost all of the bad qubits to $\mathcal{B}$ while preserving its correctability. We now continue with our proof, divided into the two cases $k\ge d$ and $k \le d$.

    \paragraph{Case 1: $k \geq d$.} We define the partition of qubits $Q = A \sqcup B \sqcup C$ as follows: \
    \begin{itemize}
        \item $C$ is the set of qubits in region $\mathcal{C}$, along with all bad qubits. 
        \item $B$ is the set of all good qubits in region $\mathcal{B}$
        \item $A$ is the set of all good qubits in region $\mathcal{A}$. 
    \end{itemize}
    It's clear that this is indeed a partition of $Q$. Now we show that $A,B$ are correctable, and that $|C| < k$.
    \begin{itemize}
    \item {\bf $A$ is correctable}. Let $\mathcal{A}_1', \mathcal{A}_2', \dots$ be an arbitrary enumeration of the good cubes and bad boxes that divide $\mathbb{R}^D$. Let $A_i' \subset Q$ be the set of all qubits contained in the region $\mathcal{A}_i'$, and let $A_i = A_i'\cap A$ denote the subset of $A_i'$ contained in $A$. If $\mathcal{A}_i'$ is a good cube, then $A_i'$ is correctable by Lemma~\ref{lem:correctable-cube}. Otherwise, $\mathcal{A}_i'$ is a bad box, and either (i) $f_{\geq \ell}(\mathcal{A}_i') \leq d/10$, or (ii) $\mathcal{A}_i'$ has height at most $10\ell$. For case (i), $A_i'$ is again correctable by Lemma~\ref{lem:correctable-cube}. For case (ii), it follows by Lemma~\ref{lem:packing} and Lemma~\ref{lem:ineq}\eqref{item:ineq1} that $A_i'$ contains at most
    \begin{align}
        \frac{2^D}{\vol(B_D)} (1+w_0)^{D-1}(1+10\ell) &\leq \frac{2^D}{\vol(B_D)} (2w_0)^{D-1}(11\ell) \le \frac{11}{16D}d < d,
    \end{align}
    so $A_i'$ is correctable by the the distance property. It follows by subset closure that each $A_i$ is correctable. Moreover, since the subsets $A_i$ and $A_j$ are separated by a distance of at least $\ell$, they are disjoint and decoupled for $i \neq j$. Applying the union lemma, we see that $A = \bigcup_i A_i$ is correctable.
    
    \item {\bf $B$ is correctable.} We can divide the region $\calB$ into rectangular slabs of dimensions at most $2\ell \times w^{D-1}_0$. Each slab is essentially a thickening of a codimension-$1$ face of a good cube or bad box. Let $\mathcal{B}_1, \mathcal{B}_2, \cdots$ be an arbitrary enumeration of the slabs that decompose $\mathcal{B}$. Let $B_i$ be the set of qubits in $B$ that are contained in region $\mathcal{B}_i$. By Lemma~\ref{lem:packing} and Lemma~\ref{lem:ineq}\eqref{item:ineq1}, there are at most 
    \begin{align}
        \frac{2^D}{\vol(B_D)}(1+w_0)^{D-1}(1+2\ell) \leq \frac{2^D}{\vol(B_D)} (2w_0)^{D-1}3\ell  = \frac{3d}{16D} < d
        \label{eq:b-corr}
    \end{align}
    qubits in $B_i$, and so $B_i$ is correctable by the distance property. Since any two regions $\mathcal{B}_i$ and $\mathcal{B}_j$ are separated by distance at least $\ell\sqrt{2}$ and all the qubits in $B$ are good, it follows that $B_i$ and $B_j$ are disjoint and decoupled for $i \neq j$. Applying the union lemma, we conclude that $B = \bigcup_i B_i$ is correctable.
    
    \item {\bf $\abs{C} < k$.} Given a bad box $R_{m,i}$, the number of qubits within $\ell_\infty$ distance $2\ell$ of a given codimension-$2$ face is, by Lemma~\ref{lem:packing}, at most
    \begin{align}
        \frac{2^D}{\vol(B_D)} (1+4\ell)^2(1+w_0)^{D-2} &\leq \frac{2^D}{\vol(B_D)}(5\ell)^2(2w_0)^{D-2} \\
        &\leq \frac{2^D}{\vol(B_D)}(5\ell)^2\frac{(2w_0)^{D-1}}{100D^2\ell}\\
        &=\frac{2^D}{\vol(B_D)}\ell(2w_0)^{D-1}\frac{1}{4D^2}\\
        &\le \frac{1}{64D^3} d,
    \end{align}
    where we use the fact that $w_0 \ge 200D^2\ell$ for the second inequality, and Lemma~\ref{lem:ineq}\eqref{item:ineq1} in the last line. Each bad box has $\binom{D}{2}2^{D-(D-2)} \leq 2D^2$ codimension-2 faces, and the number of bad boxes $R_{m,i}$ is at most $k/(10d)$. Therefore the number of qubits within $\ell_\infty$ distance $2\ell$ of some codimension-$2$ face of some bad box is at most 
    \begin{align}
        \frac{k}{10d}\cdot 2D^2 \cdot \frac{1}{64D^3}d &= \frac{k}{320D}.
    \end{align} 
    By our choice of grid tiling, the number of qubits within $\ell_\infty$ distance $2\ell$ of some codimension-2 face of some good cube $S_m$ is at most
    \begin{align}
        \frac{16D^2\ell^2}{w^2_0}n &\le 16D^2\ell^2\cdot 90^2D \left(\frac{\ell}{d}\right)^\frac{2}{D-1}n\\
        &= 16D^390^2\ell^{\frac{2D}{D-1}} \frac{n}{d^\frac{2}{D-1}}\\ \label{eq:ineq1}
        &= 16D^390^2 c_0^\frac{2D}{D-1}k\\ 
        &\le 16D^390^2 c_0^2k\\
        &\le 16D^390^2 \cdot \frac{1}{(400D^2)^2}k\\ 
        &= \frac{81}{100D}k,
    \end{align}
    where we use $w_0 \ge \frac{1}{90\sqrt{D}}(d/\ell)^\frac{1}{D-1}$ in the first line, the definition of $\ell$ on the third, and $c_0 \le 1/(400D^2)$ on the fifth. Finally, the total number of bad qubits is at most $2c_0 \max(k,d)=2c_0k$ by assumption. Putting everything together, we find 
    \begin{equation}
        \abs{C} \leq 2c_0k + \frac{k}{32} + \frac{81}{100}k \leq \frac{k}{200D^2} + \frac{k}{64D} + \frac{81}{100D}k < \frac{1}{2}k.
    \end{equation}
    \end{itemize}
    This gives us a contradiction with the fact that $A$ and $B$ are correctable through Lemma~\ref{lem:abc}. 

\paragraph{Case 2: $d \geq k$.} From equation~\eqref{eq:bad_box2}, the total number of bad boxes is at most $\max(k,d)/(10d) = 1/10$, which is less than $1$. It follows that there are no bad boxes in this case, only good cubes. We define the partition of qubits $Q = A \sqcup B \sqcup C$ in this case as follows: 
\begin{itemize}
    \item $C$ consists of the set of qubits in region $\mathcal{C}$, together with all qubits participating in a long interaction with a qubit in $\mathcal{B}'$ (including the bad qubits in $\mathcal{B}')$. 
    \item $B$ is the set of good qubits in region $\mathcal{B}$, together with the bad qubits not in $C$.
    \item $A$ is the set of good qubits in region $\mathcal{A}$. 
\end{itemize}
It's clear that this is a partition of the qubits in $Q$. Now we check that $A,B$ are correctable and $\abs{C} < k$.

\begin{itemize}
    \item {\bf $A$ is correctable.} Every cube $S_m$ is good, so the set of qubits in $S_m$ is correctable by Lemma~\ref{lem:correctable-cube}. Let $A_m$ be the set of qubits in $A$ contained in $S_m$. By subset closure, $A_m$ is also correctable. All qubits in $A_m$ are good by definition. Moreover, for $m' \neq m$, the qubits in $A_m$ and $A_{m'}$ are separated by distance at least $2\ell$, so $A_{m}$ and $A_{m'}$ are disjoint and decoupled. By the union lemma, $A = \bigcup_m A_m$ is correctable.
    
    \item {\bf $B$ is correctable.} We can divide the region $\calB$ into rectangular slabs of dimensions $2\ell \times (w_0-2\ell)^{D-1}$. Each slab is essentially a thickening of a codimension-$1$ face of a good cube. Arbitrarily enumerate these slabs $\mathcal{B}_1, \mathcal{B}_2, \cdots$, and let $B_i$ denote the subset of qubits contained in region $\mathcal{B}_i$. It follows from Lemma~$\ref{lem:packing}$ and Lemma~\ref{lem:ineq}\eqref{item:ineq1} that each $B_i$ contains at most
    \begin{align}
        \frac{2^D}{\vol(B_D)} (1+2\ell)(1+w_0-2\ell)^{D-1} \leq \frac{2^D}{\vol(B_D)} (3\ell)(2w_0)^{D-1} \le \frac{3d}{16D^2} < d 
    \end{align}
    qubits. Therefore $B_i$ is correctable by the distance property. Moreover, $B_1, B_2, \cdots$ contain only good qubits by construction and the regions $\mathcal{B}_1, \mathcal{B}_2, \cdots$ are pairwise separated by distance at least $\ell\sqrt{2}$. It follows that $B_1, B_2, \cdots$ are pairwise disjoint and decoupled. 
    
    Now let $B_0 \subset B$ be the set of bad qubits that are not contained in $C$. By assumption we have at most $2c_0 d$ bad qubits. Since $c_0 < \frac{1}{500}$, we have $\abs{B_0} < d/250 < d$, so $B_0$ is correctable by the distance property. By construction $B_0$ lies outside of the region $\calB'$ and does not contain any qubits participating in long-ranged interactions with the qubits of $\mathcal{B}'$. Since $B_0$ is separated from $B_1,B_2,\cdots$ by distance at least $\ell$, it follows that $B_0$ is decoupled from $B_1, B_2, \cdots$. Thus, the union lemma applies to the entire collection $B_0,B_1,B_2,\cdots$, and it follows that $B = B_0 \cup B_1 \cup B_2 \cup \dots$ is correctable. 

    \item {\bf $\abs{C} < k$. } By our choice of grid tiling, the number of qubits in $C$ is at most 
    \begin{align}
    \frac{16D^2\ell^2}{w_0^2}n \le \frac{81}{100D}k,
    \end{align}
    which is the same inequality considered in~\eqref{eq:ineq1}. Again by our choice of tiling, the number of bad interactions with the qubits of $\mathcal{B}'$ is at most 
    \begin{align}
        \frac{8D\ell}{w_0} \cdot 2c_0 d  &\le \frac{8D\ell}{w_0}\cdot 2c_0\sqrt{kn} = 4c_0\left(\frac{4D\ell}{w_0}\sqrt{n}\right)\sqrt{k} \le 2c_0\cdot \frac{9}{10\sqrt{D}}k,
    \end{align}
    where the first inequality uses our assumption $d \le \sqrt{kn}$ and the last inequality follows from~\eqref{eq:ineq1}. Thus we see that 
    \begin{equation}
        \abs{C} \leq \frac{16D^2\ell^2}{w^2}n + \frac{8D\ell}{w}\cdot 2c_0d \le \frac{81}{100D}k+\frac{9c_0}{5\sqrt{D}}k < \frac{1}{2}k.
    \end{equation}
    Since $A,B$ are correctable, this gives us our desired contradiction using Lemma~\ref{lem:abc}.
\end{itemize}
We have obtained a contradiction in both the $d\le k$ and $d\ge k$ cases, and this complete the proof of the theorem.
\end{proof}

\section{Construction for Upper Bounds}
\label{sec:construction}

The bounds derived in Theorems~\ref{thm:main} and~\ref{thm:stab} are tight in both the interaction count $M^*$ and the interaction length $\ell^*$. Tightness is shown by constructing explicit examples of embedded codes which saturate the interaction count or length. For the interaction count, it suffices to consider an asymptotically good quantum low-density parity-check (qLDPC) code \cite{panteleev2021asymptotically,leverrier2022quantum}, which has $O(M^*)=O(\max(k,d))$ interactions of any length. Since a stabilizer code can also be regarded as a subsystem code with zero gauge qubits, this shows that both Theorem~\ref{thm:main} and~\ref{thm:stab} are tight in terms of interaction count. This is covered by Theorem 1.3 of~\cite{dai2024locality}.

We now describe constructions that show the interaction length is optimal in Theorem~\ref{thm:main} and Theorem~\ref{thm:stab}. In both cases, we construct a code that saturates the bound for interaction length by concatenating an asymptotically good qLDPC code with a geometrically local code which saturates the Bravyi and BPT bounds, respectively.

\subsection{Subsystem codes}\label{sec:subsys_const}
We start by showing the interaction length for subsystem codes (Theorem~\ref{thm:main}) is optimal. We will define a concatenated subsystem code composed of an asymptotically good qLDPC code, together with a subsystem code which is geometrically local in $D$-dimensions and saturates the Bravyi bound. For the local subsystem code, we employ the ``wire code'' construction of Baspin and Williamson~\cite{baspin2024wire}.

\begin{theorem}[Wire code \cite{baspin2024wire}]\label{thm:wire_codes}
For all $D\ge 2$, there exists an $\varepsilon>0$ such that, for all positive integers $n$ there exists a subsystsem code with parameters $[[n,\ge\varepsilon n^{\frac{D-1}{D}}, \ge\varepsilon n^{\frac{D-1}{D}}]]$ that has a set of gauge generators that are $O(1)$-local in a $D$-dimensional embedding.
\end{theorem}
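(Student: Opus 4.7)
The plan is to invoke the wire code construction of Baspin and Williamson~\cite{baspin2024wire} as a black box, specializing their construction to an asymptotically good qLDPC parent code so that the resulting subsystem code saturates Bravyi's bound. At a high level, the wire code construction takes any qLDPC code $[[N,K,D_0]]$ and produces a subsystem code whose gauge generators are $O(1)$-local in a chosen $D$-dimensional embedding. This is achieved by embedding the $N$ parent qubits into $\mathbb{R}^D$, routing a ``wire'' of ancillary qubits between the qubits of each (potentially non-local) stabilizer generator of the parent code, and replacing each such generator by a chain of $O(1)$-local two-body gauge generators along the wire whose product recovers the original stabilizer.

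First, I would choose the parent code to be an asymptotically good qLDPC code $[[N,\Theta(N),\Theta(N)]]$ with constant-weight stabilizer generators, whose existence is guaranteed by~\cite{panteleev2021asymptotically,leverrier2022quantum}. Such a code has $O(N)$ stabilizer generators of bounded support. Second, I would embed the parent qubits in a $D$-dimensional grid of linear size $\Theta(N^{1/D})$. Since each wire has length $O(N^{1/D})$ and there are $O(N)$ wires, the total number of ancillary qubits is $O(N^{(D+1)/D})$, so the child subsystem code has $n = \Theta(N^{(D+1)/D})$ physical qubits by construction. The logical dimension is preserved at $k = \Theta(N)$. Expressing this in terms of $n$ gives $k = \Theta(n^{D/(D+1)})$, which exceeds $\varepsilon n^{(D-1)/D}$ for a suitable $\varepsilon = \varepsilon(D)>0$, since $D/(D+1) > (D-1)/D$ for all $D \ge 2$.

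Third, I would invoke the distance analysis of~\cite{baspin2024wire} to conclude that the child code distance satisfies $d \ge \varepsilon n^{(D-1)/D}$ as well. Bravyi's bound~\eqref{eq:bravyi_bound} tells us that the wire code cannot beat the scaling $d = O(n^{(D-1)/D})$, and the Baspin-Williamson construction is precisely tuned so that the dressed distance saturates this bound up to constants when applied to an asymptotically good parent code. Conceptually, the parent code distance $D_0 = \Theta(N)$ is degraded to $d = \Theta(N^{(D^2-1)/D^2})$ in the child code, which matches $\Theta(n^{(D-1)/D})$ under $n = \Theta(N^{(D+1)/D})$.

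The main obstacle in the argument is the distance lower bound, since multiplying by wire gauge generators can ``slide'' the support of a dressed logical operator along a wire between its endpoints, potentially shortcutting parts of a parent-code logical. The wire code construction is specifically engineered so that these gauge moves cannot reduce the dressed weight below the claimed bound: the routing and gauge structure force any dressed logical to span an extensive portion of the embedding, and the careful interplay between wire topology and parent-code stabilizers yields the $\Omega(n^{(D-1)/D})$ lower bound on $d$. This part of the argument is carried out in~\cite{baspin2024wire}, and we invoke their analysis to complete the proof.
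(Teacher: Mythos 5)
Your high-level approach (apply the Baspin--Williamson wire code construction to an asymptotically good qLDPC parent code and invoke its distance analysis as a black box) is the same as the paper's. However, your re-derivation of the parameter scaling contains a concrete error that is worth flagging, because the numbers you obtain would actually violate Bravyi's bound for subsystem codes.

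You assume the parent qubits of the $[[N,\Theta(N),\Theta(N)]]$ code can be placed in a grid of linear size $\Theta(N^{1/D})$, leading to $\Theta(N)$ wires of length $O(N^{1/D})$ and a total of $n=\Theta(N^{(D+1)/D})$ qubits. But the wires must be routed in $D$ dimensions with bounded qubit density: $\Theta(N)$ wires of length $\Theta(L)$ occupy volume $\Theta(NL)$, so the ambient box of side $L$ must satisfy $L^D \gtrsim NL$, i.e., $L\gtrsim N^{1/(D-1)}$, giving $n=\Theta(N^{D/(D-1)})$. This is the scaling the paper cites from Theorem~4 of \cite{baspin2024wire}. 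With your scaling, $k=\Theta(N)=\Theta(n^{D/(D+1)})$ and $d=\Theta(n^{(D-1)/D})$ would yield $kd^{1/(D-1)}=\Theta(N^{(D^2+D+1)/D^2})=\omega(n)$, contradicting Bravyi's bound \eqref{eq:bravyi_bound}, so those parameters cannot be achieved by any local subsystem code. With the correct scaling, $k$ and $d$ are both $\Theta(N)=\Theta(n^{(D-1)/D})$, which \emph{saturates} (rather than exceeds) Bravyi's bound and gives exactly the statement to be proved. In your writeup the final inequality $k\geq\varepsilon n^{(D-1)/D}$ still ``follows,'' but only because your intermediate claim is impossibly strong; the derivation is not sound.

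Separately, the theorem asserts the code family exists for \emph{all} positive integers $n$, whereas Theorem~4 of \cite{baspin2024wire} applied to good qLDPC codes produces codes only at a sufficiently dense set of block lengths. The paper handles this by padding with unused qubits and absorbing the loss into $\varepsilon$; your proposal does not address this point.
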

\begin{proof}
    Apply the wire code construction of \cite{baspin2024wire} to a good qLDPC code (see Corollary 1 and Theorem 4 of \cite{baspin2024wire}).
    We justify that this gives a desired code not just infinitely many $n$, but for all $n$.
    Asymptotically good qLDPC codes exist for all $n$ (to see why it is all $n$, and not just infinitely many $n$, see discussion in \cite{dai2024locality}), so wire codes exist for sufficiently dense values of $n$ --- Theorem 4 of~\cite{baspin2024wire} constructs wire codes of length $O(n^\frac{D}{D-1})$ given a good qLDPC code of length $n$ as input. After padding with unused qubits and adjusting $\eps$, we get the desired local subsystem code family for all $n$.
\end{proof}

The concatenation procedure for subsystem codes is formally identical to the process for stabilizer codes. Namely, if $\calC_1 = [[n_1,k_1]]$ and $\calC_2 = [[n_2,k_2]]$ are subsystem codes, then their concatenation $\calC_2\circ\calC_1$ is defined using $n_2$ blocks of the inner code $\calS_1$ and $k_1$ copies of the outer code $\calC_2$. Let $q_{ij}$ be the $i$th logical qubit of the $j$th $\calS_1$ block. Then the concatenated code is defined by replacing the $j$th physical qubit of the $i$th $\calC_2$ block with $q_{ij}$. The properties of the concatenated code is summarized in Lemma~\ref{lem:subsystem_concatentation}.

\begin{lemma}[Concatenated Subsystem Codes]\label{lem:subsystem_concatentation}
Let $\calC_1 = [[n_1,k_1,d_1,g_1]]$ and $\calC_2 = [[n_2,k_2,d_2,g_2]]$ be two subsystem codes. Then there exists a subsystem code $\calC = \calC_2\circ\calC_1 = [[n_1n_2, k_1k_2, d\ge d_1d_2, k_1g_2 + n_2g_1]]$, called the concatenation of $\calC_2$ and $\calC_1$, such that:
\begin{enumerate}
    \item The gauge group $G$ of $\calC$ is generated by operators that are either:
    \begin{enumerate}
        \item\label{itm:gena} the tensor product of a gauge generator $g$ from one of the $n_2$ codeblocks of $\calC_1$ with the identity on all other code blocks,

        \item\label{itm:genb} an operator $\overline{g}$ formed by taking a gauge generator $g$ from the $j$th copy of $\calC_2$, and replacing the Pauli operator $P_i$ (acting on qubit $i$) from its tensor product decomposition with a corresponding bare logical Pauli operator $\overline{P}_{ij}$ for the $j$th logical qubit encoded in $i$th encoded $\calC_1$ codeblock.\footnote{With the inclusion of operators from part (a), we can equivalently take a dressed logical Pauli rather than the bare Pauli.}
    \end{enumerate}

    \item\label{itm:stab} The stabilizer group $S$ of $\calC$ is generated by operators that are either:
        \begin{enumerate}
        \item the tensor product of a stabilizer generator $M$ from one of the $n_2$ codeblocks of $\calC_1$ with the identity on all other code blocks,

        \item an operator $\overline{M}$ formed by taking a stabilizer generator $M$ from the $j$th copy of $\calC_2$, and replacing the Pauli operator $P_i$ (acting on qubit $i$) from its tensor product decomposition with a corresponding bare logical Pauli operator $\overline{P}_{ij}$ for the $j$th logical qubit encoded in $i$th encoded $\calC_1$ codeblock.
    \end{enumerate}
\end{enumerate}
\end{lemma}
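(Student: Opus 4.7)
The plan is to verify the construction via a step-by-step check: the form of the gauge generators, the form of the stabilizer group as the center of the gauge group, the parameter counts, and the distance bound $d \ge d_1 d_2$. The main technical obstacle will be the distance bound, which requires cleaning up a minimum-weight dressed logical $P$ block-by-block using inner gauge operators while tracking the induced action on the outer code.

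The generators described in (a) and (b) define the gauge group $\calG$ by construction. The crucial observation for the stabilizer structure is that bare logical operators of the inner code commute with every element of $\calG_1$ by definition. Hence any generator of type (b) commutes with every generator of type (a), regardless of block. To compute the center $\calS$ of $\calG$, observe that a generator of type (a) is central exactly when it is central in $\calG_1$, i.e., lies in $\calS_1$. Likewise, since commutation among type (b) generators reduces to commutation of their underlying outer-code Pauli operators (the inner bare logicals obey the same commutation relations as single-qubit Paulis), a type (b) generator is central exactly when its underlying outer operator lies in $\calS_2$. This matches the claimed form in item~2 of the lemma.

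The parameter count $n = n_1 n_2$ is immediate, and $k = k_1 k_2$ follows since each of the $k_1$ outer-code copies encodes $k_2$ logical qubits. For the number of gauge qubits, we use the identity $g = n - k - r$, where $r$ is the number of independent stabilizer generators of $\calC$. From the stabilizer description just established, $r = n_2 r_1 + k_1 r_2$ where $r_i = n_i - k_i - g_i$. Simplifying gives $g = k_1 g_2 + n_2 g_1$, as claimed.

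For the distance, let $P \in \mathsf{C}(\calS) \setminus \calG$ have minimum weight, and write $P = \prod_j P_j$ where $P_j$ acts on the $j$th inner block. Since $P$ commutes with every inner stabilizer, $P_j \in \mathsf{C}(\calS_1)$ for every $j$. If $|P_j| < d_1$, then by the inner distance $P_j$ must be in $\calG_1$, so we can multiply $P$ by type~(a) generators to assume $P_j = I$. Now suppose $|P| < d_1 d_2$. Then the set $B$ of blocks with $|P_j| \ge d_1$ satisfies $|B| \le d_2 - 1$. After further adjustment by inner gauge operators, each remaining $P_j$ is a bare logical of $\calC_1$, so $P$ induces an operator $\overline{P}$ on the outer-code logical qubits of weight at most $|B| < d_2$ per outer-code copy. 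Commutation of $P$ with the lifted outer stabilizers forces $\overline{P} \in \mathsf{C}(\calS_2)$, and then the outer distance bound implies $\overline{P} \in \calG_2$. Hence $P$ equals a product of type~(a) and type~(b) generators, so $P \in \calG$, contradicting $P \notin \calG$. Therefore $d \ge d_1 d_2$.
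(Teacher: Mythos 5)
Your proof is correct and follows essentially the same approach as the paper: establishing the gauge group via the type (a)/(b) generators, identifying the stabilizer as the center via the commutation of bare inner logicals with the inner gauge, counting parameters, and bounding the distance by a block-wise cleaning argument with inner gauge operators. Your distance argument is a more explicit version of the paper's brief sketch (the paper compresses the cleaning step into ``any error must induce at least $d_2$ errors on the physical level of $\calS_2$''), and your gauge-qubit count via $g = n - k - r$ with $r = n_2 r_1 + k_1 r_2$ is a slightly more explicit rendering of the paper's ``simple counting of generators.''
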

\begin{proof}
Let the generator groups of $\calC_1$ and $\calC_2$ be $G_1$ and $G_2$, respectively. Note that type~\eqref{itm:gena} gauge generators in $G$ are in bijection with the generators of $G_1$. Likewise, type~\eqref{itm:genb} generators in $G$ are in bijection with the generators of $G_2$. Moreover, any two type~\eqref{itm:gena} (resp. type~\eqref{itm:genb}) generators retain the same commutation relations from their defining operators. Since every bare logical operator commutes with all gauge operators, it follows that all type~\eqref{itm:genb} generators commute with all type~\eqref{itm:gena} generators. The discussion above implies that there exists a set of generators for $G$ of the form
\begin{align}
G = \langle S,\overline{X}_1,\overline{Z}_1,\cdots,\overline{X}_g,\overline{Z}_g\rangle,
\end{align}
where $S$ is the stabilizer group defined in item~\ref{itm:stab}, and where $\overline{X}_i,\overline{Z}_i$ are logical Paulis with the usual commutation relations. It follows from the form of these generators that $G$ is a well-defined gauge group for a subsystem code. This proves that the concatenated code $\calC$ is well-defined.

The parameters $k=k_1k_2$ and $g=k_1g_2+n_2g_1$ of the concatenated code follow from a simple counting of the generators. For the distance, note that any error on $\calC$ must induce at least $d_2$ errors on the physical level of $\calS_2$. By the construction of the concatenated code, this is equivalent to logical errors on the associated blocks of $\calS_1$. The total weight of such an error is therefore at least $d_1d_2$.
\end{proof}

\begin{theorem}[Optimality of Interaction Length for Subsystem Codes]
    For all $D\ge 2$, there exists a constant $c_1 = c_1(D) > 0$ such that the following holds: for all $n,k,d > 0$ with $k,d\le n$ satisfying $kd^{\frac{1}{D-1}} \geq c_1 n$ or $d\ge c_1 n^{\frac{D-1}{D}}$, there exists an $[[n, \ge k, \ge d]]$ subsystem code with a $D$-dimensional embedding containing no interactions of length at least 
    \begin{align}
    \ell = \max\bigg(\frac{d}{n^\frac{D-1}{D}}, \bigg(\frac{kd^\frac{1}{D-1}}{n}\bigg)^\frac{D-1}{D}\bigg).
    \end{align}
    \label{thm:constr-1}
\end{theorem}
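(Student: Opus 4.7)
The plan is to construct the desired subsystem code by concatenating an asymptotically good qLDPC code (as the inner code) with the $D$-dimensional wire code of Theorem~\ref{thm:wire_codes} (as the outer code), and when $k \geq d$, additionally taking multiple independent copies of the concatenated code. This generalizes the analogous $D=2$ construction of \cite{dai2024locality}, with the wire code playing the role of the 2D Bravyi subsystem code.

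First, for the regime $d \geq k$ (where $\ell = d/n^{(D-1)/D}$), I will set $n_1 = \Theta(d^D/n^{D-1})$ (the hypothesis $d \geq c_1 n^{(D-1)/D}$ ensures $n_1 \geq 1$) and $n_2 = n/n_1$. The inner code is a good qLDPC code $\calC_1 = [[n_1, \Theta(n_1), \Theta(n_1)]]$, and the outer code is a wire code $\calC_2 = [[n_2, \Theta(n_2^{(D-1)/D}), \Theta(n_2^{(D-1)/D})]]$ that is $O(1)$-locally embedded in $\bbR^D$. The concatenation $\calC_2 \circ \calC_1$, by Lemma~\ref{lem:subsystem_concatentation}, has parameters $[[n, \Theta(n_1 n_2^{(D-1)/D}), \Theta(n_1 n_2^{(D-1)/D})]]$; a direct substitution yields $n_1 n_2^{(D-1)/D} = n_1^{1/D} n^{(D-1)/D} = \Theta(d)$, so both $k$ and $d$ are achieved up to constants. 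For the embedding, I place each of the $n_2$ inner blocks locally in a box of side $\Theta(n_1^{1/D}) = \Theta(\ell)$, arranging the blocks according to the outer wire code's local $D$-dimensional embedding.

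Second, for $k \geq d$ (where $\ell = (kd^{1/(D-1)}/n)^{(D-1)/D}$), I will take $R = \lceil k/d \rceil$ independent copies of a concatenated code built as above but with $n$ replaced by $n_0 = \lfloor n/R \rfloor$. The inner qLDPC size becomes $n_1 = \Theta(d^D/n_0^{D-1}) = \Theta(d k^{D-1}/n^{D-1})$; the hypothesis $kd^{1/(D-1)} \geq c_1 n$ ensures $n_1 \geq 1$. Each copy has distance $\Theta(d)$ and $\Theta(d)$ logicals, so $R$ copies give $\geq k$ logicals, distance $\geq d$, and at most $n$ physical qubits (padding with trivial unused qubits to reach exactly $n$). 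The $R$ copies are arranged in a separate $D$-dimensional lattice with large enough spacing that no interactions span between different copies.

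Third, I will verify the interaction-length bound. By Lemma~\ref{lem:subsystem_concatentation}, every gauge generator of $\calC_2 \circ \calC_1$ is either type (a), supported within a single inner block of extent $O(n_1^{1/D})$, or type (b), obtained by lifting an outer wire code generator, whose support lies in a bounded number of inner blocks that are adjacent in the outer code's local embedding; the latter therefore also has extent $O(n_1^{1/D})$. In both cases the induced interactions have length $O(n_1^{1/D}) = O(\ell)$. The main technical obstacle is bookkeeping to make the target parameters $(n,k,d)$ match exactly up to the "$\geq$" in the theorem statement, accommodating the non-integer values of $R$, $n_0$, $n_1$, $n_2$, and handling the constant factors so that the constructions of Theorem~\ref{thm:wire_codes} and of good qLDPC codes apply; these are all resolved by padding with trivial qubits and taking $c_1$ sufficiently large. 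Finally, to sharpen "length $O(\ell)$" into the strict inequality "no interactions of length $\geq \ell$", one absorbs the constant into the definition of $n_1$ by choosing it sufficiently smaller.
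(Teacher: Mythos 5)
Your proposal is correct and follows essentially the same approach as the paper: concatenate an asymptotically good qLDPC code (inner, embedded as a small cubic lattice) with a $D$-dimensional wire code (outer, providing the local geometric layout) via Lemma~\ref{lem:subsystem_concatentation}, take $\Theta(k/d)$ disjoint copies when $k\ge d$, and verify that every gauge generator is supported in a constant number of inner blocks of extent $\Theta(\ell)$. The only structural difference is minor: the paper reduces the case $d\ge k$ to the case $k\ge d$ by substituting $k\mapsto d$, while you handle the two regimes directly, and the paper formalizes the "bookkeeping" you mention by first proving a lossy version with $\ge\varepsilon k$ and $\ge\varepsilon d$ and then rescaling $k,d$ and $c_1$.
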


\begin{proof}

We will prove the following claim: \emph{for all $D \ge 2$, there exists constants $\eps \in (0,1)$ and $c'_1 > 0$ such that for all $n,k,d > 0$ with $k,d\le n$ satisfying $kd^{\frac{1}{D-1}} \geq c_1'n$ or $d \geq c_1'n^\frac{D-1}{D}$, there exists a $[[n, \geq \varepsilon k, \geq \varepsilon d]]$ subsystem code with no interactions of length at least 
\begin{align}
\ell = \max\bigg(\frac{d}{n^\frac{D-1}{D}}, \bigg(\frac{kd^\frac{1}{D-1}}{n}\bigg)^\frac{D-1}{D}\bigg).
\end{align}}
The theorem follows by taking $c_1=c_1'/\varepsilon^2$ and applying the above claim with $k\mapsto\ceil{k/\varepsilon}$ and $d\mapsto\ceil{d/\varepsilon}$.

Note that it suffices to consider the case $d \le k$, where we will construct subsystem codes with no interactions of length at least $\ell = (\frac{kd^{\frac{1}{D-1}}}{n})^{\frac{D-1}{D}}$. Given $d \ge k$, we can always reduce to the $d \le k$ case by applying the construction using parameters $n,k',d$, with $k'=d \ge k$, so that we get a $[[n,\ge k,d]]$ code with no interactions of length at least
\begin{align}
\ell=\frac{d}{n^{\frac{D-1}{D}}}=\bigg(\frac{k'd^{\frac{1}{D-1}}}{n}\bigg)^{\frac{D-1}{D}}.
\end{align}

Now, fix parameters $n,k,d>0$ satisfying the hypothesis of the claim and with $d \le k$. Let $\varepsilon_1\in(0,1)$ be such that there exists good qLDPC codes with parameters $[[n_1,\ge\varepsilon_1 n_1,\ge\varepsilon_1 n_1]]$ for all $n_1 \in \bbN$. Similarly, let $\varepsilon_2\in(0,1)$ and $\ell_2=O(1)$ be such that there exists wire codes with parameters $[[n_2, \ge\varepsilon_2 n_2^{(D-1)/D}, \ge\varepsilon_2 n_2^{(D-1)/D}]]$ and locality $\ell_2$ in $D$ dimensions, for all $n_2\in 
\bbN$. That this is possible is a consequence of Theorem~\ref{thm:wire_codes}.
Let $\varepsilon=\varepsilon_1\varepsilon_2/\ell'$ where $\ell'=2(\sqrt{D}+\ell_2)$, and let $\ell = (\frac{kd^\frac{1}{D-1}}{n})^\frac{D-1}{D}$. Set $n_0 = nd/k$, $n_1=(\ell/\ell')^D$, and $n_2 = n_0/n_1$.

We will ignore rounding errors from non-integer parameters; all parameters that ``should'' be integer, i.e., $n_0$, $n_1$, and $n_2$, are bounded away from zero by some constant $c=c(D,\ell_2)>0$.
Rounding these values to integers incurs the cost of at most another constant factor (dependent on $c$) to the parameter $\varepsilon$. It suffices to take $c_1' = (\ell')^2$ to ensure that $\ell \ge \ell'$, so that all resulting quantities are well-defined.

Let $\calC_1$ be a good qLDPC code with $n_1=(\ell/\ell')^D$ qubits and parameters $[[n_1, \ge \eps_1n_1, \ge \eps_1n_1]]$. Let $\calC_2$ be a wire code with $n_2 = n_0/n_1$ qubits and parameters $[[n_2, \geq \varepsilon_2 n_2^{(D-1)/D}, \geq \varepsilon_2 n_2^{(D-1)/D}]]$. Perform the subsystem code concatenation process (see Lemma~\ref{lem:subsystem_concatentation}) with inner code $\calC_1$ and outer code $\calC_2$. This gives us a code $\calC_0$ using $n_0 = n_1n_2 = nd/k$ qubits with dimension and distance at least
\begin{align}
    \eps_1n_1\cdot \eps_2n_2^\frac{D-1}{D} = \varepsilon_1\varepsilon_2 n_0^\frac{D-1}{D} \left(\frac{\ell}{\ell'}\right).
\end{align}
Now, let $\calC$ be the code obtained from taking $\ceil{n/n_0}$ disjoint copies of $\calC_0$. Then $\calC$ has dimension at least
\begin{align}
    \left\lceil\frac{n}{n_0}\right\rceil \cdot \varepsilon_1\varepsilon_2 n_0^\frac{D-1}{D} \left(\frac{\ell}{\ell'}\right) \geq \frac{\varepsilon_1\varepsilon_2}{\ell'}\cdot \frac{n\ell}{n_0^{1/D}} = \eps\left(\frac{nd}{n_0}\right)^\frac{1}{D} k^\frac{D-1}{D} = \varepsilon k.
\end{align}
The distance of $\calC$ is the same as the distance of $\mathcal{C}_0$, which is at least
\begin{align}
    \varepsilon_1\varepsilon_2 n_0^\frac{D-1}{D} \left(\frac{\ell}{\ell'}\right) = \frac{\varepsilon_1\varepsilon_2}{\ell'} \cdot n_0^\frac{D-1}{D}\ell=\eps \left(\frac{n_0k}{n}\right)^\frac{D-1}{D}d^\frac{1}{D} = \varepsilon d.
\end{align}

It follows that $\calC$ has parameters $[[n, \geq \varepsilon k, \geq \varepsilon d]]$, as required,

Finally, we now exhibit a $D$-dimensional embedding of $\calC$ with no interactions of length at least $\ell$. First, note that the qubits of the good qLDPC code $\calC_1$ can be embedded into a cubic lattice of side lengths at most $n_1^{1/D} = \ell/\ell'$. Any interaction between qubits in such an embedding of $\calC_1$ has length at most $(\ell/\ell')\sqrt{D}$. The desired embedding for $\calC$ follows from the intrinsic local embedding of the wire code $\calC_2$ (with locality $\ell_2$), but with each qubit replaced by a $\calC_1$ block embedded as a cubic lattice. With this replacement, we also dilate the embedding for $\calC_2$ by a factor of $\ell/\ell'$, so that interacting $\calC_1$ blocks are at a distance of at most $(\ell/\ell')\cdot \ell_2$ apart (center-to-center). It follows that the maximum interaction length between individual qubits is at most
\begin{align}
(\ell/\ell')\cdot \ell_2 + (\ell/\ell')\sqrt{D} = (\ell/\ell')(\ell_2+\sqrt{D}) = \ell/2 < \ell,
\end{align}
where the first expression is the sum of the inter- and intra-block lengths. It follows that $\calC$ admits a $D$-dimensional embedding with no interactions of length $\ge \ell$.

\end{proof}

\subsection{Commuting Projector Codes}\label{sec:stab_const}
We now show the interaction length in Theorem~\ref{thm:stab} is optimal. The construction is very similar to the one used in Theorem 1.3 of~\cite{dai2024locality}, except generalized to $D$-dimensions. In 2D, the surface code offers a simple and natural candidate for a geometrically local code that saturates the BPT bound. In higher dimensions, we instead use the family of ``subdivided codes'' constructed by Lin, Wills and Hsieh~\cite{li2024transform}. The optimality of the interaction length follows by concatenating a good qLDPC code with the subdivided code.

\begin{theorem}[Subdivided code \cite{li2024transform}]\label{thm:subdivided}
For all $D\ge 2$, there exists an $\varepsilon>0$ such that, for all positive integers $n$ there exists a stabilizer code with parameters $[[n,\ge\varepsilon n^{\frac{D-2}{D}}, \ge\varepsilon n^{\frac{D-1}{D}}]]$ that has a set of stabilizer generators that are $O(1)$-local in a $D$-dimensional embedding.
\end{theorem}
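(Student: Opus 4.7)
The plan is to mirror the proof of Theorem~\ref{thm:wire_codes}: invoke the subdivided code construction of Lin, Wills and Hsieh~\cite{li2024transform} as a black box applied to an asymptotically good qLDPC code, then extend the resulting family to every positive integer $n$ by padding with unused qubits.

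The first step is to recall that asymptotically good qLDPC codes with parameters $[[N,\ge\delta N,\ge\delta N]]$ exist for \emph{every} positive integer $N$, not merely along a subsequence (see the discussion in~\cite{dai2024locality}); this density is what makes a ``for all $n$'' statement possible downstream. Feeding such a qLDPC code of length $N$ into the subdivided code construction of~\cite{li2024transform} yields a $D$-dimensionally $O(1)$-locally embeddable stabilizer code of length $n'=n'(N)$ polynomial in $N$, with dimension $k\ge \varepsilon_0 (n')^{(D-2)/D}$ and distance $d\ge \varepsilon_0 (n')^{(D-1)/D}$ for some $\varepsilon_0=\varepsilon_0(D)>0$ --- the BPT-saturating scalings that~\cite{li2024transform} is designed to achieve.

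To turn this into a family defined for arbitrary $n$, I would pad. Since $n'(N)$ is a polynomial in $N$, consecutive achievable lengths satisfy $n'(N+1)/n'(N)\to 1$, so the ratio $n/n'(N)$ is bounded by a constant $C(D)$ whenever $N$ is chosen maximally with $n'(N)\le n$. Take the subdivided code at this $N$ and adjoin $n-n'(N)$ unused qubits --- each stabilized by a single-qubit Pauli and placed at pairwise-separated locations in $\mathbb{R}^D$ far from the existing embedding --- to obtain a length-$n$, $O(1)$-local stabilizer code whose $k$ and $d$ differ from the stated bounds only by a factor depending on $D$. Absorbing this factor into $\varepsilon$ yields the desired family.

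The only real obstacle is bookkeeping at small $n$: when $n$ is smaller than $n'(1)$ there is no valid $N$ to choose. This is handled by shrinking $\varepsilon$ until the claimed bounds $\varepsilon n^{(D-2)/D}$ and $\varepsilon n^{(D-1)/D}$ drop below $1$ on that finite range, exactly as in the proof of Theorem~\ref{thm:wire_codes}. No further creative ideas are required; the substance of the theorem is entirely inherited from~\cite{li2024transform}, and the padding argument is merely the same density-and-ceiling maneuver used for the wire code.
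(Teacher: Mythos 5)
Your argument for $D\ge 3$ tracks the paper's proof almost verbatim: apply the subdivided-code construction of \cite{li2024transform} to a good qLDPC code (both arguments cite the same density-of-lengths fact for good qLDPC codes and the polynomial length map of the construction), then pad with unused stabilized qubits and absorb constants into $\varepsilon$. The density/ceiling bookkeeping you spell out is exactly what the paper elides with ``after padding with unused qubits and adjusting $\eps$.''

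The one genuine gap is $D=2$. The subdivided-code construction of \cite{li2024transform} produces a code of length $O(N^{D/(D-2)})$ from an input qLDPC code of length $N$; this exponent diverges at $D=2$, so the construction you invoke simply does not apply there. The theorem's conclusion for $D=2$ is $k\ge \varepsilon$ (i.e.\ $k\ge 1$) and $d\ge \varepsilon\sqrt{n}$, which is a much weaker target and is handled in the paper by a separate base case: the $2$D surface code already gives a $[[n,1,\Theta(\sqrt{n})]]$ code with $O(1)$-local stabilizers. You should split off $D=2$ and cite the surface code (or any local code with $d=\Theta(\sqrt n)$) explicitly, then run your padding argument only for $D\ge 3$. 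With that patch the proof is complete and matches the paper's.
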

\begin{proof}
The $D=2$ is handled by the surface code. For $D\ge 3$, apply the subdivided code construction of \cite{li2024transform} to a good qLDPC code (see Theorem 5.1 of \cite{li2024transform}). We justify that this gives a desired code not just infinitely many $n$, but for all $n$. Asymptotically good qLDPC codes exist for all $n$ \cite{panteleev2021asymptotically} (to see why it is all $n$, and not just infinitely many $n$, see the proof of Theorem 1.3 in \cite{dai2024locality}), so subdivided codes exist for sufficiently dense values of $n$ --- Theorem 5.1 of \cite{li2024transform} gives a subdivided code of length $O(n^{\frac{D}{D-2}})$ from a good qLDPC code of length $n$. After padding with unused qubits and adjusting $\eps$, the desired family of local stabilizer codes exist for all $n$.
\end{proof}

\begin{theorem}[Optimality of Interaction Length for Stabilizer Codes] 
For all $D\ge 2$, there exists a constant $c_1=c_1(D)>0$ such that the following holds: for all $n,k,d > 0$ with $k,d \leq n$ satisfying either $kd^{\frac{2}{D-1}} \geq c_1\cdot n$ or $d\ge c_1\cdot n^{\frac{D-1}{D}}$, there exists a $[[n, \geq k, \geq d]]$ quantum stabilizer code with a $D$-dimensional embedding containing no interactions of length at least 
\begin{align}
\ell = \max\bigg(\frac{d}{n^{\frac{D-1}{D}}},  \bigg(\frac{kd^{\frac{2}{D-1}}}{n}\bigg)^{\frac{D-1}{2D}}\bigg).
\end{align}
\label{thm:constr-2}
\end{theorem}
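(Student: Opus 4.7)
The plan is to adapt the concatenation construction of Theorem~\ref{thm:constr-1} to the stabilizer setting, replacing the wire code with the locally embeddable subdivided code of Theorem~\ref{thm:subdivided}. As in Theorem~\ref{thm:constr-1}, I would prove a weakened statement with dimension $\geq \varepsilon k$ and distance $\geq \varepsilon d$ for some fixed $\varepsilon > 0$, and then deduce the theorem by rescaling $k \mapsto \lceil k/\varepsilon\rceil$, $d \mapsto \lceil d/\varepsilon\rceil$ and enlarging $c_1$ appropriately. I would first reduce to the case $d \leq \sqrt{kn}$, in which $\ell = (kd^{2/(D-1)}/n)^{(D-1)/(2D)}$ is the dominant term. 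When $d \geq \sqrt{kn}$, setting $k' = d^2/n \geq k$ reduces to this regime, since a direct computation shows $(k' d^{2/(D-1)}/n)^{(D-1)/(2D)} = d/n^{(D-1)/D}$, which matches the target length in that regime.

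For the main construction, take $\calC_1$ to be a good qLDPC code with $n_1 = (\ell/\ell')^D$ qubits and parameters $[[n_1, \geq \varepsilon_1 n_1, \geq \varepsilon_1 n_1]]$, and let $\calC_2$ be a subdivided code from Theorem~\ref{thm:subdivided} with $n_2 = n_0/n_1$ qubits and parameters $[[n_2, \geq \varepsilon_2 n_2^{(D-2)/D}, \geq \varepsilon_2 n_2^{(D-1)/D}]]$, locally embedded in $D$-dimensions with locality $\ell_2$; here $\ell' = 2(\sqrt{D}+\ell_2)$. Concatenating with $\calC_2$ outer and $\calC_1$ inner produces a stabilizer code $\calC_0$ on $n_0$ qubits with dimension $\geq \varepsilon_1\varepsilon_2 n_1 n_2^{(D-2)/D}$ and distance $\geq \varepsilon_1\varepsilon_2 n_1 n_2^{(D-1)/D}$. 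The key parameter choice is $n_0 = d\sqrt{n/k}$, differing from the $n_0 = nd/k$ used in Theorem~\ref{thm:constr-1}. Using the identity $n_0^{(D-1)/D} = d/\ell$, which follows by direct substitution into the definition of $\ell$, the distance of $\calC_0$ simplifies to $\varepsilon_1\varepsilon_2 d/\ell'$. Taking $\lceil n/n_0\rceil$ disjoint copies of $\calC_0$, a short algebraic calculation shows the total dimension simplifies to $\varepsilon_1\varepsilon_2 k/(\ell')^2$; choosing $\varepsilon = \varepsilon_1\varepsilon_2/(\ell')^2$ then yields dimension $\geq \varepsilon k$ and distance $\geq \varepsilon\ell' d \geq \varepsilon d$ simultaneously.

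The embedding is identical in spirit to Theorem~\ref{thm:constr-1}: embed each $\calC_1$ block into a cubic lattice inside a cube of side $n_1^{1/D} = \ell/\ell'$, then embed $\calC_2$ locally in $D$-dimensions with each $\calC_2$-qubit replaced by a $\calC_1$ block, dilating the $\calC_2$ embedding by a factor $\ell/\ell'$. Every interaction then has length at most the sum of inter-block ($\leq (\ell/\ell')\ell_2$) and intra-block ($\leq (\ell/\ell')\sqrt{D}$) contributions, giving $(\ell/\ell')(\sqrt{D}+\ell_2) = \ell/2 < \ell$. The main obstacle is algebraic rather than conceptual: the asymmetry of the subdivided code's parameters ($n^{(D-2)/D}$ for dimension versus $n^{(D-1)/D}$ for distance) forces the non-obvious choice $n_0 = d\sqrt{n/k}$, and verifying that this choice simultaneously yields the right total dimension and distance requires careful tracking of powers of $k$, $d$, $n$, $\ell$, and the constants $\ell'$, $\varepsilon_i$.
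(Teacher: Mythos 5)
Your proposal is correct and matches the paper's proof essentially step for step: same reduction to the weakened $\varepsilon$-claim and to the regime $d\le\sqrt{kn}$, same concatenation of a good qLDPC inner code with the subdivided code, same embedding and dilation argument. The only cosmetic difference is that you write $n_0 = d\sqrt{n/k}$ while the paper writes $n_0 = (d/\ell)^{D/(D-1)}$, and a short calculation confirms these are equal once $\ell$ is substituted.
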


\begin{proof} 
We prove the following claim: \emph{for all $D\ge 2$, there exists constants $\varepsilon \in (0,1)$ and $c'_1 > 0$ such that for all $n,k,d > 0$ with $k,d \leq n$ satisfying either $kd^{\frac{2}{D-1}} \geq c_1'n$ or $d \ge c_1'n^{\frac{D-1}{D}}$, there exists a $[[n, \geq \varepsilon k, \geq \varepsilon d]]$ quantum stabilizer code with a $D$-dimensional embedding with no interactions of length at least 
\begin{align}
\ell = \max\bigg(\frac{d}{n^{\frac{D-1}{D}}}, \bigg(\frac{kd^{\frac{2}{D-1}}}{n}\bigg)^{\frac{D-1}{2D}}\bigg).
\end{align}
}
The theorem follows from taking $c_1=c_1'/\varepsilon^3$ and applying the above claim with $k \mapsto \ceil{k/\varepsilon}$ and $d\mapsto\ceil{d/\varepsilon}$.

Note that it suffices to consider the case $d \le {kn}$, where we will construct a stabilizer codes with no interactions of length $\ell = \big(\frac{kd^{\frac{2}{D-1}}}{n}\big)^{\frac{D-1}{2D}}$. Given $d \geq \sqrt{kn}$, we can reduce to the $d \le \sqrt{kn}$ case by applying the construction using parameters $n,k',d$, where $k'\ge k$ satisfies $d = \sqrt{k'n}$, so that we get a $[[n,\ge k,d]]$ code with no interactions of length at least
\begin{align}
\ell=\frac{d}{n^{\frac{D-1}{D}}}=\bigg(\frac{k'd^{\frac{2}{D-1}}}{n}\bigg)^{\frac{D-1}{2D}}.
\end{align}
Now, fix parameters $n,k,d > 0$ satisfying the hypothesis of the claim and with $d \le \sqrt{kn}$. Let $\varepsilon_1\in(0,1)$ be such that there exists good qLDPC codes with parameters $[[n_1,\ge\varepsilon_1 n_1,\ge\varepsilon_1 n_1]]$ for all $n_1\in \bbN$. Similarly, let $\varepsilon_2\in(0,1)$ and $\ell_2=O(1)$ be such that there exists subdivided codes with parameters $[[n_2,\ge\varepsilon_2n_2^{\frac{D-2}{D}},\ge\varepsilon_2 n_2^{\frac{D-1}{D}}]]$ and locality $\ell_2$ in $D$ dimensions, for all $n_2\in \bbN$. That this is possible is a consequence of Theorem~\ref{thm:subdivided}. Let $\varepsilon=\varepsilon_1\varepsilon_2/(\ell')^2$, where $\ell'=2(\sqrt{D}+\ell_2)$, and let $\ell = \big(\frac{kd^{\frac{2}{D-1}}}{n}\big)^{\frac{D-1}{2D}}$. Set $n_0 = (d/\ell)^{\frac{D}{D-1}}$, $n_1 = (\ell/\ell')^D$, and $n_2 = n_0/n_1$.

We ignore rounding errors from non-integer parameters.
All parameters that ``should'' be integers, i.e., $n_0$, $n_1$, $n_2$, and $n/n_0$, are bounded away from zero by some constant $c=c(D,\ell_2)>0$. Rounding these values to integers incurs the cost of at most a constant factor (that depends on $c$) to the parameter $\varepsilon$. It suffices to take $c_1' = (\ell')^3$ to ensure that $\ell \ge \ell'$, so that all resulting quantities are well-defined.

Let $\mathcal{S}_1$ be a good qLDPC code with $n_1=(\ell/\ell')^D$ qubits and parameters $[[n_1, \geq \varepsilon_1 n_1, \geq \varepsilon_1 n_1]]$. Let $\mathcal{S}_2$ be a subdivided code with $n_2=n_0/n_1$ qubits and parameters $[[n_2, \geq \varepsilon_2 n_2^{(D-2)/D}, \geq \varepsilon_2 n_2^{(D-1)/D}]]$. Let $\calS_0$ be the stabilizer code obtained from the concatenation of $\calS_1$ (inner code) and $\calS_2$ (outer code). The code $\calS_0$ has $n_0=n_1n_2 = (d/\ell)^\frac{D}{D-1}$ qubits. The distance of $\calS_0$ is at least
\begin{align}
\eps_1\eps_2n_1n_2^\frac{D-1}{D} = \eps_1\eps_2n_1\left(\frac{n_0}{n_1}\right)^\frac{D-1}{D} = \frac{\eps_1\eps_2}{\ell'}d \ge \eps d,
\end{align}
and the dimension of $\calS_0$ is at least
\begin{align}
\eps_1\eps_2n_1n_2^\frac{D-2}{D} = \eps_1\eps_2n_1\left(\frac{n_0}{n_1}\right)^\frac{D-2}{D} = \eps_1\eps_2n_1 \left(\frac{d}{\ell}\right)^{\frac{D-2}{D-1}} \left(\frac{\ell'}{\ell}\right)^{D-2} = \frac{\eps_1\eps_2}{(\ell')^2}\ell^\frac{D}{D-1} d^\frac{D-2}{D-1} = \eps\ell^\frac{D}{D-1} d^\frac{D-2}{D-1}.
\end{align}
Now, let $\calS$ be the code obtained from taking $\ceil{n/n_0}$ disjoint copies of $\mathcal{S}_0$. The distance of $\calS$ is the same as the distance of $\calS_0$, which is $\ge \eps d$. The dimension of $\calS$ is at least
\begin{align}
    \left\lceil\frac{n}{n_0}\right\rceil\cdot \eps\ell^\frac{D}{D-1} d^\frac{D-2}{D-1} \ge \frac{n}{n_0}\cdot \eps\ell^\frac{D}{D-1} d^\frac{D-2}{D-1} = \eps \frac{n}{d^\frac{2}{D-1}} \ell^\frac{2D}{D-1} = k\eps.
\end{align}
It follows that $\mathcal{S}$ has parameters $[[n,\ge\eps k,\ge \eps d]]$, as required. 

Finally, we now exhibit a $D$-dimensional embedding of $\calS$ with no interactions of length at least $\ell$. First, note that the qubits of the good qLDPC code $\calS_1$ can be embedded into a cubic lattice of side lengths at most $n_1^{1/D} \le \ell/\ell'$. Any interaction between qubits in such an embedding of $\calS_1$ has length at most $(\ell/\ell')\sqrt{D}$. The desired embedding for $\calS$ follows from the intrinsic local embedding (with locality $\ell_2$) of the subdivided code $\calS_2$, but with each qubit replaced by a $\calS_1$ block embedded as a cubic lattice. With this replacement, we also dilate the embedding for $\calS_2$ by a factor of $\ell/\ell'$, so that interacting $\calC_1$ blocks are at a distance of at most $(\ell/\ell')\cdot \ell_2$ apart (center-to-center). It follows that the maximum interaction length between individual qubits is at most
\begin{align}
(\ell/\ell')\cdot \ell_2 + (\ell/\ell')\sqrt{D} = (\ell/\ell')(\ell_2+\sqrt{D}) = \ell/2 < \ell,
\end{align}
where the first expression is the sum of the inter- and intra-block lengths. It follows that $\calS$ admits a $D$-dimensional embedding with no interactions of length $\ge \ell$.

\end{proof}

\section*{Acknowledgements}
We thank Andy Liu, Shouzhen Gu, Zhiyang He for helpful feedback on our manuscript.
RL is supported by NSF grant CCF-2347371.

\bibliographystyle{alpha}
\bibliography{bib}

\end{document}